\newcommand{\PSPACE}{\mathsf{PSPACE}}
\newcommand{\QMA}{\mathsf{QMA}}
\newcommand{\QCMA}{\mathsf{QCMA}}
\newcommand{\BQP}{\mathsf{BQP}}
\newcommand{\PP}{\mathsf{PP}}
\newcommand{\SBQP}{\mathsf{SBQP}}
\newcommand{\NEXP}{\mathsf{NEXP}}
\newcommand{\poly}{\mathsf{poly}}
\newcommand{\ALL}{\mathsf{ALL}}
\newcommand{\tnorm}[1]{\left\| #1 \right\|_{\mathrm{tr}}}
\newcommand{\opnorm}[1]{\left\| #1 \right\|_{\mathrm{op}}}
\newcommand{\dnorm}[1]{\left\| #1 \right\|_{\diamond}}
\newcommand{\mO}{\mathcal{O}}
\newcommand{\conv}{\textup{conv}}
\newcommand{\eig}{\textup{eig}}
\declaretheorem[name=Lemma]{lemma}
\declaretheorem[name=Corollary]{corollary}
\declaretheorem[name=Definition]{definition}
\declaretheorem[name=Proposition]{proposition}
\declaretheorem[name=Claim]{claim}
\declaretheorem[name=Remark]{remark}
\newtheorem{subdefinition}{Definition}[definition]
\crefname{claim}{claim}{claims}
\crefname{claim}{Claim}{Claims}
\begin{document}

\title{Lower Bounds for Unitary Property Testing with Proofs and Advice}

\author{Jordi Weggemans}
\affiliation{QuSoft \& CWI, Amsterdam, the Netherlands}
\orcid{0000-0002-8469-6900}
\email{jrw@cwi.nl}
\homepage{https://jordiweggemans.github.io}

\maketitle

\begin{abstract}
\noindent In unitary property testing a quantum algorithm, also known as a \textit{tester}, is given query access to a black-box unitary and has to decide whether it satisfies some property. We propose a new technique for proving lower bounds on the quantum query complexity of unitary property testing and related problems, which utilises its connection to unitary channel discrimination. The main advantage of this technique is that all obtained lower bounds hold for any $\mathsf{C}$-tester with $\mathsf{C} \subseteq \QMA(2)\slash\mathsf{qpoly}$, showing that even having access to \textit{both} (unentangled) quantum proofs and quantum advice does not help for many unitary property testing problems. 

We apply our technique to prove lower bounds for problems like quantum phase estimation, the entanglement entropy problem, quantum Gibbs sampling and more, removing all logarithmic factors in the lower bounds obtained by the sample-to-query lifting theorem of Wang and Zhang (2023). As a direct corollary, we show that there exist quantum oracles relative to which $\QMA(2) \not\supset \SBQP$ and $\QMA\slash\mathsf{qpoly} \not\supset \SBQP$. The former shows that, at least in a black-box way, having unentangled quantum proofs does not help in solving problems that require high precision.
\end{abstract}

\section{Introduction}
\label{sec:int}
Quantum query complexity is the study of how many queries a quantum algorithm has to make to some black-box input $X$ to decide whether $X$ satisfies some property $\mathcal{P}$. Whilst quantum query complexity conventionally focuses on $X$ being inherently classical (i.e.,~a classical bit string), it is also possible to consider the setting where $X$ is some black-box unitary. Whilst the former (which we call classical property testing) is very useful in obtaining insights into the differences in computational power between different classes of computation, classical or quantum, the latter (called unitary property testing) gives another way to compare inherently quantum classes. These problems, first studied by Wang~\cite{wang2011property}, got considerably more attention recently~\cite{she2022unitary,chen2023testing,wang2023quantum}.

Query complexities can vastly differ among different classes of computational models. For example, the search problem, which is to decide whether a string of length $N$ is either the all-zeros string or has at least one entry with a `$1$', is known to have classical query complexity of $\Theta(N)$ and quantum query complexity of $\Theta(\sqrt{N})$~\cite{grover1996fast,bennett1997strengths}. However, given a string by an untrusted prover, the query complexity of the search problem is just $1$ in both cases. A similar result holds for a unitary property testing analogue of search as introduced by Aaronson and Kuperberg~\cite{aaronson2007quantum}, where one has to decide whether a given black-box unitary $U$ applies either the identity operation $\mathbb{I}$ or the reflection $\mathbb{I}-2\ketbra{\psi}$ for some unknown {$N$-dimensional} quantum state $\ket{\psi}$. This problem has in general a quantum query complexity of $\Theta(\sqrt{N})$, but can again be solved by just a single query if a \emph{quantum state} is provided by an untrusted prover as an extra input. For many other unitary property testing problems, it is unclear whether quantum proofs and/or trusted advice states might help in solving these tasks. 

Our main contribution is a new lower bound technique for query complexity of unitary property testing in the presence of \textit{both} (quantum) proofs and advice, based on the connection between unitary property testing and the discrimination of unitary quantum channels, a topic widely studied in quantum information theory~\cite{acin2001statistical,d2001using,duan2007entanglement,ziman2010single}.   

\subsection{Unitary property testing lower bounds by unitary channel discrimination}
Following the definition of She and Yuen~\cite{she2022unitary}, a unitary \emph{property testing} problem $\mathcal{P}$ consists of two disjoint sets of $d$-dimensional unitaries $\mathcal{P}_\textup{yes}$ and $ \mathcal{P}_\textup{no}$ for a fixed dimension $d$. For some unknown input unitary $U$, given the promise that either (i) $U \in \mathcal{P}_\textup{yes}$ or (ii) $ U \in \mathcal{P}_\textup{no}$, the task is to decide whether (i) or (ii) holds by making (controlled) queries to $U$ or its inverse. Similarly, in unitary channel \emph{discrimination} one is also given black-box access to an unknown unitary $U$ promised to be from a known set of candidate unitaries, also known as hypotheses, from which one has to decide which one corresponds to $U$. When the hypothesis set consists of two families of unitaries $\{U_1: U_1 \in \mathcal{P}_\textup{yes}\}$ and $\{U_2: U_2 \in \mathcal{P}_\textup{no}\}$ for some property $\mathcal{P}$, discriminating between any $U_1 \in \mathcal{P}_\textup{yes}$ and $U_2 \in \mathcal{P}_\textup{no}$ allows one to solve the property testing problem.

We consider quantum query algorithms for property testing (also known as $\mathsf{C}$-\textit{testers}, where $\mathsf{C}$ refers to a complexity class) according to the following model: starting with an initial state $\ket{\psi_\textup{init}}$, the tester is allowed to make queries to $U$ interleaved with applications of unitaries $V_t$ that do not depend on the input, and finally makes a two-outcome measurement to decide with high probability whether $U = U_1$ or $U = U_2$. Again, the queries to $U$ can be controlled, applications of the inverse, or a combination of both.  The types of initial states we consider are of the form $ \ket{\psi_\textup{init}} = \ket{0}^{\otimes \poly(n)} \ket{\psi_\text{input}}$, where $\ket{\psi_\text{input}}$ depends on the class of property testers considered. For example, a $\BQP$-tester has no input state, a $\BQP\slash\mathsf{qpoly}$-tester has $\ket{\psi_\text{input}}=\ket{\psi_n}$, where $\ket{\psi_n}$ is some trusted quantum state which only depends on the input length (defined below), and a $\QMA$-tester has $\ket{\psi_\text{input}}=\ket{\xi}$ for some untrusted quantum proof $\ket{\xi}$. The strongest advice/proof model we consider is then a $\QMA(2)\slash\mathsf{qpoly}$-tester, which uses both a polynomial number of unentangled quantum proofs and a quantum advice state as extra input to the computation.

It is well-known that the one-shot distinguishability (i.e.~when one can only apply the quantum channel once) is characterised by the so-called diamond distance between $U_1$ and $U_2$, denoted as $\frac{1}{2}\dnorm{\mathcal{U}(U_1) - \mathcal{U}(U_2)}$.\footnote{This notation is used to make a distinction between the unitary $U$ and its corresponding channel $\mathcal{U}(U) : \rho \rightarrow U^\dagger \rho U$, where $\rho$ is a density matrix.} The diamond norm can be computed via a semidefinite program~\cite{watrous_2018}, and allows for simplified expressions when the channels are unitaries~\cite{ziman2010single}. Having controlled access to a unitary can make two indistinguishable channels perfectly distinguishable. Take for example $U_1 = \mathbb{I}$ and $U_2 = -\mathbb{I}$: we have that $\frac{1}{2}\dnorm{\mathcal{U}(U_1) - \mathcal{U}(U_2)} = 0$, but $\frac{1}{2}\dnorm{\mathcal{U}(cU_1) - \mathcal{U}(cU_2)} = 1$, where $cU_{i} = \ket{0}\bra{0} \otimes \mathbb{I} + \ket{1}\bra{1} \otimes U_{i}$ for $i \in \{1,2\}$. A circuit which achieves perfect discrimination is one which applies $cU_{i}$ to the Bell state $\ket{\Phi^{+}}$ and performs a measurement in the Bell basis. However, in this simple example, the power of the controlled application of $U$ can easily be removed by applying a global phase to one of the unitaries, such that now $U_1 = U_2 = \mathbb{I}$, making them indistinguishable even in the controlled setting. As it turns out, adding this variable global phase to one of the unitaries (which is fine in our lower bound method, since we have the freedom to choose our $U_1$ and $U_2$) allows one to show that access to the inverse, controlled access, or a combination of both, does not increase the ability to discriminate both unitaries. 

Using the above ideas, our main theorem is a lower bound on the query complexity of any quantum algorithm, given access to multiple unentangled quantum proofs and quantum advice, that discriminates between the unitaries $U_1$ and $U_2$ with success probability $\geq 2/3$ given a promise on the diamond distance between $U_1$ and $U_2$.

\begin{restatable}[Diamond-norm lower bound for unitary channel discrimination]{theorem}{lowb} Let $\theta \in [0, 2\pi)$ and let $U_1,U_2 \in \mathbb{U}(d)$ such that $0 \notin \conv(\eig(U_1^\dagger U_2))$. Now let $U \in \{U_1,U_2^\theta\}$ with $U_2^\theta = e^{i\theta} U_2$ be a unitary to which one has black-box access, including controlled operations, applications of the inverse and a combination of both. Suppose one has to decide whether (i) $U = U_1$ or (ii) $U = U_2^{\theta}$ holds, promised that either one of them is the case and $\frac{1}{2}\dnorm{\mathcal{U}(U_1)-\mathcal{U}(U_2)} \leq \epsilon$. Then there exists a $\theta \in [0,2 \pi)$ such that to decide with success probability $\geq 2/3$ whether (i) or (ii) holds, any $\mathsf{C}$-tester with $\mathsf{C} \subseteq \QMA(2)\slash\mathsf{qpoly}$  needs to make at least
\begin{align*}
   T \geq \Omega\left(\frac{1}{\epsilon}\right)
\end{align*}
queries to $U$.
\label{thm:lb_technique}
\end{restatable}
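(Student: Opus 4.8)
The plan is to reduce the problem to a hybrid/adversary-style argument that is insensitive to proofs and advice. First I would handle the "controlled access and inverse" reductions: given the freedom to pick $\theta$, I want to choose $\theta$ so that the controlled unitaries $cU_1$ and $cU_2^\theta$ are themselves as close as the uncontrolled ones. Since $cU_1^\dagger cU_2^\theta = \ket{0}\bra{0}\otimes\mathbb{I} + \ket{1}\bra{1}\otimes e^{i\theta}U_1^\dagger U_2$, the eigenvalues of the controlled product are $\{1\} \cup \{e^{i\theta}\lambda : \lambda \in \eig(U_1^\dagger U_2)\}$. The diamond distance of two unitaries $V_1, V_2$ is governed (via the Ziman / D'Ariano-type formula) by how much of the unit circle the arc spanned by $\eig(V_1^\dagger V_2)$ subtends: precisely, $\frac12\dnorm{\mathcal{U}(V_1)-\mathcal{U}(V_2)} = \sqrt{1 - \nu^2}$ where $\nu$ is the distance from the origin to $\conv(\eig(V_1^\dagger V_2))$, and this is $<1$ exactly when $0\notin\conv(\eig(V_1^\dagger V_2))$. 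The key observation is that we may rotate by $\theta$ so that the arc of $\eig(U_1^\dagger U_2)$ is centred around $1$; then adding the extra eigenvalue $1$ (from the controlled register) does not enlarge the convex hull's distance-to-origin, and likewise taking adjoints only conjugates/reflects the spectrum. Hence for this choice of $\theta$, controlled queries, inverse queries, and mixtures thereof all yield channels at diamond distance at most $\epsilon$ from each other — so without loss of generality I may assume the tester makes only plain (uncontrolled, non-inverse) queries to a channel that is $\epsilon$-close in diamond norm.

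Second, I would set up the hybrid argument at the level of the tester's global state, crucially including the proof/advice register. A $\QMA(2)\slash\mathsf{qpoly}$-tester is described by: a fixed advice state $\ket{\psi_n}$, an (adversarially chosen) product proof $\ket{\xi} = \ket{\xi_1}\otimes\cdots\otimes\ket{\xi_k}$, a fixed initial ancilla $\ket{0}^{\otimes\poly(n)}$, fixed input-independent unitaries $V_0, V_1, \dots, V_T$, and $T$ query slots into which either $U_1$ or $U_2^\theta$ is inserted. Let $\ket{\Psi_j^{(b)}}$ denote the state after $j$ queries when the first $j$ queries are to $U_b$ and the remaining are "not yet applied" — more cleanly, compare the final states $\ket{\Psi_T^{(1)}}$ and $\ket{\Psi_T^{(2)}}$ and telescope the difference over the $T$ query points. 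Each telescoping term has norm bounded by $\opnorm{(U_1 - U_2^\theta)\otimes\mathbb{I}}$ acting on an intermediate state; but $\opnorm{U_1 - U_2^\theta}$ need not be small (that is exactly the point of the controlled-query subtlety), so instead I would bound each term by the diamond-norm quantity directly: the trace distance between the states obtained by applying $\mathcal{U}(U_1)$ versus $\mathcal{U}(U_2)$ (as channels, on the appropriate subsystem, with everything else — ancillas, advice, proof — held as the "reference" system) is at most $\frac12\dnorm{\mathcal{U}(U_1)-\mathcal{U}(U_2)} \le \epsilon$ by the operational meaning of the diamond norm. Summing $T$ such terms and using that a tester deciding with probability $\ge 2/3$ must have final states with trace distance $\ge 1/3$, I get $T\cdot\epsilon \ge \Omega(1)$, i.e. $T \ge \Omega(1/\epsilon)$. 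The point is that the advice and both proofs are simply part of the (fixed, for the advice) or arbitrary-but-fixed-once-chosen (for the proof) global pure state that the hybrid argument carries along untouched; since the same proof is fed in both the $U_1$ and $U_2^\theta$ runs, it cannot help distinguish, and $\QMA(2)$ versus $\QMA$ versus $\BQP$ makes no difference to the bound.

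The main obstacle I anticipate is making the first step fully rigorous: showing that a single choice of $\theta$ simultaneously tames controlled queries, inverse queries, and arbitrary interleavings of controlled-inverse queries, while keeping the diamond distance $\le O(\epsilon)$ (some constant blow-up is harmless). Concretely, one must verify that for the $\theta$ centring the eigenvalue-arc of $U_1^\dagger U_2$ around $1$, every "gadget channel" the tester can build from one query — $cU$, $cU^\dagger$, $c(U^\dagger)$-with-extra-controls, etc. — has its associated unitary product spectrum contained in an arc of the same angular width as that of $U_1^\dagger U_2$ (possibly reflected), hence the same distance-to-origin of the convex hull, hence diamond distance $\le \epsilon$; and then fold this observation into the hybrid sum so that each of the $T$ steps contributes $\le \epsilon$ regardless of which gadget was used at that step. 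A secondary, more routine obstacle is bookkeeping the reference-system argument so that the diamond-norm bound applies with the proof/advice register sitting on the reference side — this is where the proof-and-advice-robustness is actually extracted, and it needs the standard fact that $\frac12\dnorm{\mathcal{U}(U_1)-\mathcal{U}(U_2)} = \max_\rho \frac12\tnorm{(\mathcal{U}(U_1)\otimes\mathrm{id})(\rho) - (\mathcal{U}(U_2)\otimes\mathrm{id})(\rho)}$ over all (possibly entangled with an arbitrarily large reference) inputs $\rho$.
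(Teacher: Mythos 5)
Your proposal is correct and follows essentially the same route as the paper: use the Ziman-type eigenvalue formula for $\dnorm{\mathcal{U}(U_1)-\mathcal{U}(U_2)}$ to pick a global phase $\theta$ under which controlled, inverse, and controlled-inverse queries are no more distinguishing than plain ones, then run a hybrid/telescoping argument in which the ancilla, advice, and (fixed, reused-in-both-runs) proof registers ride along as the reference system, yielding $T\epsilon \geq \Omega(1)$. The only cosmetic difference is your choice of $\theta$ (centring the eigenvalue arc of $U_1^\dagger U_2$ at $1$, versus the paper's choice of rotating one eigenvalue onto $1$); both ensure that the extra eigenvalue $1$ contributed by the control qubit leaves the distance from the origin to $\conv(\eig(U_1^\dagger U_2^\theta))$ unchanged.
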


\noindent Essentially,~\cref{thm:lb_technique} establishes in a unitary query setting what is known as \emph{Heisenberg-limited scaling} (or the \emph{Heisenberg limit}) in quantum sensing and metrology~\cite{holland1993interferometric,ou1996complementarity}. This implies that achieving a $1/N$-factor improvement in accuracy requires $\sim N$ additional ``resources,'' which, in our case, correspond to queries. Since the Heisenberg limit is an information-theoretic notion, it should apply universally---including computational settings that allow (quantum) proofs and advice.

Variations of~\cref{thm:lb_technique} were already proven in other works~\cite{acin2001statistical,duan2007entanglement}, but to our knowledge not in the setting where (i) one considers access to quantum proofs and/or advice and (ii) a computational model which allows for access to the inverse, controlled and controlled inverse of $U$. Moreover, we would like to stress that the key contribution of this work is to give a unified exposition of the idea\footnote{Proving lower bounds by reducing the discrimination and identification tasks is a standard tool in theoretical computer science.} that~\cref{thm:lb_technique} can be used as a general framework to prove lower bounds for unitary property testing problems in the setting where one is given advice as well as proofs, by adopting the following strategy:
\begin{enumerate}
    \item Given a unitary property $\mathcal{P} = (\mathcal{P}_\textup{yes},\mathcal{P}_\textup{no})$, find two unitaries $U_1$, $U_2$ such that $U_1 \in \mathcal{P}_\textup{yes}$ and $U_2 \in \mathcal{P}_\textup{no}$. 
    \item Show that a unitary property tester for $\mathcal{P}$ implies a distinguisher for $U_1$ and $U_2$.
    \item Prove a lower bound on the channel discrimination complexity of $U_1,U_2$ using~\cref{thm:lb_technique}, which implies a lower bound on the query complexity of the unitary property tester.
\end{enumerate}

\noindent As we will show in the main body of the paper, this strategy turns out to be a powerful procedure to obtain simple, yet often tight lower bounds for a wide range of unitary property testing and related problems. For example, the lower bound of quantum phase estimation as proven by Bessen~\cite{bessen2005lower}, relies on frequency analysis and takes up a 7-page double-column paper. Using our technique, an optimal lower bound in a stronger setting (including proofs and advice) can be shown using a proof of just 7 lines(!).\footnote{Of course, this is a slight cheat in the way we count, but hopefully this captures the gist that this technique---once it is set up---is powerful in its simplicity.}

\tikzset{every picture/.style={line width=0.75pt}} %set default line width to 0.75pt        

\begin{figure}
    \centering

\tikzset{every picture/.style={line width=0.75pt}} %set default line width to 0.75pt        

\begin{tikzpicture}[x=0.75pt,y=0.75pt,yscale=-1,xscale=1]
%uncomment if require: \path (0,136); %set diagram left start at 0, and has height of 136

%Straight Lines [id:da22170122821322913] 
\draw    (405.33,20.5) -- (574,20.5) ;
%Straight Lines [id:da8981636474779231] 
\draw    (405.33,30.5) -- (574,30.5) ;
%Straight Lines [id:da9886985388073436] 
\draw    (405.33,40.5) -- (574,40.5) ;
%Straight Lines [id:da9563223128451551] 
\draw    (405.33,60) -- (574,60) ;
%Straight Lines [id:da4384313751370469] 
\draw    (405.33,71) -- (574,71) ;
%Straight Lines [id:da03462725646856646] 
\draw    (405,81) -- (573.67,81) ;
%Straight Lines [id:da5289323834627742] 
\draw    (405,91) -- (573.67,91) ;
%Straight Lines [id:da9462950567113225] 
\draw    (405,111) -- (573.67,111) ;
%Straight Lines [id:da9862776386121006] 
\draw    (119,21.5) -- (371,21.5) ;
%Straight Lines [id:da8769159132673918] 
\draw    (119,31.5) -- (371,31.5) ;
%Straight Lines [id:da3968184382354484] 
\draw    (119,41.5) -- (371,41.5) ;
%Straight Lines [id:da11326324677774002] 
\draw    (119,61) -- (371,61) ;
%Straight Lines [id:da22086382507885893] 
\draw    (119,72) -- (371,72) ;
%Straight Lines [id:da12535898006235235] 
\draw    (118.5,82) -- (370.5,82) ;
%Straight Lines [id:da8516011803032393] 
\draw    (118.5,92) -- (370.5,92) ;
%Straight Lines [id:da9423098211124832] 
\draw    (118.5,112) -- (370.5,112) ;
%Shape: Rectangle [id:dp8479496409265209] 
\draw  [fill={rgb, 255:red, 255; green, 255; blue, 255 }  ,fill opacity=1 ] (129,12) -- (198.89,12) -- (198.89,122.75) -- (129,122.75) -- cycle ;
%Shape: Rectangle [id:dp7049069070232116] 
\draw  [fill={rgb, 255:red, 255; green, 255; blue, 255 }  ,fill opacity=1 ] (209,11.6) -- (278.89,11.6) -- (278.89,98.25) -- (209,98.25) -- cycle ;
%Shape: Rectangle [id:dp8052740178337583] 
\draw  [fill={rgb, 255:red, 255; green, 255; blue, 255 }  ,fill opacity=1 ] (290.5,11.6) -- (360.39,11.6) -- (360.39,123.75) -- (290.5,123.75) -- cycle ;
%Shape: Rectangle [id:dp4736462702738482] 
\draw  [fill={rgb, 255:red, 255; green, 255; blue, 255 }  ,fill opacity=1 ] (415,12.4) -- (484.89,12.4) -- (484.89,97.25) -- (415,97.25) -- cycle ;
%Shape: Rectangle [id:dp8923576479510442] 
\draw  [fill={rgb, 255:red, 255; green, 255; blue, 255 }  ,fill opacity=1 ] (494.5,12) -- (564.39,12) -- (564.39,122.25) -- (494.5,122.25) -- cycle ;
%Shape: Rectangle [id:dp40516923068385124] 
\draw  [fill={rgb, 255:red, 255; green, 255; blue, 255 }  ,fill opacity=1 ] (574.75,11.49) -- (594.27,11.49) -- (594.27,27.15) -- (574.75,27.15) -- cycle ;
%Curve Lines [id:da3620419736635476] 
\draw    (577.02,22.92) .. controls (579.29,19.7) and (581.25,16.2) .. (583.85,16.2) .. controls (586.46,16.2) and (589.06,19.7) .. (591.34,23.5) ;
%Shape: Triangle [id:dp5775878871235371] 
\draw  [fill={rgb, 255:red, 0; green, 0; blue, 0 }  ,fill opacity=1 ] (591.88,14.12) -- (590.62,17.71) -- (588.34,15.5) -- cycle ;
%Straight Lines [id:da643241256723714] 
\draw    (583.28,22.73) -- (591.88,14.12) ;

%Straight Lines [id:da9183018195009462] 
\draw  [dash pattern={on 0.84pt off 2.51pt}]  (124.33,46) -- (124.33,57.17) ;
%Straight Lines [id:da750757864702351] 
\draw  [dash pattern={on 0.84pt off 2.51pt}]  (123.33,96) -- (123.33,107.17) ;

% Text Node
\draw (150.62,47.81) node [anchor=north west][inner sep=0.75pt]    {$V^{0}$};
% Text Node
\draw (373.48,51.47) node [anchor=north west][inner sep=0.75pt]    {$\dotsc $};
% Text Node
\draw (231.12,41.81) node [anchor=north west][inner sep=0.75pt]    {$\widetilde{U^{1}} \ \ $};
% Text Node
\draw (315.62,47.31) node [anchor=north west][inner sep=0.75pt]    {$V^{1}$};
% Text Node
\draw (437.12,40.31) node [anchor=north west][inner sep=0.75pt]    {$\widetilde{U^{T}} \ \ $};
% Text Node
\draw (516.12,48.81) node [anchor=north west][inner sep=0.75pt]    {$V^{T}$};
% Text Node
\draw (34.33,85.23) node [anchor=north west][inner sep=0.75pt]    {$\ket{\psi _{\text{input}}}$};
% Text Node
\draw (31.67,30.4) node [anchor=north west][inner sep=0.75pt]    {$\ket{0}^{\otimes \poly( n)}$};
% Text Node
\draw (97.33,16.83) node [anchor=north west][inner sep=0.75pt]   [align=left] {{\Huge {\fontfamily{pcr}\selectfont \}}}};
% Text Node
\draw (97.5,71.83) node [anchor=north west][inner sep=0.75pt]   [align=left] {{\Huge {\fontfamily{pcr}\selectfont \}}}};

\end{tikzpicture}
    \caption{Query complexity model for unitary property tester making $T$ queries to a unitary of interest $U$. The initial state is $\ket{\psi_\textup{init}} = \ket{0}^{\otimes \poly(n)} \ket{\psi_\text{input}}$ is allowed to consist of an input-independent part and an input-dependent part, depending on the class $\mathsf{C}$ of the $\mathsf{C}$-property tester. The unitary of interest $U$ can be accessed directly, through its inverse, controlled or controlled inverse, i.e.~we have that $\tilde{U}^t \in \{U_i,U_i^\dagger, cU_i, cU_i^\dagger \}$ for all $t \in [T]$, $i \in \{1,2\}$.}
    \label{fig:query-model}
\end{figure}
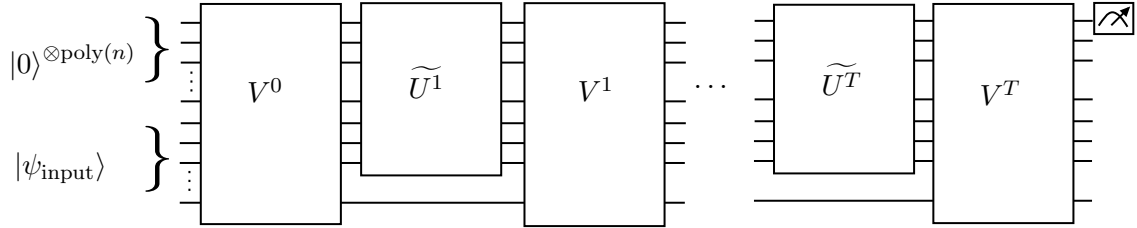

\subsection{Applications}
We consider many of the same unitary property testing problems as in the work by Wang and Zhang~\cite{wang2023quantum}, except for the \textit{subset support verification} problem, which is a new problem we introduce in this work. Let us briefly sketch the definitions of all four unitary property testing problems considered:
\begin{itemize}
    \item \textbf{Quantum phase estimation:} given access to a unitary $U$ and an eigenstate $\ket{\psi}$, determine whether the eigenphase $\theta : U \ket{\psi} = e^{2 i \pi \theta } \ket{\psi}$ satisfies $\theta \geq b$ or $\leq a$ for $b-a = \epsilon$.
    \item \textbf{Entanglement entropy:} given access to a bipartite state $\ket{\psi}_{AB} \in \mathbb{C}^d \otimes \mathbb{C}^d $ through a unitary $U = \mathbb{I} - \ketbra{\psi}$, determine whether $S_2 (\tr_B(\ketbra{\psi}) ) \leq a$  or $\geq b$ with $b-a = \Delta$, where $
    S_2(\rho) = - \ln[\tr[\rho^2]]$.
    \item \textbf{Subset support verification:} given access to a subset $S \subseteq \{0,1\}^n$ through a unitary $U : U \ket{0^n} = \frac{1}{\sqrt{|S|}}\sum_{i \in S} \ket{i}$, determine whether $j \in S$ or $j \notin S$ for some string $j \in \{0,1\}^n$.
    \item \textbf{Quantum amplitude estimation:}  given a unitary $U$ which acts as $U \ket{0^n} = \sqrt{\alpha} \ket{\psi} +\sqrt{1-\alpha} |\psi^\perp\rangle$, determine whether $\sqrt{\alpha} \geq b$ or $\leq a$ with $b-a = \epsilon$.
\end{itemize}
Additionally, we also consider the following four unitary problems, which are not unitary property testing problems:
\begin{itemize}
    \item \textbf{Thermal state preparation:} given access to some Hamiltonian $H$ through a block-encoding $U_H$ prepare (an approximation of) the Gibbs state $e^{- \beta H}/\tr[e^{- \beta H}]$ at inverse temperature $\beta$.
    \item \textbf{Hamiltonian simulation:}  given access to some Hamiltonian $H$ through a block-encoding $U_H$, implement (an approximation of) $e^{i t H}$ for some time $t$.
    \item \textbf{Hamiltonian learning:}  given access to some Hamiltonian $H$ via its time evolution operator $U(t) = e^{i t H}$, output the matrix description of $H$ in the computational basis, minimizing the total evolution time.
    \item \textbf{Ground state preparation (GSP) of gapped Hamiltonians:}  given access to some Hamiltonian $H$ through a block-encoding $U_H$, implement (an approximation of) the ground state of $H$.
\end{itemize}

\begin{table}[!ht]
	\centering
	\begin{tabular}{c|c}
		Problem & Query lower bound  \\ \hline
		Quantum phase estimation (\cref{clm:QPE}) & $\Omega\left(1/\epsilon\right)$\\
  Entanglement entropy (\cref{clm:QEP}) & $\Omega\left(1/\sqrt{\Delta}\right)$\\
    Subset support verification (\cref{clm:SSV}) & $\Omega\left(\sqrt{|S|}\right)$\\
    Quantum amplitude estimation (\cref{cor:QAE}) & $\Omega\left(1/\epsilon\right)$\\
    Thermal state preparation (\cref{claim:QGS}) & $\Omega(\beta)$\\
   Hamiltonian simulation (\cref{claim:HS}) & $\Omega(t)$\\
      Hamiltonian learning (\cref{claim:learning}) & $\Omega(1/\epsilon)$\\
       GSP of gapped Hamiltonians (\cref{claim:GSP}) & $\Omega(1/\Delta)$
	\end{tabular}
	\caption{Obtained bounds for the query complexity of unitary property testing and related problems. All bounds hold for any $\mathsf{C}$-tester with $\mathsf{C} \subseteq \QMA(2)\slash\mathsf{qpoly}$ and are, with the exception of the entanglement entropy problem, shown to be tight (up to logarithmic factors) in~\cref{sec:app}.}
	\label{tbl:bounds}
\end{table}

\noindent \cref{tbl:bounds} summarises our obtained bounds for the above problems. All our lower bounds are obtained via~\cref{thm:lb_technique}, and upper bounds are given by known results or explicitly proven in the main text (see~\cref{sec:app}). With the exceptions of subset support verification and entanglement entropy, these bounds were already known to hold in literature~\cite{bessen2005lower,nayak1999quantum,berry2007efficient,kastoryano2023quantum,mande2023tight,huang2023learning,Lin2020nearoptimalground}.\footnote{The lower bound in Ref.~\cite{berry2007efficient} assumes sparse access to the entries of the local Hamiltonian, which is a stronger input model than the block-encoding setting we consider.} However, our work is (as far as the author is aware) the first to show that these bounds hold even in the presence of \emph{both} (quantum) proofs and advice.\footnote{Mande and de Wolf~\cite{mande2023tight} also give a lower bound on quantum phase estimation with advice, but the advice in their work refers to something different: the problem they consider generalises quantum phase estimation in the sense that you are no longer given the exact eigenstate but only a quantum state, called an advice state, which has some promised \textit{overlap} with it. In our case, advice refers to complexity classes that have access to some trusted string (or state) that may depend on the input length but not on the input itself. }

\subsection{Quantum oracle separations and implications to \texorpdfstring{$\QMA(2)$}{QMA(2)}}
One of the main takeaways of the results in~\cref{tbl:bounds} seems to be that whenever high precision is required in a black-box setting, neither quantum proofs nor advice seems to help in reducing the required query complexity. The complexity class $\QMA(k)$ is a generalisation of $\QMA$ where there are $k$ multiple \emph{uninteracting} provers, which are thus guaranteed to be unentangled with each other. It is known that $\QMA(k) = \QMA(2)$ for $2 \leq k \leq \poly(n)$~\cite{harrow2013testing}, but is generally believed that $\QMA \neq \QMA(2)$, since there are problems known to be in $\QMA(2)$ but not known to be in $\QMA$~\cite{liu2007quantum,aaronson2008power,beigi2008np,blier2009all}. However, it is still a major open question of how more powerful $\QMA(2)$ can be---the best upper bound we currently have is $\QMA(2) \subseteq \NEXP$, which follows by just guessing exponential-size classical descriptions of the two quantum proofs. So could $\QMA(2)$ even be equal to $\NEXP$, or even contain, say, $\PSPACE$? In order for this to be true, it needs to at least contain the class $\SBQP$ in the unrelativized world, where $\SBQP \subseteq \PP$ can be viewed as a variant of $\BQP$ with exponentially close completeness and soundness parameters.  However, our results easily imply a quantum oracle relative to which this is not true:
\begin{restatable}{theorem}{orsepMP}
There exists a quantum oracle $U $ such that 
\begin{align*}
    \QMA(2)^U \not\supset \SBQP^U.
\end{align*}
\label{thm:orsepMP}
\end{restatable}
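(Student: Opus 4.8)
The plan is to exhibit a concrete unitary property testing problem that (a) can be solved by an $\SBQP$-type algorithm with a single (or constantly many) query, but (b) requires $\Omega(2^{n/2})$ queries for any $\QMA(2)\slash\mathsf{qpoly}$-tester, and hence a fortiori for any $\QMA(2)$-tester. Since relativized separations of this kind only need the $\SBQP$ side to be ``cheap'' in queries while the $\QMA(2)$ side is expensive, the natural candidate is the Aaronson--Kuperberg-style reflection problem already recalled in the introduction: distinguish $U = \mathbb{I}$ from $U = \mathbb{I} - 2\ketbra{\psi}$ for an unknown $N$-dimensional state $\ket{\psi}$, with $N = 2^n$. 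The key point is that the ``small-amplitude'' version of this problem — where the reflection is replaced by a unitary that only slightly perturbs the subspace, or equivalently where we must decide the overlap $\abs{\bra{\psi} U \ket{\psi}}$ with exponentially fine precision — is exactly the high-precision regime governed by \cref{thm:lb_technique}.

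Concretely, first I would fix $U_1 = \mathbb{I}$ and $U_2 = \mathbb{I} - (1-e^{i\phi})\ketbra{\psi}$ for a tiny phase $\phi$, chosen so that $\tfrac12\dnorm{\mathcal{U}(U_1)-\mathcal{U}(U_2)} \le \epsilon$ with $\epsilon = \Theta(2^{-n/2})$ (one checks that $\epsilon \asymp \abs{1-e^{i\phi}} \asymp \abs{\phi}$, so $\phi$ is exponentially small but the two unitaries differ only on a one-dimensional subspace spanned by an unknown $\ket\psi$, and the non-identity eigenvalue $e^{i\phi}$ keeps $0 \notin \conv(\eig(U_1^\dagger U_2))$). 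Second, I would define the oracle problem: the oracle encodes, for each input length, either $U_1$ or one of the $U_2$'s (over all $\ket\psi$), and $L$ is the ``yes'' language when $U = U_2^\theta$ for the appropriate global phase $\theta$ supplied by \cref{thm:lb_technique}. Third, I would argue $L^U \in \SBQP^U$: an $\SBQP$ machine can, with a single controlled query, prepare a state whose acceptance probability differs between the two cases by an (exponentially small but nonzero) amount — e.g.\ apply $cU$ to $\ket{+}\ket{\mathrm{junk}}$ and measure, exploiting that $\SBQP$'s defining gap requirement is only that $p_{\mathrm{acc}}^{\mathrm{yes}} \ge 2\, p_{\mathrm{acc}}^{\mathrm{no}}$ (or the analogous "ratio" formulation), which tolerates exponentially small probabilities. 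This is the step where I would be most careful: one must verify that the standard $\SBQP$ amplification/definition really does let a $2^{-\Omega(n)}$-scale signal count, and that controlled access to $U$ is legitimately available in the oracle model of \cref{fig:query-model}. Fourth, the lower bound side is immediate: by \cref{thm:lb_technique}, any $\mathsf{C}$-tester with $\mathsf{C}\subseteq\QMA(2)\slash\mathsf{qpoly}$ needs $\Omega(1/\epsilon) = \Omega(2^{n/2})$ queries to decide $L^U$, so $L^U \notin \QMA(2)^U$ (which, like $\QMA(2)\slash\mathsf{qpoly}$, is contained in the polynomial-query tester model).

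The main obstacle I anticipate is not the query lower bound — that is handed to us by \cref{thm:lb_technique} — but rather the bookkeeping needed to turn a query-complexity gap into a genuine oracle separation of the relativized classes: one must diagonalize over all $\QMA(2)^U$ machines (there are uncountably many oracles, so the standard move is to fix the machine, note its query count is some fixed polynomial, and then use the lower bound to find an oracle fooling it, finally taking a union over an enumeration of machines and input lengths via a suitable encoding argument). I would also need to confirm that the $\SBQP$ upper bound is robust to the controlled/inverse access allowed in the model and, conversely, that the "yes" and "no" instances can be packaged into a single unitary oracle $U$ of the form used throughout the paper (possibly by embedding the instance selector into extra qubits of the oracle). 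Modulo these standard relativization and packaging details, the separation follows: combine a one-query $\SBQP^U$ procedure with the $\Omega(2^{n/2})$ lower bound from \cref{thm:lb_technique} applied at precision $\epsilon = 2^{-\Theta(n)}$.
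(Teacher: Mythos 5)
There is a genuine gap, and it sits exactly at the step you flagged as the one to ``be most careful'' about---but the failure mode is not the one you anticipated. You worried about whether $\SBQP$'s gap condition tolerates an exponentially small signal (it does), but the real problem is that with an \emph{unknown} reflection axis $\ket{\psi}$, your one-query $\SBQP$ upper bound produces \emph{no} signal at all in the worst case. Concretely, applying $cU_2$ with $U_2 = \mathbb{I} - (1-e^{i\phi})\ketbra{\psi}$ to $\ket{+}\ket{\chi}$ for any fixed input state $\ket{\chi}$ and measuring in the Hadamard basis yields a ``$-$'' outcome with probability $\tfrac{1}{4}\abs{1-e^{i\phi}}^2\,\abs{\braket{\psi}{\chi}}^2$, which is exactly $0$ whenever $\ket{\psi}\perp\ket{\chi}$---indistinguishable from the no-case. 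Since the $\SBQP$ containment must hold for \emph{every} yes-instance oracle (i.e.\ every choice of $\ket{\psi}$), and an $\SBQP$ machine only makes polynomially many queries, you cannot guarantee the required $2^{-p(n)}$ acceptance probability. The unknownness of $\ket{\psi}$ is what drives the $\Omega(\sqrt{N})$ hardness of the Aaronson--Kuperberg problem, but here it is actively harmful: the only source of hardness you need is the exponentially small phase, and the fix is to make the perturbed eigenstate \emph{known and fixed}. This is precisely what the paper does: it takes the diagonal unitary $U_n = e^{2\pi i\theta_n}\ketbra{0} + \ketbra{1}$ with $\theta_n \in \{0, 2^{-p(n)}\}$ encoding membership of $0^n$ in $L$, and runs one-bit phase estimation (\cref{lem:ob_QPE}) on the known eigenstate $\ket{0}$, giving acceptance probability $\sin^2(\pi 2^{-p(n)}) \geq 2^{-2p(n)}$ versus $0$.

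A second, less fatal divergence: you propose to diagonalize over all $\QMA(2)^U$ machines, which is workable but unnecessary here and somewhat delicate for $\QMA(2)$ (the completeness/soundness conditions quantify over unentangled proof pairs at every stage of the oracle construction). The paper sidesteps diagonalization entirely: it first fixes a unary language $L \notin \QMA(2)$ (which exists by counting, since $\QMA(2)\subseteq\NEXP$), encodes $L$ into the oracle as above, and then observes that the oracle is within diamond distance $O(2^{-p(n)})$ of the identity, so by the hybrid bound of \cref{thm:lb_technique} any polynomial-query $\QMA(2)^U$ verifier behaves, up to an additive $0.01$, identically to the same verifier with $U$ replaced by $\mathbb{I}$; hence $L\in\QMA(2)^U$ would force $L\in\QMA(2)$, a contradiction. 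If you replace your reflection instance with the fixed-eigenstate phase instance and adopt this ``oracle is nearly the identity'' argument in place of diagonalization, your outline becomes essentially the paper's proof.
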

The proof, given in~\cref{sec:oracle_sep}, considers any unary language $ L $ that is not contained in either $ \QMA(2) $ or $ \SBQP $, which must exist by a counting argument. We then encode $ L(x) $ as an eigenphase corresponding to the eigenstate $ \ket{0} $ in a diagonal unitary: the phase is $ 0 $ if $ x \notin L $ and exponentially close to $0$ if $ x \in L $. Running a one-bit quantum phase estimation circuit~\cite{kitaev1995quantum} with this unitary then yields an exponentially small probability of measuring $\ket{1}$ on the first qubit only when the eigenphase is non-zero. Therefore, we have that $ L $ is contained in $ \SBQP^U $, as it can distinguish exponentially small differences in the eigenphase making only a single controlled query to $U$.  For $ \QMA(2) $, the quantum oracle behaves almost like the identity operator for large input sizes. Hence, in these cases, $U$ can be replaced by the identity operator while preserving a sufficiently large completeness and soundness gap. This would contradict the assumption that we considered some $ L \notin \QMA(2) $.

Whilst this oracle separation is straightforward once one has~\cref{thm:lb_technique}, we believe it has theoretical merit: it implies that if $\QMA(2)$ would be able to solve very precise problems, for example the local Hamiltonian problem at exponentially precise precision (which is $\PSPACE$-complete~\cite{fefferman2016quantum}), it needs to be able to do so in a way that does not work in a quantum black-box setting.

Similarly, we can use the same idea to show a quantum oracle separation between $\QMA\slash\mathsf{qpoly}$ and $\SBQP$.\footnote{Contrary to what was claimed in an earlier version of this work, it is not clear if such an oracle separation holds if we consider both unentangled quantum proofs and advice, since we do not know whether $\QMA(2)\slash\mathsf{qpoly} = \ALL$~\cite{aaronson2005qma}.}
\begin{restatable}{theorem}{orsepadvice}
There exists a quantum oracle $U$ such that 
\begin{align*}
    \QMA\slash\mathsf{qpoly}^U \not\supset \SBQP^U.
\end{align*}
\label{thm:orsepadvice}
\end{restatable}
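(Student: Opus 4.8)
The plan is to mirror the proof sketch already given for \cref{thm:orsepMP}, substituting the single-prover-plus-advice model for $\QMA(2)$ and invoking the advice branch of \cref{thm:lb_technique} rather than the proof branch. First I would fix a unary language $L$ that lies in neither $\QMA\slash\mathsf{qpoly}$ nor $\SBQP$; such an $L$ exists by a counting argument, since $\QMA\slash\mathsf{qpoly}$ and $\SBQP$ are each countable unions of classes containing only countably many (or, in the relativized-machine sense, uniformly describable) languages, while there are uncountably many unary languages. Then, for each input length $n$, I would build a diagonal quantum oracle $U_n$ on one qubit (padded up to the ambient Hilbert space) whose eigenvalue on $\ket{0}$ is $1$ when $1^n \notin L$ and $e^{i\theta_n}$ with $\theta_n$ exponentially small (say $\theta_n = 2^{-p(n)}$ for a suitable polynomial $p$) when $1^n \in L$; the global oracle $U$ is the direct sum of the $U_n$.

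Next I would argue $L \in \SBQP^U$: run the one-bit phase-estimation circuit of Kitaev with a single controlled query to $U$, which outputs $\ket{1}$ on the control qubit with probability $\sin^2(\theta_n/2)$. This is $0$ when $1^n\notin L$ and is positive but exponentially small when $1^n\in L$, which is exactly the gap structure defining $\SBQP$ (acceptance probability $0$ versus acceptance probability $\geq 2^{-q(n)}$ for some polynomial $q$). So the $\SBQP$ machine makes one oracle call and decides $L$ relative to $U$.

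For the other direction I would show $L \in \QMA\slash\mathsf{qpoly}^U$ would follow if a $\QMA\slash\mathsf{qpoly}$-tester could distinguish the two oracle behaviors; but by \cref{thm:lb_technique}, applied with $U_1 = \mathbb{I}$, $U_2 = e^{i\theta_n}\mathbb{I}$ (so $U_1^\dagger U_2 = e^{i\theta_n}\mathbb{I}$, whose single eigenvalue is nonzero and hence $0 \notin \conv(\eig(U_1^\dagger U_2))$, while $\tfrac12\dnorm{\mathcal{U}(U_1)-\mathcal{U}(U_2)} = 0 \leq \epsilon$), any $\mathsf{C}$-tester with $\mathsf{C}\subseteq\QMA(2)\slash\mathsf{qpoly}$—in particular any $\QMA\slash\mathsf{qpoly}$-tester—needs $\Omega(1/\theta_n)$ queries, i.e. exponentially many, to tell these apart with probability $\geq 2/3$. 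Hence a $\QMA\slash\mathsf{qpoly}$ verifier that makes only polynomially many queries cannot detect the difference between $U$ and the oracle that is simply $\mathbb{I}$ on the block of length $n$ (for all large enough $n$), so it accepts $1^n\in L$ and $1^n\notin L$ with essentially the same probability. Since the oracle is effectively the identity for large $n$, one can replace $U$ by $\mathbb{I}$ on all but finitely many lengths while preserving the completeness–soundness gap; the behavior on the remaining finitely many lengths can be absorbed into the (length-dependent) advice. This would put $L\in\QMA\slash\mathsf{qpoly}$ in the unrelativized sense, contradicting the choice of $L$. Therefore no $\QMA\slash\mathsf{qpoly}^U$ machine decides $L$, giving $\QMA\slash\mathsf{qpoly}^U \not\supset \SBQP^U$.

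The main obstacle is the bookkeeping in the last step: one must be careful that the lower bound of \cref{thm:lb_technique} is being applied against the correct notion of ``$\QMA\slash\mathsf{qpoly}$-tester''—a tester that gets a trusted advice state depending only on $n$ together with one quantum proof—and that the diagonalization/counting argument interacts cleanly with the advice, so that folding the finitely many anomalous lengths into the advice does not secretly give the unrelativized machine more power than a genuine $\QMA\slash\mathsf{qpoly}$ machine has. I would also want to double-check the promise conditions of \cref{thm:lb_technique} are literally met by the pair $(\mathbb{I}, e^{i\theta}\mathbb{I})$ and that the ``there exists a $\theta$'' quantifier in the theorem is compatible with our needing the bound for the specific exponentially small $\theta_n$ we chose—this is fine because we are free to set the eigenphase in the oracle to whatever value the theorem hands us, exactly as in the proof of \cref{thm:orsepMP}.
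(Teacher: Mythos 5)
There is a genuine gap, and it sits exactly at the step you flagged as ``bookkeeping'': the existence of your language $L$. A counting argument does \emph{not} work against an advice class, and in particular no \emph{unary} language can lie outside $\QMA\slash\mathsf{qpoly}$. An advice class is not a countable union of countable sets of languages: each verifier, paired with each of uncountably many advice sequences, decides a language, so the class can contain uncountably many languages. Concretely, every unary language is already in $\BQP\slash\mathsf{poly}$ (hence in $\QMA\slash\mathsf{qpoly}$) with a single classical advice bit per length --- the advice bit for length $n$ simply records whether $1^n \in L$, and the verifier outputs it while ignoring the proof and the oracle. So the language you diagonalize against does not exist, and the whole argument collapses before the oracle is even constructed. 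This is also why your later step of ``absorbing the finitely many anomalous lengths into the advice'' is a warning sign: for a unary language the advice can absorb \emph{all} lengths.

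The paper avoids this by switching to a \emph{binary} language for this theorem (in contrast to \cref{thm:orsepMP}, where $\QMA(2)$ has no advice and a unary language plus counting suffices), and by invoking the nontrivial containment $\QMA\slash\mathsf{qpoly} \subseteq \PSPACE\slash\mathsf{poly} \neq \ALL$ due to Aaronson to guarantee that some binary $L \notin \QMA\slash\mathsf{qpoly}$ exists. Correspondingly, the oracle must encode $L(x)$ for every input $x \in \{0,1\}^n$, not one bit per length: the paper takes a family $U_x = e^{2\pi i \theta_x}\ket{0^n}\bra{0^n} + (\mathbb{I}-\ket{0^n}\bra{0^n})$ with $\theta_x \in \{0, 2^{-p(|x|)}\}$ depending on $L(x)$. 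Once you make these two changes, the rest of your argument (one-bit phase estimation for $\SBQP^U$ membership, and \cref{thm:lb_technique} applied to the pair $(\mathbb{I}, e^{i\theta}\,\cdot\,)$ to show a polynomial-query $\QMA\slash\mathsf{qpoly}$ verifier cannot see the oracle and hence would place $L$ in unrelativized $\QMA\slash\mathsf{qpoly}$) goes through essentially as in the paper. Note also that the same obstruction is why the paper cannot prove the analogous separation against $\QMA(2)\slash\mathsf{qpoly}$: it is open whether that class equals $\ALL$, so no language outside it is known to exist.
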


\subsection{Related work}
Classical property testing in a quantum setting was first studied by Buhrman, Fortnow, Newman, and Hein R{\"o}hrig~\cite{buhrman2008quantum}. There are two main techniques for proving lower bounds on the quantum query complexity of classical property testing: the polynomial method~\cite{beals2001quantum} and the adversary method~\cite{ambainis2000quantum}. The methods are generally incomparable: there exist problems in which the polynomial method provides a tight lower bound and the adversary method provably not~\cite{kutin2003quantum} and vice versa~\cite{ambainis2006polynomial}.\footnote{This does not hold if one considers the negative weights (or generalized) adversary method~\cite{hoyer2007negative}, which is tight for every decision problem~\cite{reichardt2009span}.}

The first study of unitary property testing by Wang~\cite{wang2011property} only includes upper bounds and raises the open question of finding a technique to show lower bounds for these kinds of problems.\footnote{The work also uses a more restricted query model: one starts with some arbitrary bipartite quantum state $\ket{\psi_{AB}}$, applies $U$ (or its inverse) to the $A$-subsystem of $\ket{\psi_{AB}}$ followed by a measurement.} In 2015 Belovs generalised the adversary method to include unitary property testing~\cite{belovs2015variations} and recently She and Yuen provided a lower bound method based on polynomials, which they also use to show lower bounds for $\QMA$-testers~\cite{she2022unitary}. Some other works that consider the query complexity of problems related to unitary property testing problems are Refs.~\cite{chen2023testing,mande2023tight,montanaro2013survey,aharonov2022quantum}.

\paragraph{Comparison to Ref.~\cite{wang2023quantum}.} We study many of the same problems as Wang and Zhang, but do so with a different technique: Wang and Zhang propose a sample-to-query lifting theorem, which translates a sample lower bound for any type of quantum state testing into a query lower bound for a corresponding type of unitary property testing. However, the obtained bounds using this method contain a sub-optimal inverse logarithmic factor, which seems to be inherent to their technique. The idea is to reduce unitary property testing problems to state discrimination, where the unitaries being tested block-encode the states from the state discrimination problem. The encoding consists of two steps: (1) copies of a state are used to approximately implement a quantum channel corresponding to time evolution under the density matrix as a Hamiltonian (known as density matrix exponentiation~\cite{lloyd2014quantum}), and (2) the density matrix exponentiation is leveraged within the quantum singular value transformation framework~\cite{gilyen2019quantum} to achieve a desired block-encoding (see~\cref{def:block_encoding}). Both steps are approximate, with the second incurring an additional logarithmic factor, leading to slightly suboptimal lower bounds. Moreover, their results are not shown to hold in the presence of quantum proofs and advice, which are two open problems raised in the discussion section of the paper. Using our technique we resolve both of these open problems for their obtained lower bounds. 

\subsection{Open questions}
Can one improve either the upper or lower bound for the entanglement entropy problem, such that both bounds match?\footnote{Following this work, Chen, Wang and Zhang improved the lower bound for the entanglement entropy to $\tilde{\Omega}(d/\Delta)$, simultaneously a dependence on the dimensionality $d$ as well as the precision $\Delta$~\cite{chen2024local}. However, as far as the author is aware, it is still open whether this lower bound can be matched by an upper bound.} Are there more unitary problems for which our method can prove (tight) lower bounds? Are there more examples of unitary property testing problems besides the quantum search problem from Ref.~\cite{aaronson2007quantum}, where one can show that quantum proofs provably reduce the quantum query complexity? What about quantum advice?

\paragraph{Acknowledgements.}
JW thanks Harry Buhrman, Alvaro Yanguez and Marco Túlio Quintino for useful discussions, and Amira Abbas, Jonas Helsen, Ronald de Wolf and Sebastian Zur for comments on an earlier draft.
Anonymous referees are also thanked for their feedback on earlier versions. JW was supported by the Dutch Ministry of Economic Affairs and Climate Policy (EZK), as part of the Quantum Delta NL programme. 

\section{Preliminaries}
\label{sec:prelim}

\subsection{Notation}
We write $[n]$ to denote the set of integers $\{1,2,\dots,n\}$. For a matrix $M$, we write $\eig(M)$ to denote the set of eigenvalues of $M$. For any $d \in \mathbb{N}$, $\mathbb{I}_d$ will be the $d$-dimensional identity matrix. If the dimension is clear from the context, we simply write $\mathbb{I}$. If $X$ is a finite set, $|X|$ is the size of $X$ and $\conv(X)$ the convex hull of the elements in $X$. We write $\poly(n)$ to denote an arbitrarily polynomially-bounded function. We denote $\mathbb{U}(d)$ for the unitary group of degree $d$ and $\mathcal{D}(d)$ for the set of all $d$-dimensional density matrices $\rho$ (i.e.~all complex $d$ by $d$ matrices that are PSD and have trace $1$). The operator, trace and diamond norm are denoted $\opnorm{\cdot}$, $\tnorm{\cdot}$ and $\dnorm{\cdot}$, respectively.

\subsection{Classes of unitary property testers}
We follow the conventions in the definitions by She and Yuen~\cite{she2022unitary}. For a fixed number of qubits $n$, let $\mathcal{P}_\textup{yes}$ and $ \mathcal{P}_\textup{no}$ (called the yes and no instances, respectively) be \emph{disjoint} subsets of all $n$-qubit unitary operators.\footnote{More generally, one would consider all $d$-dimensional unitaries for an arbitrary dimension $d$.} A tester for a unitary problem $\mathcal{P} = (\mathcal{P}_\textup{yes}, \mathcal{P}_\textup{no})$ is a quantum algorithm which given query access to $U \in \mathcal{P}_\textup{yes} \cap \mathcal{P}_\textup{no}) $ accepts with high probability when $U \in \mathcal{P}_\textup{yes}$ and accepts with low probability when $U \in \mathcal{P}_\textup{no}$. Hence, the tester can discriminate between the unitaries coming from the set of yes instances $\mathcal{P}_\textup{yes}$ and those from $\mathcal{P}_\textup{no}$. This is more general than the standard definition in the literature, where the no-instances are defined to be $\epsilon$-far from the set of yes instances in terms of some distance measure (see for example Ref.~\cite{montanaro2013survey}). However, for our purposes we only need the sets  $\mathcal{P}_\textup{yes}$ and $ \mathcal{P}_\textup{no}$ to be disjoint.

We can now build classes of testers similar to Ref.~\cite{she2022unitary}. First, Let us start with the simplest class of testers, which do not use proofs or advice.  W.l.o.g.,~one can always consider the input states to be pure since the acceptance probability of any mixed state is always upper bounded by a pure state via a convexity argument. The initial state of any quantum algorithm is always assumed to have some number of ancilla qubits all initialised in $\ket{0}$ and is sometimes supplied with an additional input state, depending on the class of tester considered.
\begin{definition}[$\BQP$-tester] Let $\mathcal{P} = (\mathcal{P}_\textup{yes},\mathcal{P}_\textup{no})$ be a unitary property on $n$ qubits. We say $\mathcal{P}$ has a $\BQP$-tester if there exists a quantum algorithm such that the following holds:
\begin{itemize}
    \item If $U \in \mathcal{P}_\textup{yes}$, the quantum algorithm makes queries to $U$ and accepts with probability $\geq 2/3$. 
    \item If $U \in \mathcal{P}_\textup{no}$, the quantum algorithm makes queries to $U$ and accepts with probability $\leq 1/3$.
\end{itemize}
\end{definition}
Note that there is no restriction on the number of queries the tester makes to $U$, as this is the quantity being characterized. Nor is there any restriction on the amount of space or time used by the $\BQP$-tester in this definition, which makes the ``$\mathsf{P}$'' in ``$\BQP$-tester'' somewhat awkward. Nonetheless, we adopt this notation to follow the convention in Ref.~\cite{she2022unitary} and to facilitate a direct connection to separations between actual complexity classes (\cref{sec:oracle_sep}).

Let us now add, possibly unentangled, quantum (resp.~classical) proofs to define  $\QMA$ (resp.~$\QCMA$) testers. 

\begin{definition}[$\QMA(k)$-tester] Let $\mathcal{P} = (\mathcal{P}_\textup{yes},\mathcal{P}_\textup{no})$ be a unitary property on $n$ qubits. We say $\mathcal{P}$ has a $\QMA(k)$-tester if the following holds:
\begin{itemize}
    \item If $U \in \mathcal{P}_\textup{yes}$, then there exists $k$ quantum states $\{ \ket{\xi_i} \}$ such that on input $\ket{\xi_1} \dots \ket{\xi_k}$ the algorithm makes queries to $U$ and accepts with probability $\geq 2/3$. 
    \item If $U \in \mathcal{P}_\textup{no}$, then for all $k$ quantum states $\{ \ket{\xi_i} \}$, the algorithm acting on $\ket{\xi_1} \dots \ket{\xi_k}$  makes queries to $U$ and accepts with probability $\leq 1/3$.
\end{itemize}
If $k = 1$, we abbreviate to a $\QMA$-tester. 
\end{definition}
\begin{subdefinition}[$\QCMA$-tester] This has the same definition as the $\QMA$-tester, but where the promises hold with respect to computational basis states $\ket{y}$ as input states.
\end{subdefinition}
By the result of Harrow and Montanaro, it would suffice to consider $\QMA(2)$-testers to cover all of $\QMA(\poly(n))$~\cite{harrow2013testing}.

The definitions of testers of unitary properties in Ref.~\cite{she2022unitary} do not include classes which allow for advice. A technical difficulty that arises when one wants to include advice states is that we need to specify a notion of input length on which the advice may depend. in Ref.~\cite{she2022unitary}, no such restrictions were necessary, and all classes of testers were defined in a way that did not in any way depend on any notion of input size $n$. We will make the choice that $n$ is given by the number of qubits the unitary acts on. Note that in the case where the property is also parametrised by some parameter, this is not restrictive in the advice setting since every parameter setting is a different unitary property and hence allows for a different choice of advice strings. To make this point clear, take the example of a unitary property testing formulation of quantum phase estimation with a variable number of qubits $n$ but a fixed sequence of known eigenstates $\ket{\psi_n}$, for example $\ket{0^n}$: here one has to determine if an unknown $n$-qubit unitary $U$ comes from $\mathcal{P}_\textup{yes} = \{ U : U \ket{0^n} = \ket{0^n} \}$ or $\mathcal{P}_\textup{no} = \{ U : U \ket{0^n} =e^{2 \pi i \theta} \ket{0^n}, \epsilon \leq  \theta \leq 1/2 \}$ for some fixed $\epsilon >0$. In other words, promised that $U$ has an eigenstate $\ket{\psi}$, determine if its eigenphase is $0$ or $\geq \epsilon$. In this case, one could also parametrise $\epsilon$ as a function of some $m$, i.e.,~$\epsilon = \epsilon(m)$. In this case, for a fixed choice of $\epsilon$, the advice should be identical for each fixed $n$. However, even for fixed values of $n$, the advice can be different for different values of $\epsilon$ as each is a new property testing problem. Having set our choice of what the input size $n$ means, we will now state our definitions of the testers.

\begin{definition}[$\BQP\slash\mathsf{qpoly}$-tester] Let $\mathcal{P} = (\mathcal{P}_\textup{yes},\mathcal{P}_\textup{no})$ be a unitary property on $n$ qubits. We say $\mathcal{P}$ has a $\BQP\slash\mathsf{qpoly}$-tester if there exists a collection of quantum advice states $\{\ket{\psi_n}\}$, and a quantum algorithm such that the following holds:
\begin{itemize}
     \item If $U \in \mathcal{P}_\textup{yes}$, on input $\ket{\psi_n}$ the quantum algorithm makes queries to $U$ and accepts with probability $\geq 2/3$. 
    \item If $U \in \mathcal{P}_\textup{no}$, on input $\ket{\psi_n}$ the quantum algorithm makes queries to $U$ and accepts with probability $\leq 1/3$.
\end{itemize}
\label{def:BQPqpoly}
\end{definition}
\begin{subdefinition}[$\BQP\slash\mathsf{poly}$-tester] This has the same definition as the $\BQP\slash\mathsf{qpoly}$-tester but where the advice states are computational basis states $\ket{y_n}$.
\end{subdefinition}

We can also combine proofs and advice to arrive at even more powerful classes of testers.

\begin{definition}[$\QMA(k)\slash\mathsf{qpoly}$-tester]
Let $\mathcal{P} = (\mathcal{P}_\textup{yes},\mathcal{P}_\textup{no})$ be a unitary property on $n$ qubits. We say $\mathcal{P}$ has a $\QMA(k)\slash\mathsf{qpoly}$-tester if there exists a collection of quantum advice states $\{\ket{\psi_n}\}$, where $\ket{\psi_n} \in (\mathbb{C}^{2})^{\otimes \poly(n)}$, and a quantum algorithm such that the following holds:
\begin{itemize}
    \item If $U \in \mathcal{P}_\textup{yes}$, then there exists $k$ quantum states $\{ \ket{\xi_i} \}$, such that on input $\ket{\psi_n} \ket{\xi_1} \dots \ket{\xi_k}$ the algorithm makes queries to $U$ and accepts with probability $\geq 2/3$. 
    \item If $U \in \mathcal{P}_\textup{no}$, then for all $k$ quantum states $\{ \ket{\xi_i} \}$,  the algorithm acting on $\ket{\psi_n} \ket{\xi_1} \dots \ket{\xi_k}$  makes queries to $U$ and accepts with probability $\leq 1/3$.
\end{itemize}
\label{def:QMAkqpoly}
\end{definition}
\noindent In this case it still holds that $\QMA(\poly(n))\slash\mathsf{qpoly} = \QMA(2)\slash\mathsf{qpoly}$, since the $\QMA(2)=\QMA(\poly(n))$ result of Ref.~\cite{harrow2013testing} relies on being able to boost the completeness parameter $c$ to be close to $1$, which can be done by error reduction (if there is an advice state which works, then one is allowed to use as many copies as one wants, since it still satisfies the criterion that it only depends on the size of the input). One can also define $\QCMA\slash\mathsf{qpoly}$-, $\QCMA\slash\mathsf{poly}$- and $\QCMA\slash\mathsf{poly}$- and $\QMA\slash\mathsf{poly}$-testers by modifying the promises such that they hold with respect to basis states on either the proof, the advice or both. All classes of testers discussed so far are contained in $\QMA(2)\slash\mathsf{qpoly}$.

There is one more class we would like to introduce, which is fundamentally different from all classes of testers discussed so far in the sense that it allows for an exponentially small gap between the completeness and soundness parameters.

\begin{definition}[$\SBQP$-tester]
Let $\mathcal{P} = (\mathcal{P}_\textup{yes},\mathcal{P}_\textup{no})$ be a unitary property on $n$ qubits. We say $\mathcal{P}$ has a $\SBQP$-tester if there exists a quantum algorithm and a polynomial $p(n)$ such that the following holds:
\begin{itemize}
   \item If $U \in \mathcal{P}_\textup{yes}$, the quantum algorithm makes queries to $U$ and accepts with probability $\geq 2^{-p(n)}$. 
    \item If $U \in \mathcal{P}_\textup{no}$, the quantum algorithm makes queries to $U$ and accepts with probability $\leq 2^{-p(n)-1}$.
\end{itemize}
\label{def:SBQP}
\end{definition}

\subsection{Quantum information}
In this work, all quantum systems will be finite-dimensional. For two $d$-dimensional quantum states described by density operators $\rho_1,\rho_2 \in \mathcal{D}(d)$, the trace distance is given by
\begin{align}
    \frac{1}{2}\tnorm{\rho_1-\rho_2} = \frac{1}{2} \tr\left[\sqrt{(\rho_1-\rho_2)^\dagger(\rho_1-\rho_2)}\right].
\end{align}
Equivalently, the trace distance can be defined in a variational form as
\begin{align}
    \frac{1}{2}\tnorm{\rho_1-\rho_2} = \sup_{0 \leq P \leq \mathbb{I}} \tr\left[P(\rho_1-\rho_2)\right],
\end{align}
where $P$ is a positive operator. Hence, the trace distance characterizes the largest possible bias with which one can distinguish between states $\rho_1$ and $\rho_2$. A quantum channel $\mathcal{E}$ is a completely positive trace-preserving map between two density operators $\rho_1 \in \mathcal{D}(d_1)$ and  $\rho_1 \in \mathcal{D}(d_2)$, where the dimensions $d_1$ and $d_2$ do not have to be the same. A special type of quantum channel is a unitary channel, which is a channel which applies to a $d$-dimensional density matrix $\rho$ a mapping 
\begin{align}
   \mathcal{U}(U) : \rho \rightarrow U \rho U^\dagger,
   \label{eq:def_unitary_channel}
\end{align}
for some unitary $U \in \mathbb{U}(d)$. Given a unitary $U \in \mathbb{U}(d)$, we will also write $\mathcal{U}(U)$ for the unitary channel implementing a specified $U$ as per~\cref{eq:def_unitary_channel}.  The distance between two quantum channels can be characterized using the so-called diamond distance, which is induced by the diamond norm.

\begin{definition}[Diamond distance for quantum channels] Let $d_a,d_b$ be the dimensions of two finite-dimensional Hilbert spaces. The diamond-norm distance, denoted as $\frac{1}{2}\dnorm{\mathcal{E}_1 - \mathcal{E}_2}$, between two quantum channels $\mathcal{E}_1,\mathcal{E}_2 : \mathcal{D}(d_a) \rightarrow \mathcal{D}(d_b)$ is given by 
\begin{align*}
     \frac{1}{2 }\dnorm{\mathcal{E}_1 - \mathcal{E}_2} = \frac{1}{2} \max_{\tr(\rho) = 1, \rho \succeq 0} \tnorm{(\mathcal{E} \otimes \mathbb{I}_{d_a}) (\rho) - (\mathcal{E}_2 \otimes  \mathbb{I}_{d_a}) (\rho)}.
\end{align*}
If $\mathcal{E}_1$ and $\mathcal{E}_2$ are the corresponding channels of unitaries $U_1$ and $U_2$, respectively, then
\begin{align*}
    \frac{1}{2} \dnorm{\mathcal{U}(U_1)-\mathcal{U}(U_2)} = \frac{1}{2} \max_{\tr(\rho) = 1, \rho \succeq 0} \tnorm{ U_1 \rho U_1^\dagger -  U_2\rho U_2^\dagger }.
\end{align*}
\label{def:diamond_distance_channels}
\end{definition}
The diamond distance between two channels $\mathcal{E}_1$ and $\mathcal{E}_2$ precisely characterises the one-shot distinguishability between the two channels, i.e.~the success probability of correctly identifying a channel (when they are given with equal probability) is given by
\begin{align}
    p_\text{succ} = \frac{1}{2} + \frac{1}{2} \dnorm{\mathcal{E}_1 - \mathcal{E}_2}.
\end{align}
We will need the following two useful properties of the diamond distance, which can be found in (or easily derived from) Propositions 3.44 and 3.48 in Ref.~\cite{watrous_2018}.

\begin{lemma}[Two properties of the diamond distance]
Let $d_a,d_b,d_c \geq 1$ be arbitrary dimensions of the complex Hilbert spaces. The diamond distance satisfies the following two properties:
\begin{enumerate}
    \item Unitary invariance. For all quantum channels $\mathcal{E}_1$,$\mathcal{E}_2 : \mathcal{D}(d_a) \rightarrow \mathcal{D}(d_b)$, for all $U \in \mathbb{U}(d_a)$ and $V \in \mathbb{U}(d_B)$ we have 
    \begin{align*}
        \dnorm{\mathcal{E}_1-\mathcal{E}_2  } = \dnorm{\mathcal{E}^{'}_1-\mathcal{E}^{'}_2  },
    \end{align*}
    where $\mathcal{E}^{'}_i = V \mathcal{E}_i (U \rho U^\dagger) V^\dagger  $ for a $\rho \in \mathcal{D}(d_a)$ , for $i \in \{1,2\}$.
    \item Hybrid argument. For all quantum channels $\mathcal{E}_1,\mathcal{E}^{'}_1 : \mathcal{D}(d_a) \rightarrow \mathcal{D}(d_b)$, $\mathcal{E}_2, \mathcal{E}^{'}_2: \mathcal{D}(d_b) \rightarrow \mathcal{D}(d_c)$, we have that
    \begin{align*}
        \dnorm{\mathcal{E}_1\mathcal{E}_2-\mathcal{E}^{'}_1 \mathcal{E}^{'}_2  } \leq + \dnorm{\mathcal{E}_1-\mathcal{E}^{'}_1 } + \dnorm{\mathcal{E}_2 -\mathcal{E}^{'}_2  }.
    \end{align*}
\end{enumerate}
\label{lem:prop_diamondnorm}
\end{lemma}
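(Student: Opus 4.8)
The plan is to establish the two properties separately, each by reducing to an elementary invariance or submultiplicativity fact, so that neither requires the full apparatus of Propositions~3.44 and~3.48 of Ref.~\cite{watrous_2018}. For the first property (unitary invariance) I would work straight from the variational form in \cref{def:diamond_distance_channels}, which writes $\frac12\dnorm{\mathcal{E}_1-\mathcal{E}_2}$ as a maximum over density operators $\rho$ of $\frac12\tnorm{(\mathcal{E}_1\otimes\mathbb{I})(\rho)-(\mathcal{E}_2\otimes\mathbb{I})(\rho)}$. Writing the primed channels as $\mathcal{E}'_i=\mathcal{V}\circ\mathcal{E}_i\circ\mathcal{U}$ with $\mathcal{U}(\cdot)=U(\cdot)U^\dagger$ and $\mathcal{V}(\cdot)=V(\cdot)V^\dagger$, the post-composition factors out of the difference pointwise:
\[
(\mathcal{E}'_1\otimes\mathbb{I})(\rho)-(\mathcal{E}'_2\otimes\mathbb{I})(\rho) = (V\otimes\mathbb{I})\bigl[(\mathcal{E}_1\otimes\mathbb{I})(\sigma)-(\mathcal{E}_2\otimes\mathbb{I})(\sigma)\bigr](V\otimes\mathbb{I})^\dagger,
\]
where $\sigma=(U\otimes\mathbb{I})\rho(U\otimes\mathbb{I})^\dagger$. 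Two observations then close the argument: the trace norm is invariant under conjugation by the unitary $V\otimes\mathbb{I}$, so the $V$-dependence drops out of every term; and as $\rho$ ranges over all density operators, $\sigma$ ranges over exactly the same set, since conjugation by $U\otimes\mathbb{I}$ is a bijection of the density operators. Hence the maximum defining $\dnorm{\mathcal{E}'_1-\mathcal{E}'_2}$ coincides with that defining $\dnorm{\mathcal{E}_1-\mathcal{E}_2}$.

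For the second property (the hybrid argument) I would use a telescoping decomposition. Fixing the convention that $\mathcal{E}_1\mathcal{E}_2$ means ``first $\mathcal{E}_1$, then $\mathcal{E}_2$'' (consistent with the stated dimensions $d_a\to d_b\to d_c$), I insert an intermediate term:
\[
\mathcal{E}_1\mathcal{E}_2 - \mathcal{E}'_1\mathcal{E}'_2 = \mathcal{E}_1(\mathcal{E}_2-\mathcal{E}'_2) + (\mathcal{E}_1-\mathcal{E}'_1)\mathcal{E}'_2 .
\]
Applying the triangle inequality for the diamond norm followed by submultiplicativity $\dnorm{\Phi\Psi}\le\dnorm{\Phi}\dnorm{\Psi}$ bounds the two terms by $\dnorm{\mathcal{E}_1}\,\dnorm{\mathcal{E}_2-\mathcal{E}'_2}$ and $\dnorm{\mathcal{E}_1-\mathcal{E}'_1}\,\dnorm{\mathcal{E}'_2}$. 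The last step is to observe that $\mathcal{E}_1$ and $\mathcal{E}'_2$ are genuine quantum channels (completely positive and trace preserving), for which the diamond norm equals $1$; equivalently, composing with a channel is contractive in diamond distance. The prefactors therefore vanish, leaving exactly $\dnorm{\mathcal{E}_1-\mathcal{E}'_1}+\dnorm{\mathcal{E}_2-\mathcal{E}'_2}$, as claimed.

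The work here is almost entirely bookkeeping rather than mathematics, and that is precisely where I expect the only real care to be needed: tracking the composition order against the dimension flow $d_a\to d_b\to d_c$, and checking that the auxiliary reference system stays fixed at dimension $d_a$ throughout (pre- and post-composition by unitaries leave the \emph{input} dimension unchanged, so the single reference of \cref{def:diamond_distance_channels} suffices in both parts). With those conventions pinned down, the first property is a one-line relabelling of the optimisation variable plus trace-norm unitary invariance, and the second is a two-line telescoping estimate plus the fact that channels have unit diamond norm---both of which are immediate from, or stated in, Ref.~\cite{watrous_2018}.
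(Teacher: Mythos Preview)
Your proof is correct. Note, however, that the paper does not actually prove this lemma: it simply states the two properties and defers to Propositions~3.44 and~3.48 of Ref.~\cite{watrous_2018}, remarking that they ``can be found in (or easily derived from)'' those results. Your argument---trace-norm unitary invariance plus a bijective change of variable for part~1, and a telescoping identity with submultiplicativity and the fact that channels have unit diamond norm for part~2---is exactly the standard derivation one would extract from that reference, so there is nothing to compare beyond observing that you have written out what the paper leaves implicit. One small caveat: your parenthetical that ``the auxiliary reference system stays fixed at dimension $d_a$ throughout'' is accurate for part~1 but not literally for part~2, since $\dnorm{\mathcal{E}_2-\mathcal{E}'_2}$ has input space $\mathcal{D}(d_b)$ and hence reference dimension $d_b$; this does not affect the argument, as submultiplicativity holds for the diamond norm irrespective of the individual reference dimensions.
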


\section{Lower bounds by unitary channel discrimination}
\label{sec:technique}
In this section we will prove~\cref{thm:lb_technique}. First, we will set up some more preliminaries regarding unitary channel discrimination. For unitary channels, the definition of diamond norm allows for a more easily computable expression, as shown in the following lemma.
\begin{lemma}[Adapted from Ref.~\cite{ziman2010single}] Let $U_1,U_2 \in \mathbb{U}(d)$ such that $0 \notin \conv(\eig(U_1^\dagger U_2))$. Then
\begin{align}
    \frac{1}{2}\dnorm{\mathcal{U}(U_1) - \mathcal{U}(U_2)} = \sqrt{1-D^2},
\end{align}
with 
\begin{align}
    D = \frac{1}{2} \min_{k,l} \abs{e^{i \theta_k} + e^{i \theta_l}} 
\end{align}
where $ e^{i \theta_k},e^{i \theta_l} \in \eig(U_1^\dagger U_2)$.
\label{lem:norm_conv1}
\end{lemma}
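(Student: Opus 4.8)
The plan is to reduce the general two-unitary case to an ``identity versus $W$'' comparison and then to a short planar-geometry computation. Set $W := U_1^\dagger U_2 \in \mathbb{U}(d)$. Applying the unitary-invariance property of \cref{lem:prop_diamondnorm} with post-unitary $V = U_1^\dagger$ (and trivial pre-unitary) turns each channel $\mathcal{U}(U_i)$ into $\rho \mapsto (U_1^\dagger U_i)\rho(U_1^\dagger U_i)^\dagger$, so that
\[
  \tfrac{1}{2}\dnorm{\mathcal{U}(U_1) - \mathcal{U}(U_2)} = \tfrac{1}{2}\dnorm{\mathcal{U}(\mathbb{I}) - \mathcal{U}(W)} .
\]
To evaluate the right-hand side I would unfold \cref{def:diamond_distance_channels}: the map $\rho \mapsto \tnorm{((\mathcal{U}(\mathbb{I}) - \mathcal{U}(W))\otimes\mathbb{I}_d)(\rho)}$ is convex, so its maximum is attained at a pure state $\ketbra{\psi}$ with $\ket{\psi}\in\mathbb{C}^d\otimes\mathbb{C}^d$; then $(\mathcal{U}(\mathbb{I})\otimes\mathbb{I}_d)(\ketbra{\psi}) = \ketbra{\psi}$ and $(\mathcal{U}(W)\otimes\mathbb{I}_d)(\ketbra{\psi})$ are both pure, with overlap $\bra{\psi}(W\otimes\mathbb{I}_d)\ket{\psi}$. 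Plugging the pure-state identity $\tfrac12\tnorm{\ketbra{\phi_1}-\ketbra{\phi_2}} = \sqrt{1-\abs{\braket{\phi_1}{\phi_2}}^2}$ into the maximisation gives
\[
  \tfrac{1}{2}\dnorm{\mathcal{U}(\mathbb{I}) - \mathcal{U}(W)} = \sqrt{1 - \min_{\ket{\psi}}\abs{\bra{\psi}(W\otimes\mathbb{I}_d)\ket{\psi}}^2},
\]
so the problem reduces to computing $\min_{\ket{\psi}}\abs{\bra{\psi}(W\otimes\mathbb{I}_d)\ket{\psi}}$.

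The next step identifies this minimum with a distance to a convex hull. The set $\{\bra{\psi}(W\otimes\mathbb{I}_d)\ket{\psi} : \braket{\psi}{\psi}=1\}$ is the numerical range of $W\otimes\mathbb{I}_d$, and since this operator is normal with the same set of eigenvalues as $W$, the classical fact that the numerical range of a normal matrix is the convex hull of its eigenvalues gives that it equals $\conv(\eig(W\otimes\mathbb{I}_d)) = \conv(\eig(W))$. Hence
\[
  \min_{\ket{\psi}}\abs{\bra{\psi}(W\otimes\mathbb{I}_d)\ket{\psi}} = \operatorname{dist}\big(0,\conv(\eig(W))\big),
\]
the Euclidean distance from the origin to that convex hull; in particular the ancilla plays no role here, since tensoring with the identity does not change the numerical range.

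It remains to compute $\operatorname{dist}(0,P)$ for $P := \conv(\eig(W)) = \conv(\{e^{i\theta_1},\dots,e^{i\theta_m}\})$, using the hypothesis $0 \notin P$. For any indices $k,l$ the endpoints $e^{i\theta_k},e^{i\theta_l}$ are equidistant from $0$, so the foot of the perpendicular from $0$ onto the line they span is their midpoint $\tfrac12(e^{i\theta_k}+e^{i\theta_l})$, which always lies on the segment between them; thus $\operatorname{dist}(0,[e^{i\theta_k},e^{i\theta_l}]) = \tfrac12\abs{e^{i\theta_k}+e^{i\theta_l}}$. Since each such segment is contained in $P$, this is $\ge \operatorname{dist}(0,P)$, and minimising over $(k,l)$ gives $D \ge \operatorname{dist}(0,P)$. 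Conversely, as $0 \notin P$ the point of $P$ closest to $0$ lies on an edge of the polygon $P$, i.e.\ on a segment joining two eigenvalues, so $\operatorname{dist}(0,P) = \tfrac12\abs{e^{i\theta_{k_0}}+e^{i\theta_{l_0}}} \ge D$ for the endpoints of that edge. Combining the two inequalities, $D = \operatorname{dist}(0,\conv(\eig(W)))$, and chaining the three displayed equalities yields $\tfrac12\dnorm{\mathcal{U}(U_1)-\mathcal{U}(U_2)} = \sqrt{1-D^2}$.

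I expect the last step to be the delicate one: the formula minimises over \emph{all} pairs $(k,l)$, not just over polygon edges, and the clean way to reconcile this is the two-sided sandwich above, which avoids having to identify the nearest edge explicitly (it is in fact the chord between the two ``extreme'' eigenvalues, since $0\notin P$ confines all eigenvalues to an arc of angular length $<\pi$). It is also worth noting that the hypothesis $0\notin\conv(\eig(U_1^\dagger U_2))$ is exactly what ensures $0 < D \le 1$, so that $\sqrt{1-D^2}$ is well defined; when it fails one has $D = 0$, the two channels are perfectly distinguishable, and that regime lies outside the scope of the lemma.
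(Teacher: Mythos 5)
The paper never proves \cref{lem:norm_conv1} --- it is imported verbatim (``adapted from'') the cited reference on single-shot unitary discrimination --- so there is no in-paper proof to compare against. Your argument is a correct, self-contained derivation and follows the standard route that the cited source takes: reduce to $\mathbb{I}$ versus $W=U_1^\dagger U_2$ by unitary invariance, note that the diamond-norm maximisation is attained on a pure state so the problem becomes minimising $\abs{\bra{\psi}(W\otimes\mathbb{I})\ket{\psi}}$, identify that minimum with the distance from the origin to the numerical range $\conv(\eig(W))$ of the normal operator $W$, and finish with the planar observation that the distance from the origin to a chord of the unit circle is the modulus of its midpoint. The only place worth tightening is the final ``closest point lies on an edge'' step: if the closest point of $\conv(\eig(W))$ were a vertex, it would sit at distance $1$, which the sandwich already forces to coincide with $D$ only in the degenerate case where all eigenvalues coincide; your two-sided inequality covers this, so no gap remains, but stating that case explicitly would make the geometry airtight.
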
 
The quantity $D$ has a nice geometrical interpretation: if $0 \notin \conv(\eig(U_1^\dagger U_2))$, then it is precisely the distance between the convex hull of the eigenvalues of $U_1^\dagger U_2$ and the origin. To determine whether $0 \notin \conv(\eig(U_1^\dagger U_2))$, it is useful to consider the \textit{spectral arc-length} of $U$, given by the following definition.
\begin{definition}[Spectral arc-length] The spectral arc-length $\Theta(U)$ of a unitary $U \in \mathbb{U}(d)$ is given by the length of the shortest compact interval $I \subset \mathbb{R}$ such that the corresponding segment $e^{i I}$ contains the spectrum of $U$. 
\label{def:sal}
\end{definition}
\noindent It is easy to show that $0 \notin \conv(\eig(U)$ if and only if $\Theta(U) < \pi$~\cite{wolf2022mathematical} (see also~\cref{fig:complex_circle}). 

There are more ways to characterise the diamond distance for unitaries, which might be more convenient for some choices of unitaries as we try to prove lower bounds later down the road. Since we are interested in the asymptotic scaling of our lower bounds, it will sometimes be convenient to use the (global-phase shifted) operator norm difference between the two unitaries, as it is equivalent to the diamond norm up to a (inverse) factor of (at most) two. 

\begin{lemma}[[Adapted from~\cite{haah2023query}] Let $U_1,U_2 \in \mathbb{U}(d)$  be unitaries. Then we have that
\begin{align*}
     \frac{1}{2} \dnorm{\mathcal{U}(U_1)-\mathcal{U}(U_2)} \leq  \min_{\phi} \opnorm{e^{i \phi} U_1 - U_2} \leq \dnorm{\mathcal{U}(U_1)-\mathcal{U}(U_2)}.
\end{align*}
\label{lem:norm_conv2}
\end{lemma}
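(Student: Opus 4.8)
The plan is to prove the two inequalities in \cref{lem:norm_conv2} separately, in each case relating the diamond norm of the difference of unitary channels to the global-phase-optimized operator norm distance $\min_\phi \opnorm{e^{i\phi}U_1 - U_2}$.

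\textbf{Lower bound on $\min_\phi \opnorm{e^{i\phi}U_1 - U_2}$.} First I would show $\frac{1}{2}\dnorm{\mathcal{U}(U_1)-\mathcal{U}(U_2)} \leq \min_\phi \opnorm{e^{i\phi}U_1 - U_2}$. Since a global phase on $U_1$ does not change the induced channel $\mathcal{U}(U_1)$, for any fixed $\phi$ we have $\dnorm{\mathcal{U}(U_1)-\mathcal{U}(U_2)} = \dnorm{\mathcal{U}(e^{i\phi}U_1)-\mathcal{U}(U_2)}$. Then I would invoke the standard fact (essentially a sub-multiplicativity/stability estimate for the diamond norm, e.g.\ the continuity bound relating the diamond distance of two unitary channels to the operator distance of the unitaries — see Ref.~\cite{watrous_2018}) that $\dnorm{\mathcal{U}(V_1)-\mathcal{U}(V_2)} \leq 2\opnorm{V_1 - V_2}$ for any unitaries $V_1,V_2$. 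Applying this with $V_1 = e^{i\phi}U_1$, $V_2 = U_2$ and then minimizing over $\phi$ gives the claimed inequality. Concretely, the inequality $\dnorm{\mathcal{U}(V_1)-\mathcal{U}(V_2)}\le 2\opnorm{V_1-V_2}$ follows by writing $V_1\rho V_1^\dagger - V_2\rho V_2^\dagger = (V_1-V_2)\rho V_1^\dagger + V_2\rho (V_1-V_2)^\dagger$, taking trace norms, using $\tnorm{AB}\le \opnorm{A}\tnorm{B}$ and that $\tnorm{\rho}=1$ is preserved by tensoring with an identity channel and appending an ancilla.

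\textbf{Upper bound on $\min_\phi \opnorm{e^{i\phi}U_1 - U_2}$.} For the other direction, $\min_\phi \opnorm{e^{i\phi}U_1 - U_2} \leq \dnorm{\mathcal{U}(U_1)-\mathcal{U}(U_2)}$, the natural route is via the explicit spectral characterizations. By unitary invariance of both quantities I can set $W := U_1^\dagger U_2$ and reduce to comparing $\min_\phi \opnorm{e^{i\phi}\mathbb{I} - W}$ with $\dnorm{\mathcal{U}(\mathbb{I})-\mathcal{U}(W)}$. Diagonalizing $W$ with eigenvalues $e^{i\theta_j}$, one has $\min_\phi \opnorm{e^{i\phi}\mathbb{I}-W} = \min_\phi \max_j |e^{i\phi}-e^{i\theta_j}| = 2\sin(\Theta(W)/4)$ where $\Theta(W)$ is the spectral arc-length of \cref{def:sal} (the optimal $\phi$ bisects the shortest arc containing the spectrum). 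On the other side, the known closed form for the diamond distance of unitary channels — in the regime $0\notin\conv(\eig(W))$ this is \cref{lem:norm_conv1}, and in the complementary regime the diamond distance equals $2$ (equivalently $\sqrt{1-D^2}=1$ with $D=0$) — can be compared term by term with $2\sin(\Theta(W)/4)$. The cleanest way is probably to bound the operator-norm quantity directly by a suitable two-eigenvalue configuration and match it against $D$, using the half-angle identity $|e^{i\theta_k}+e^{i\theta_l}| = 2|\cos((\theta_k-\theta_l)/2)|$ so that $D = \cos(\Theta(W)/2)$ when $\Theta(W)<\pi$, whence $\sqrt{1-D^2} = \sin(\Theta(W)/2) \geq 2\sin(\Theta(W)/4)\cos(\Theta(W)/4) \cdot(\text{bookkeeping})$ — so I would show $2\sin(\Theta(W)/4) \le \sin(\Theta(W)/2)$, which holds since $\cos(\Theta(W)/4)\ge \tfrac12$ for $\Theta(W)\le \pi$; and in the regime $\Theta(W)\ge\pi$ both the left side is at most $2$ and the diamond distance is exactly $2$.

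\textbf{Main obstacle.} The subtle point is handling the case $0 \in \conv(\eig(U_1^\dagger U_2))$ (i.e.\ $\Theta(U_1^\dagger U_2)\ge\pi$), which is outside the hypothesis of \cref{lem:norm_conv1} but is still covered by the present lemma, since here $U_1,U_2$ are arbitrary. In that regime I would argue that $\dnorm{\mathcal{U}(U_1)-\mathcal{U}(U_2)} = 2$ (its maximal value, achievable because some convex combination of eigenvalues of $U_1^\dagger U_2$ vanishes, making the channels perfectly distinguishable), so the desired inequality $\min_\phi\opnorm{e^{i\phi}U_1-U_2}\le 2$ is automatic as every operator norm of a difference of unitaries is at most $2$. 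So the real care is just in (i) citing the exact right form of the diamond-norm continuity bound and (ii) the half-angle trigonometric comparison; both are routine once set up, and the "obstacle" is mostly making sure the two regimes are stitched together cleanly rather than any deep difficulty.
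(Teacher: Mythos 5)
The paper does not actually prove \cref{lem:norm_conv2}: it imports the statement from Ref.~\cite{haah2023query} without argument, so there is no in-paper proof to compare against. Your self-contained derivation follows the natural route and is essentially correct. The first inequality, via phase-invariance of the induced channel, the decomposition $V_1\rho V_1^\dagger - V_2\rho V_2^\dagger = (V_1-V_2)\rho V_1^\dagger + V_2\rho(V_1-V_2)^\dagger$, H\"older-type bounds, and minimisation over $\phi$, is exactly the standard proof. For the second inequality, reducing by unitary invariance to $W=U_1^\dagger U_2$, identifying $\min_\phi\opnorm{e^{i\phi}\mathbb{I}-W}=2\sin\left(\Theta(W)/4\right)$ and, via \cref{lem:norm_conv1} with $D=\cos\left(\Theta(W)/2\right)$, $\dnorm{\mathcal{U}(\mathbb{I})-\mathcal{U}(W)}=2\sin\left(\Theta(W)/2\right)$, and then disposing of the regime $\Theta(W)\ge\pi$ by perfect single-shot distinguishability (so the diamond distance saturates at $2$ while any difference of unitaries has operator norm at most $2$) is the right way to stitch the two regimes together.

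One slip to fix: the target inequality you state, $2\sin\left(\Theta(W)/4\right)\le\sin\left(\Theta(W)/2\right)$, is false for every $\Theta(W)\in(0,\pi]$, since the double-angle identity gives $\sin(\Theta/2)=2\sin(\Theta/4)\cos(\Theta/4)<2\sin(\Theta/4)$. What the lemma actually requires is $\min_\phi\opnorm{e^{i\phi}\mathbb{I}-W}\le\dnorm{\mathcal{U}(\mathbb{I})-\mathcal{U}(W)}$, i.e.\ $2\sin(\Theta/4)\le 2\sqrt{1-D^2}=2\sin(\Theta/2)$, which reduces to $\sin(\Theta/4)\le\sin(\Theta/2)$ and does follow from your own observation that $\cos(\Theta/4)\ge\tfrac12$ on $[0,\pi]$. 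It appears you momentarily compared the operator-norm quantity against the normalised distance $\tfrac12\dnorm{\cdot}=\sqrt{1-D^2}$ rather than against $\dnorm{\cdot}$ itself; once that factor of two is restored, the argument closes and the proof is complete.
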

As~\cref{lem:norm_conv2} involves a minimisation over a global phase factor $e^{i \phi}$, using the vanilla operator norm distance between two unitaries $U_1$ and $U_2$ would also yield an upper bound. Moreover, one can also show that a sufficient upper bound in operator norm implies the condition that $0 \notin \conv(\eig(U)$ is immediately satisfied, as illustrated by the following lemma.

\begin{lemma} Let $U_1,U_2 \in \mathbb{U}(d)$  be unitaries for which $\opnorm{ U_1 - U_2} \leq 2\sin(\frac{\pi-\delta}{4})$ for some small $0 < \delta <\pi$. Then it holds  that $0 \notin \conv(\eig(U_1^\dagger U_2))$. For $\delta = 0.04$, we have $2\sin(\frac{\pi-\delta}{4}) = 1.4$.
\label{lem:op_distance_convhull}
\end{lemma}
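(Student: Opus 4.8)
The plan is to reduce everything to the spectral arc-length criterion from Definition~\ref{def:sal} together with the fact, already quoted, that $0 \notin \conv(\eig(U))$ iff $\Theta(U) < \pi$. So I want to show that the hypothesis $\opnorm{U_1 - U_2} \leq 2\sin(\frac{\pi-\delta}{4})$ forces $\Theta(U_1^\dagger U_2) < \pi$. First I would use unitary invariance: since $\opnorm{U_1 - U_2} = \opnorm{\mathbb{I} - U_1^\dagger U_2}$, it suffices to study a single unitary $W := U_1^\dagger U_2$ with $\opnorm{\mathbb{I} - W} \leq 2\sin(\frac{\pi-\delta}{4})$ and conclude $\Theta(W) < \pi$. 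Diagonalising $W$, its eigenvalues are $e^{i\theta_j}$ and $\opnorm{\mathbb{I} - W} = \max_j |1 - e^{i\theta_j}| = \max_j 2\sin(|\theta_j|/2)$, taking representatives $\theta_j \in (-\pi,\pi]$.

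Next I would turn the operator-norm bound into a bound on how spread out the phases $\theta_j$ can be. From $2\sin(|\theta_j|/2) \leq 2\sin(\frac{\pi - \delta}{4})$ and monotonicity of $\sin$ on $[0,\pi/2]$, every eigenphase satisfies $|\theta_j| \leq \frac{\pi - \delta}{2}$, so all eigenvalues lie on the arc $\{e^{i\theta} : |\theta| \leq \frac{\pi-\delta}{2}\}$, which has arc-length $\pi - \delta < \pi$. Hence $\Theta(W) \leq \pi - \delta < \pi$, and therefore $0 \notin \conv(\eig(W)) = \conv(\eig(U_1^\dagger U_2))$, which is exactly the claim. The final numerical remark is the trivial check that $2\sin(\frac{\pi - 0.04}{4}) = 1.4$ (to the stated precision), which I would just state.

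The only subtlety — and the place where one has to be slightly careful rather than genuinely stuck — is the choice of branch for the eigenphases: one must pick representatives in $(-\pi,\pi]$ so that $|1 - e^{i\theta_j}| = 2\sin(|\theta_j|/2)$ holds with $|\theta_j|/2 \in [0,\pi/2]$, where $\sin$ is increasing and hence invertible; with any other branch the identity $|1-e^{i\theta}| = 2|\sin(\theta/2)|$ still holds but the monotonicity step needs the reduced representative. Once the phases are reduced into this window the inequality chain is immediate, so I expect no real obstacle — this lemma is essentially a packaging of the geometric picture in Figure~\ref{fig:complex_circle} and the arc-length characterisation, and the proof should be only a few lines.
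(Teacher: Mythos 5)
Your proposal is correct and follows essentially the same route as the paper's own proof: reduce to $\opnorm{\mathbb{I}-U_1^\dagger U_2}$ by unitary invariance, convert the bound on $|1-e^{i\theta_j}|=2|\sin(\theta_j/2)|$ into the constraint that every eigenphase lies in an arc of length $\pi-\delta$, and conclude via the spectral arc-length criterion that $0\notin\conv(\eig(U_1^\dagger U_2))$. The only cosmetic difference is your choice of phase representatives in $(-\pi,\pi]$ versus the paper's $[0,2\pi]$ (which yields two sub-intervals near $0$ and $2\pi$ rather than one symmetric interval), and your explicit handling of the branch/monotonicity point, which the paper leaves implicit.
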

\begin{proof}
By unitary invariance of the operator norm it must hold that
\begin{align*}
    \opnorm{ U_1 - U_2} = \opnorm{\mathbb{I} - U_1^\dagger U_2} = \max_{\theta_j : e^{i \theta_j} \in \eig(U_1^\dagger U_2)} \abs{1-e^{i \theta_j}}.
\end{align*}
Since all $e^{i\theta_j}$ are points on the unit circle in the complex plane, we can assume w.l.o.g.~that $\theta_j \in [0,2\pi]$ for all $j \in [d]$. Thus, we get the condition
\begin{align*}
    \max_{\theta_j : e^{i \theta_j} \in \eig(U_1^\dagger U_2)} 2 \abs{\sin \left(\frac{\theta_j}{2}\right)} \leq 2\sin(\frac{\pi-\delta}{4}),
\end{align*}
which means that $0 \leq \theta_j \leq (\pi-\delta)/2$ or $2 \pi - (\pi-\delta)/2 \leq \theta_j \leq 2\pi$ for all $j \in [d]$. Hence, all eigenvalues of $U^\dagger_1 U_2$ must lie within the segment $e^{i I}$ on the unit circle in the complex plane given by the bounded and closed interval $I = [-(\pi-\delta)/2,(\pi-\delta)/2]$, which has length $\pi-\delta$ and is compact by the Heine-Borel theorem. Therefore,~\cref{def:sal} combined with the assumption $0<\delta < \pi $ implies that $\Theta(U_1^\dagger U_2)  < \pi$, which means that $0 \notin \conv(\eig(U_1^\dagger U_2))$.
\end{proof}
\begin{figure}
    \centering
    \tikzset{every picture/.style={line width=0.75pt}} %set default line width to 0.75pt        

\begin{tikzpicture}[x=0.75pt,y=0.75pt,yscale=-0.8,xscale=0.8]
%uncomment if require: \path (0,300); %set diagram left start at 0, and has height of 300

%Shape: Circle [id:dp32861371043702714] 
\draw  [color={rgb, 255:red, 155; green, 155; blue, 155 }  ,draw opacity=1 ] (148,135.75) .. controls (148,78.31) and (194.56,31.75) .. (252,31.75) .. controls (309.44,31.75) and (356,78.31) .. (356,135.75) .. controls (356,193.19) and (309.44,239.75) .. (252,239.75) .. controls (194.56,239.75) and (148,193.19) .. (148,135.75) -- cycle ;
%Straight Lines [id:da026751069035473085] 
\draw [color={rgb, 255:red, 155; green, 155; blue, 155 }  ,draw opacity=1 ][line width=1.5]    (133,135.75) -- (379,135.75) ;
\draw [shift={(383,135.75)}, rotate = 180] [fill={rgb, 255:red, 155; green, 155; blue, 155 }  ,fill opacity=1 ][line width=0.08]  [draw opacity=0] (6.97,-3.35) -- (0,0) -- (6.97,3.35) -- cycle    ;
%Straight Lines [id:da7355754352018816] 
\draw [color={rgb, 255:red, 155; green, 155; blue, 155 }  ,draw opacity=1 ][line width=1.5]    (253,256.75) -- (253,16.75) ;
\draw [shift={(253,12.75)}, rotate = 90] [fill={rgb, 255:red, 155; green, 155; blue, 155 }  ,fill opacity=1 ][line width=0.08]  [draw opacity=0] (6.97,-3.35) -- (0,0) -- (6.97,3.35) -- cycle    ;
%Shape: Polygon [id:ds2311041012032411] 
\draw  [fill={rgb, 255:red, 74; green, 144; blue, 226 }  ,fill opacity=0.27 ] (261,32.25) -- (338.56,79.22) -- (355.44,126.56) -- (334.33,198.83) -- (320,214.3) -- cycle ;
%Shape: Circle [id:dp05055211551184713] 
\draw  [draw opacity=0][fill={rgb, 255:red, 74; green, 144; blue, 226 }  ,fill opacity=1 ] (258,32.25) .. controls (258,30.59) and (259.34,29.25) .. (261,29.25) .. controls (262.66,29.25) and (264,30.59) .. (264,32.25) .. controls (264,33.91) and (262.66,35.25) .. (261,35.25) .. controls (259.34,35.25) and (258,33.91) .. (258,32.25) -- cycle ;
%Shape: Circle [id:dp5734853840023432] 
\draw  [draw opacity=0][fill={rgb, 255:red, 74; green, 144; blue, 226 }  ,fill opacity=1 ] (335.56,79.22) .. controls (335.56,77.57) and (336.9,76.22) .. (338.56,76.22) .. controls (340.21,76.22) and (341.56,77.57) .. (341.56,79.22) .. controls (341.56,80.88) and (340.21,82.22) .. (338.56,82.22) .. controls (336.9,82.22) and (335.56,80.88) .. (335.56,79.22) -- cycle ;
%Shape: Circle [id:dp010034544911477705] 
\draw  [draw opacity=0][fill={rgb, 255:red, 74; green, 144; blue, 226 }  ,fill opacity=1 ] (352.44,126.56) .. controls (352.44,124.9) and (353.79,123.56) .. (355.44,123.56) .. controls (357.1,123.56) and (358.44,124.9) .. (358.44,126.56) .. controls (358.44,128.21) and (357.1,129.56) .. (355.44,129.56) .. controls (353.79,129.56) and (352.44,128.21) .. (352.44,126.56) -- cycle ;
%Shape: Circle [id:dp3313103018172089] 
\draw  [draw opacity=0][fill={rgb, 255:red, 74; green, 144; blue, 226 }  ,fill opacity=1 ] (331.33,198.83) .. controls (331.33,197.18) and (332.68,195.83) .. (334.33,195.83) .. controls (335.99,195.83) and (337.33,197.18) .. (337.33,198.83) .. controls (337.33,200.49) and (335.99,201.83) .. (334.33,201.83) .. controls (332.68,201.83) and (331.33,200.49) .. (331.33,198.83) -- cycle ;
%Shape: Circle [id:dp9311492935931723] 
\draw  [draw opacity=0][fill={rgb, 255:red, 74; green, 144; blue, 226 }  ,fill opacity=1 ] (317,214.3) .. controls (317,212.64) and (318.34,211.3) .. (320,211.3) .. controls (321.66,211.3) and (323,212.64) .. (323,214.3) .. controls (323,215.96) and (321.66,217.3) .. (320,217.3) .. controls (318.34,217.3) and (317,215.96) .. (317,214.3) -- cycle ;
%Straight Lines [id:da7081953218073991] 
\draw    (255.81,133.71) -- (286.99,122.14) ;
\draw [shift={(289.8,121.1)}, rotate = 159.65] [fill={rgb, 255:red, 0; green, 0; blue, 0 }  ][line width=0.08]  [draw opacity=0] (3.57,-1.72) -- (0,0) -- (3.57,1.72) -- cycle    ;
\draw [shift={(253,134.75)}, rotate = 339.65] [fill={rgb, 255:red, 0; green, 0; blue, 0 }  ][line width=0.08]  [draw opacity=0] (3.57,-1.72) -- (0,0) -- (3.57,1.72) -- cycle    ;
%Shape: Arc [id:dp12924388241515194] 
\draw  [draw opacity=0][dash pattern={on 5.63pt off 4.5pt}][line width=1.5]  (263.2,23.51) .. controls (302.38,27.46) and (338.35,51.84) .. (355.29,90.47) .. controls (375.64,136.9) and (362.22,189.6) .. (325.77,221.09) -- (252,135.75) -- cycle ; \draw  [color={rgb, 255:red, 208; green, 2; blue, 27 }  ,draw opacity=1 ][dash pattern={on 5.63pt off 4.5pt}][line width=1.5]  (263.2,23.51) .. controls (302.38,27.46) and (338.35,51.84) .. (355.29,90.47) .. controls (375.64,136.9) and (362.22,189.6) .. (325.77,221.09) ;  
%Straight Lines [id:da45397725183634274] 
\draw [color={rgb, 255:red, 208; green, 2; blue, 27 }  ,draw opacity=1 ][fill={rgb, 255:red, 208; green, 2; blue, 27 }  ,fill opacity=1 ][line width=1.5]    (323,217.55) -- (328.27,223) ;
%Straight Lines [id:da33601541316166983] 
\draw [color={rgb, 255:red, 208; green, 2; blue, 27 }  ,draw opacity=1 ][fill={rgb, 255:red, 208; green, 2; blue, 27 }  ,fill opacity=1 ][line width=1.5]    (263,20.09) -- (261.73,27.18) ;

% Text Node
\draw (362.5,167.5) node [anchor=north west][inner sep=0.75pt]    {$\textcolor[rgb]{0.82,0.01,0.11}{\Theta ( U)}$};
% Text Node
\draw (259,104.5) node [anchor=north west][inner sep=0.75pt]    {$D$};

\end{tikzpicture}
    \caption{Geometrical interpretation in the complex plane of the spectrum of some $U \in \mathbb{U}(5)$. Since $U$ is unitary, all its eigenvalues (blue dots) lie on the unit circle in the complex plane. Since the spectral arc-length (indicated by the red dashed line) satisfies $\Theta(U) < \pi$, we have that $0$ is not in the convex hull of the eigenvalues of $U$ (blue shaded area). The shortest distance from the origin to the convex hull of the spectrum of $U$ is $D$.}
    \label{fig:complex_circle}
\end{figure}
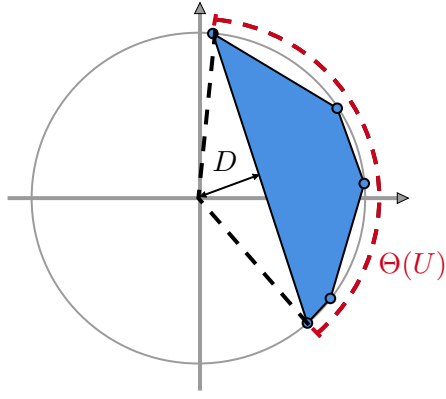

Now that we have established ways to evaluate the diamond distance between two unitaries, we will proceed by showing that adding a well-chosen global phase to one of the unitaries ensures that access to the inverse, controlled access, or a combination of both, does not increase the ability to discriminate between the two. To show this, first observe that the condition of $0$ not being in the convex hull of the eigenvalues of $U_1^\dagger U_2$ is invariant to adding a global phase to one of the unitaries.

\begin{lemma}Let $U_1,U_2 \in \mathbb{U}(d)$ such that $0 \notin \conv(\eig(U_1^\dagger U_2))$. Let $U_2^{\theta} = e^{i \theta} U_2 $. Then for all $\theta \in [0,2\pi)$ we have $0 \notin \conv(\eig(U_1^\dagger U_2^{\theta}))$.
\label{lem:rot_0convhull}
\end{lemma}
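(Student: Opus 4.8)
The plan is to reduce the statement to the elementary observation that multiplication by a unit-modulus scalar is a rotation of the complex plane about the origin, and that such a rotation commutes with forming the convex hull while fixing the point $0$.

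First I would record that $U_1^\dagger U_2^{\theta} = U_1^\dagger\bigl(e^{i\theta} U_2\bigr) = e^{i\theta}\, U_1^\dagger U_2$, so the spectrum transforms rigidly: $\eig(U_1^\dagger U_2^{\theta}) = \{\, e^{i\theta}\lambda : \lambda \in \eig(U_1^\dagger U_2)\,\}$. Next I would invoke that the map $R_\theta : \mathbb{C}\to\mathbb{C}$, $z\mapsto e^{i\theta}z$, is an affine bijection, hence commutes with the convex-hull operation, $\conv(R_\theta(S)) = R_\theta(\conv(S))$ for any finite $S\subset\mathbb{C}$. Since $R_\theta(0)=0$ and $R_\theta$ is injective, $0 \in \conv(\eig(U_1^\dagger U_2^{\theta}))$ holds if and only if $0 \in R_\theta(\conv(\eig(U_1^\dagger U_2)))$, i.e.\ if and only if $0 \in \conv(\eig(U_1^\dagger U_2))$. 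Taking the contrapositive yields precisely the claim for every $\theta \in [0,2\pi)$.

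Alternatively, and perhaps more in keeping with the surrounding exposition, one can route the argument through the spectral arc-length of \cref{def:sal}: a rotation of the spectrum by $\theta$ does not change the length of the shortest closed interval $I$ with $e^{iI}$ containing it, so $\Theta(U_1^\dagger U_2^{\theta}) = \Theta(U_1^\dagger U_2)$, and the equivalence ``$0\notin\conv(\eig(\cdot))$ iff $\Theta(\cdot)<\pi$'' quoted just before \cref{def:sal} transfers the hypothesis directly. I do not anticipate any real obstacle; the only point needing (minor) justification is that the convex hull commutes with the linear map $R_\theta$, which is immediate from its being affine and bijective. The lemma is essentially a bookkeeping step, ensuring that the later freedom to insert a variable global phase $e^{i\theta}$ does not invalidate the hypothesis $0\notin\conv(\eig(U_1^\dagger U_2))$ required to apply \cref{lem:norm_conv1}.
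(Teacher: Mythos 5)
Your proof is correct. It rests on the same underlying fact as the paper's---rotation invariance---but the mechanism is slightly different and arguably cleaner. The paper's proof notes that $\eig(U_1^\dagger U_2^{\theta}) = \{e^{i\theta}e^{i\theta_l}\}$ and then checks that the quantity $D = \tfrac{1}{2}\min_{k,l}\abs{e^{i\theta_k}+e^{i\theta_l}}$ from \cref{lem:norm_conv1} is unchanged when every eigenvalue is multiplied by $e^{i\theta}$; concluding from this that $0 \notin \conv(\eig(U_1^\dagger U_2^{\theta}))$ implicitly uses the geometric identification of $D$ with the distance from the origin to the convex hull. Your primary argument bypasses $D$ entirely: the map $z \mapsto e^{i\theta}z$ is a linear bijection fixing $0$, hence commutes with $\conv(\cdot)$, so membership of $0$ in the convex hull is preserved in both directions. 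This is more elementary and self-contained, since it needs no auxiliary quantity and no appeal to the interpretation of $D$; your alternative route via the spectral arc-length $\Theta(\cdot)$ of \cref{def:sal} is equally valid and matches the surrounding exposition. Either version is a fully adequate replacement for the paper's proof.
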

\begin{proof}
    This follows directly from the invariance of $D$ under rotations, since  $\eig(U_1^\dagger U_2^{\theta}) = \{e^{i\theta} e^{i \theta_l} | e^{i \theta_l} \in \eig(U_1^\dagger U_2)\}$ and 
    \begin{align*}
        \frac{1}{2} \min_{k,l} \abs{e^{i\theta}e^{i \theta_k} + e^{i\theta}e^{i \theta_l}} =  \frac{1}{2} \min_{k,l} \abs{e^{i\theta}}\abs{e^{i \theta_k} + e^{i \theta_l}} =  \frac{1}{2} \min_{k,l}\abs{e^{i \theta_k} + e^{i \theta_l}}
    \end{align*}
    for any $\theta \in [0,2\pi)$.
\end{proof}
\noindent The next lemma proves the aforementioned claim that given some $U_1,U_2$ we can always pick a global phase such that the diamond distance between $U_1$ and $U_2$ precisely characterises the distance in the controlled, inverse, or controlled inverse setting.

\begin{lemma}[Diamond distance for different query types] Let $U_1,U_2 \in \mathbb{U}(d)$ such that $0 \notin \conv(\eig(U_1^\dagger U_2))$, and let $U_2^{\theta} = e^{i \theta} U_2$ for some $\theta \in [0,2\pi)$. Then there exists a choice of $\theta$ such that for all choices $(\tilde{U}_1,\tilde{U}_2^{\theta}) \in \{({U}_1,{U}_2^{\theta}), (U_1^\dagger,U_2^{\theta,\dagger}), (cU_1,cU_2^{\theta}), (cU_1^\dagger,cU_2^{\theta,\dagger}) \}$  we have
\begin{align*}
    \frac{1}{2}\dnorm{\mathcal{U}(\tilde{U}_1) -\mathcal{U}(\tilde{U}_2^{\theta})} = \frac{1}{2} \dnorm{\mathcal{U}(U_1) -\mathcal{U}(U_2^{\theta})}.
\end{align*}
\label{lem:dddqt}
\end{lemma}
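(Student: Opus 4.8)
The plan is to prove \cref{lem:dddqt} by reducing all four query-type cases to the ``vanilla'' case via \cref{lem:norm_conv1} and \cref{lem:rot_0convhull}, using the fact that the diamond distance between unitaries depends only on the eigenvalues of $U_1^\dagger U_2$ (more precisely, on the quantity $D$, the distance from the origin to $\conv(\eig(U_1^\dagger U_2))$). First I would observe that by \cref{lem:norm_conv1}, whenever $0 \notin \conv(\eig(\tilde U_1^\dagger \tilde U_2^\theta))$ we have $\tfrac12\dnorm{\mathcal{U}(\tilde U_1)-\mathcal{U}(\tilde U_2^\theta)} = \sqrt{1-D(\tilde U_1^\dagger \tilde U_2^\theta)^2}$, where $D(W) = \tfrac12\min_{k,l}|e^{i\theta_k}+e^{i\theta_l}|$ over $e^{i\theta_k},e^{i\theta_l}\in\eig(W)$. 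So it suffices to show that for a suitable choice of $\theta$, the spectrum of each of $\tilde U_1^\dagger \tilde U_2^\theta$ yields the same value of $D$ as $U_1^\dagger U_2^\theta$, and in particular that $0\notin\conv(\eig(\tilde U_1^\dagger \tilde U_2^\theta))$ in each case so that the formula applies.

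Next I would handle the four cases in turn. For the inverse case, $U_1^\dagger (U_2^\theta)^\dagger$ — well, more carefully, the relevant product is $(U_1^\dagger)^\dagger (U_2^{\theta,\dagger}) = U_1 e^{-i\theta} U_2^\dagger = e^{-i\theta}U_1 U_2^\dagger$, which is similar (via conjugation by $U_1^\dagger$, or directly) to $e^{-i\theta}U_2^\dagger U_1 = (U_1^\dagger U_2^\theta)^\dagger$ up to the extra phase bookkeeping; in any case its eigenvalues are the complex conjugates $\{e^{-i(\theta+\theta_l)}\}$ of those of $U_1^\dagger U_2^\theta$, and since $D$ is invariant under complex conjugation of all eigenvalues (it only involves $|e^{i\alpha}+e^{i\beta}|$, symmetric under $\alpha,\beta\mapsto-\alpha,-\beta$) and the reflected spectrum has the same distance to the origin, we get the same $D$. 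For the controlled case, $cU_i = \ketbra{0}\otimes\mathbb{I} + \ketbra{1}\otimes U_i$, so $(cU_1)^\dagger(cU_2^\theta) = \ketbra{0}\otimes\mathbb{I} + \ketbra{1}\otimes e^{i\theta}U_1^\dagger U_2$, whose eigenvalue set is $\{1\}\cup\{e^{i\theta}e^{i\theta_l}\}$. The key point: here the eigenvalue $1$ is newly introduced, and this is exactly where the freedom in $\theta$ is needed — we must choose $\theta$ so that $1$ lies ``inside'' the arc spanned by $\{e^{i(\theta+\theta_l)}\}$, i.e. so that adjoining $1$ does not enlarge the convex hull and, in particular, does not move $D$ or destroy the condition $0\notin\conv$. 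Since $0\notin\conv(\eig(U_1^\dagger U_2))$ means (by the remark after \cref{def:sal}) that $\Theta(U_1^\dagger U_2)<\pi$, i.e. the eigenvalues $\{e^{i\theta_l}\}$ all lie in some arc $e^{iI}$ with $|I|<\pi$; rotating by $\theta$ we can place this arc so that $1=e^{i0}$ is an interior point of $e^{i(\theta+I)}$ (choose $\theta$ to center the arc near $0$), and then $\conv(\{1\}\cup\{e^{i(\theta+\theta_l)}\}) = \conv(\{e^{i(\theta+\theta_l)}\})$, so both $D$ and the non-containment of $0$ are preserved. The controlled-inverse case combines the previous two observations.

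The main obstacle — and the crux of the argument — is verifying in the controlled cases that the single choice of $\theta$ can be made to work simultaneously for all four query types, and that adjoining the eigenvalue $1$ genuinely leaves $\conv(\eig)$ (hence $D$, hence the diamond distance) unchanged. I would make this precise by arguing: (i) $\Theta(U_1^\dagger U_2)<\pi$ gives a closed arc $e^{iI}$, $I=[c-r,c+r]$ with $2r<\pi$, containing all $e^{i\theta_l}$; (ii) set $\theta := -c$ (or more safely $\theta$ chosen so that $-\theta = c$ after normalizing into $[0,2\pi)$), so that all $e^{i(\theta+\theta_l)}$ lie in $e^{i[-r,r]}$ with $r<\pi/2$, an arc whose convex hull contains $1$ in its interior; (iii) conclude that for this $\theta$, $\eig((cU_1)^\dagger cU_2^\theta) \subseteq e^{i[-r,r]}\cup\{1\} \subseteq e^{i[-r,r]}$, so $\Theta((cU_1)^\dagger cU_2^\theta)\le 2r<\pi$ and $D((cU_1)^\dagger cU_2^\theta) = D(U_1^\dagger U_2^\theta)$; (iv) apply \cref{lem:norm_conv1} to each of the four products and \cref{lem:rot_0convhull} to ensure the vanilla case $\tfrac12\dnorm{\mathcal{U}(U_1)-\mathcal{U}(U_2^\theta)}$ is itself $\sqrt{1-D(U_1^\dagger U_2^\theta)^2}$, which equals the common value. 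One small point to check carefully is the inverse case's exact eigenvalue bookkeeping (making sure conjugation/taking adjoints sends the spectrum to its complex conjugate and that $D$ is invariant under $\{e^{i\alpha_l}\}\mapsto\{e^{-i\alpha_l}\}$, which is immediate since $|e^{-i\alpha_k}+e^{-i\alpha_l}| = |e^{i\alpha_k}+e^{i\alpha_l}|$ and the arc-length is unchanged), but this is routine once the controlled case is settled.
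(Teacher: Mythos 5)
Your overall strategy matches the paper's: reduce each query type to the eigenvalue set of $\tilde U_1^\dagger \tilde U_2^{\theta}$ and invoke the $\sqrt{1-D^2}$ formula of \cref{lem:norm_conv1}, with the inverse case handled identically (complex conjugation of the spectrum leaves $D$ invariant, valid for every $\theta$). Where you genuinely diverge is the controlled case. The paper chooses $\theta = 2\pi - \theta_l$ for some eigenvalue $e^{i\theta_l}$ of $U_1^\dagger U_2$, so that $1$ is already an eigenvalue of $U_1^\dagger U_2^{\theta}$ and the union $\eig(U_1^\dagger U_2^{\theta}) \cup \{1\}$ is literally the same set --- nothing further to verify. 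You instead center the spectral arc around $1$, which also works but forces you to argue that adjoining the extra point $1$ changes neither $D$ nor the condition $0 \notin \conv$. The paper's choice buys a one-line argument; yours buys a picture that makes the role of $\Theta(U) < \pi$ from \cref{def:sal} more transparent, at the cost of an extra verification step.

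That extra step is where you slip: the claim $\conv(\{1\}\cup\{e^{i(\theta+\theta_l)}\}) = \conv(\{e^{i(\theta+\theta_l)}\})$ is false in general, since $1$ is an extreme point of the unit disk and so lies in the convex hull of a subset of the circle only if it is itself a member of that subset (e.g.\ for eigenvalues $e^{\pm i\rho}$ the hull is a chord not containing $1$, and adjoining $1$ produces a strictly larger triangle). Fortunately your conclusion survives without this claim: since $1=e^{i0}$ lies inside the centered arc $e^{i[-r,r]}$ with $r<\pi/2$, the enlarged set still has spectral arc-length $<\pi$ (so $0\notin\conv$ and \cref{lem:norm_conv1} applies), and in the formula $D=\tfrac12\min_{k,l}\lvert e^{i\alpha_k}+e^{i\alpha_l}\rvert = \min_{k,l}\cos(\lvert\alpha_k-\alpha_l\rvert/2)$ the minimum is attained at the pair of maximal angular separation, which is unchanged by inserting a point whose angle lies between the extremes. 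Replace the hull-equality assertion with this observation (or simply adopt the paper's choice of $\theta$, which sidesteps the issue entirely) and the proof is complete.
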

\begin{proof}
By~\cref{lem:rot_0convhull}, we must have that  $0 \notin \conv(\eig(U_1^\dagger U_2^{\theta}))$ for any $\theta \in [0,2\pi)$. We consider the separate cases using~\cref{lem:norm_conv1}. The $({U}_1,{U}_2^{\theta})$-case is trivially true.
\paragraph{Inverse queries.} By Theorem 1.3.22 in Ref.~\cite{horn2012matrix} we have $\eig((U_1^\dagger)^\dagger U_2^{\theta,\dagger}  ) = \eig( U_2^{\theta,\dagger} (U_1^\dagger)^\dagger ) = \eig( (U_1^\dagger U_2^{\theta})^\dagger) $. Therefore, for all $\theta \in [0, 2\pi)$ it holds that
\begin{align*}
        \frac{1}{2} \dnorm{\mathcal{U}(U_1^\dagger) -\mathcal{U}(U_2^{\theta,\dagger})} 
        &=  \sqrt{1-\left(\frac{1}{2} \min_{k,l} \abs{ e^{-i(\theta+\theta_k)} + e^{-i(\theta+\theta_l)} }\right)^2} \\
        &= \sqrt{1-\left(\frac{1}{2} \min_{k,l} \abs{ e^{i(\theta+\theta_k)} + e^{i(\theta+\theta_l)} } \right)^2}\\
        &=\frac{1}{2}\dnorm{\mathcal{U}(U_1) -\mathcal{U}(U_2^{\theta})}.
\end{align*}

\paragraph{Controlled queries.} Note that we have that
\begin{align*}
    \eig(c U_1^\dagger cU_2^{\theta}  ) = \eig(\ket{0}\bra{0} \otimes \mathbb{I} + \ket{1}\bra{1} \otimes U_1^\dagger U_2^\theta) =  \eig(U_1^\dagger U_2^\theta) \cup \{1\}.
\end{align*}
For all $U_1,U_2$ there exists a $\theta$ such that $1 \in \eig(U_1^\dagger U_2^\theta)$, simply by taking any eigenvalue $e^{i \theta_l} \in \eig(U_1^\dagger U_2)$ and letting $\theta = 2\pi - \theta_l$. Hence, for this choice of $\theta$ we have
\begin{align*}
        \frac{1}{2}\dnorm{\mathcal{U}(cU_1)-\mathcal{U}(cU_2^{\theta})} = \frac{1}{2} \dnorm{\mathcal{U}(U_1)-\mathcal{U}(U_2^{\theta})}.
    \end{align*}

\paragraph{Controlled queries to the inverse.} This follows directly by combining the two arguments above, and holds for the same values of $\theta$.
\end{proof}
We restate~\cref{thm:lb_technique} from~\cref{sec:int} and give the proof. 
\lowb*
\begin{proof}
We omit the dependence of $\theta$ in subsequent notation and assume that it is chosen according to~\cref{lem:dddqt}. W.l.o.g. we can write the circuit $V^{U_i,(T)}$ which makes $T$ queries to $\tilde{U_i}^t \in \{U_i,U_i^\dagger, cU_i, cU_i^\dagger , \}$, $i \in \{1,2\}$, as
\begin{align*}
    V^{U_i,(T)}  =  V_T \tilde{U}_i^T V_{T-1} \tilde{U}^{T-1}_i \dots V_1 \tilde{U}^1_i V_0 .
\end{align*}
 We want to measure a designated output qubit of $V^{U_i,(T)}$ in the computational basis, which should output $1$ if we have $U_1$ and output $0$ if we have $U_2$. 
 
 By~\cref{def:diamond_distance_channels} and the operational interpretation of trace distance, we then have that for any input state $\ket{\psi_\textup{init}}$
\begin{align*}
    \abs{\Pr[V^{U_1,(T)} (\ket{\psi_\textup{init}}) \text{ outputs } 1] - \Pr[V^{U_2,(T)}(\ket{\psi_\textup{init}}) \text{ outputs } 1]} \leq \frac{1}{2}\dnorm{\mathcal{U}(V^{U_1,(T)})- \mathcal{U}(V^{U_2,(T)})}.
\end{align*}
Let us show by induction on the number of queries that
\begin{align*}
    \dnorm{\mathcal{U}(V^{U_1,(T)})- \mathcal{U}(V^{U_2,(T)})} \leq 2 T \epsilon.
\end{align*}
For a single query, we have
\begin{align}
    \dnorm{\mathcal{U}(V^{U_1,(1)})- \mathcal{U}(V^{U_2,(1)})} & = \dnorm{\mathcal{U}(V_1 \tilde{U}^1_1 V_0) -  \mathcal{U}(V_1 \tilde{U}^1_2 V_0)}
    \label{eq:sq_1}\\
    &= \dnorm{\mathcal{U}(\tilde{U}^1_1)-\mathcal{U}(\tilde{U}^1_2)}\label{eq:sq_2}\\
    &=  \dnorm{\mathcal{U}(U_1)-\mathcal{U}(U_2)}\label{eq:sq_3}\\
    &\leq 2 \epsilon,\label{eq:sq_4}
\end{align}
using the unitary invariance of the diamond norm (\cref{lem:prop_diamondnorm}, property 1) in going from line~\cref{eq:sq_1} to line~\cref{eq:sq_2} and~\cref{lem:dddqt} from going to line~\cref{eq:sq_2} to~\cref{eq:sq_3}. For $T$ queries, assuming the induction hypothesis on $T-1$ queries, we find

\begin{align}
\dnorm{\mathcal{U}(V^{U_1,(T)})- \mathcal{U}(V^{U_2,(T)})}
&=  \dnorm{ \mathcal{U}(V_T \tilde{U}_1^{T} V^{U_1,(T-1)}) - \mathcal{U}(V_T \tilde{U}_2^{T} V^{U_2,(T-1)})} \label{eq:Tq_1} \\
&\leq 
\begin{aligned}[t]
&\dnorm{ \mathcal{U}(V_T \tilde{U}_1^{T}) - \mathcal{U}(V_T \tilde{U}_2^{T} )} \\
&\quad + \dnorm{\mathcal{U}(V^{U_1,(T-1)}) - \mathcal{U}(V^{U_2,(T-1)} )} 
\end{aligned} \label{eq:Tq_2} \\
&\leq  \dnorm{\mathcal{U}( V_T \tilde{U}_1^{T}) - \mathcal{U}(V_T \tilde{U}_2^{T} )} + 2(T-1)\epsilon  \label{eq:Tq_3} \\
&= \dnorm{\mathcal{U}(\tilde{U}_1^{T}) - \mathcal{U}(\tilde{U}_2^{T} )} + 2(T-1)\epsilon  \label{eq:Tq_4} \\
&= \dnorm{\mathcal{U}(U_1) - \mathcal{U}(U_2)} + 2(T-1)\epsilon  \label{eq:Tq_5} \\
&\leq 2T\epsilon,  \label{eq:Tq_6}
\end{align}
where the bound holds for any $\ket{\psi_\textup{init}}$. Here we used~\cref{lem:prop_diamondnorm}, property 2, in going~\cref{eq:Tq_1} to~\cref{eq:Tq_2}, the induction hypothesis in going from~\cref{eq:Tq_2} to~\cref{eq:Tq_3}, the unitary invariance of the diamond norm from~\cref{eq:Tq_3} to~\cref{eq:Tq_4} and~\cref{lem:dddqt} from~\cref{eq:Tq_4} to~\cref{eq:Tq_5}. Therefore, in order to have
\begin{align*}
    \abs{\Pr[V^{U_1,(T)} (\ket{\psi_\textup{init}}) \text{ outputs } 1] - \Pr[V^{U_2,(T)}(\ket{\psi_\textup{init}}) \text{ outputs } 1]} \geq \frac{2}{3}-\frac{1}{3} = \frac{1}{3},
\end{align*}
we require
\begin{align*}
    \frac{1}{3} \leq \frac{1}{3}\dnorm{\mathcal{U}(V^{U_1,(T)})- \mathcal{U}(V^{U_2,(T)})} \leq T \epsilon,
\end{align*}
which implies 
\begin{align}
    T \geq \frac{1}{3 \epsilon}.
    \label{eq:thm_bound}
\end{align}
We now have to show that~\cref{eq:thm_bound} holds for $\QMA(2)\slash\mathsf{qpoly}$-testers. Let $\mathcal{P} = (\mathcal{P}_\text{yes},\mathcal{P}_\text{yes})$, where $\mathcal{P}_\text{yes} = \{U_1^1,U_1^2,\dots\}$ and $\mathcal{P}_\text{no} = \{U_2^1,U_2^2,\dots\}$, such that for each $n \in \mathbb{N}$ we have $\frac{1}{2}\dnorm{\mathcal{U}(U_1^n)-\mathcal{U}(U_2^n)} \leq \frac{1}{3T}$. Now suppose the above corollary is false, then there must exist an input quantum state $\ket{\psi_\textup{input}} = \ket{\psi_n} \ket{\xi_1} \dots \ket{\xi_k}$ such that a $T'$-query quantum algorithm, $T' < T$, starting in the initial state $\ket{\psi_{\textup{init}}} = \ket{0}^{\otimes \poly(n)} \ket{\psi_\textup{input}}$, which makes queries to $U$ can decide with probability $> 2/3$ whether $U \in \mathcal{P}_\text{yes}$ or $U \in \mathcal{P}_\text{no}$ (in the no-case the soundness property holds for all states, so also the one that is used in the yes-case). This contradicts the bound of~\cref{eq:thm_bound}, which holds for any $\ket{\psi_{\textup{init}}}$.
\end{proof}

\begin{remark} \cref{thm:lb_technique} also holds when we replace the diamond norm condition on $U_1$ and $U_2$ by
\begin{align}
   \min_{\phi}\opnorm{e^{i \phi} U_1 - U_2} \leq \opnorm{ U_1 - U_2} \leq \epsilon,
\end{align}
since we show that
\begin{align*}
    \dnorm{\mathcal{U}(V^{U_1})- \mathcal{U}(V^{U_2})} \leq T \dnorm{\mathcal{U}(U_1)-\mathcal{U}(U_2)},
\end{align*}
where 
\begin{align*}
    \frac{1}{2}\dnorm{\mathcal{U}(U_1)-\mathcal{U}(U_2)} \leq \min_{\phi}\opnorm{e^{i \phi} U_1 - U_2},
\end{align*}
by~\cref{lem:norm_conv2}.
\end{remark}
With~\cref{thm:lb_technique} in hand, our proposed lower bound technique is now straightforward: given a unitary property $\mathcal{P} = (\mathcal{P}_\textup{yes},\mathcal{P}_\textup{no})$, one picks two unitaries $U_1$, $U_2$ such that $U_1 \in \mathcal{P}_\textup{yes}$ and $U_2 \in \mathcal{P}_\textup{no}$; this implies that any unitary property tester for $\mathcal{P}$ implies a distinguisher for $U_1$ and $U_2$. One evaluates the diamond distance (or an upper bound thereof) between $U_1$ and $U_2$ using~\cref{lem:norm_conv1,lem:norm_conv2} or any other suitable method, and verifies the property of $0 \notin \conv(\eig(U_1^\dagger U_2))$, possibly with the help of~\cref{lem:rot_0convhull}. The lower bound on the channel discrimination query complexity of $U_1,U_2$ follows then from~\cref{thm:lb_technique}, which directly implies a lower bound on the query complexity of the unitary property tester.

\section{Applications}
\label{sec:app}
In this section we will apply the lower bound method of~\cref{sec:technique} to some problems in unitary property testing as well as some other unitary problems, showcasing the fact that the technique is very simple to use and (for all but one problem) leads to optimal lower bounds. 
\subsection{Unitary property testing}
\subsubsection{Quantum phase estimation}
In quantum phase estimation, one is given a unitary $U$ and an eigenstate $\ket{\psi}$, and the task is to determine the eigenphase of $U$ corresponding to $\ket{\psi}$ up to some precision $\epsilon$ with probability $\geq 2/3$. A lower bound can be obtained by reducing the quantum counting problem to the amplitude estimation problem, which is then reduced to the phase estimation problem. From this, the lower bound of $\Omega(1/\epsilon)$ follows from the lower bound for quantum counting given in Ref.~\cite{nayak1999quantum}. However, our method allows us to prove the same lower bound in only a couple of lines, and shows that it holds even in the presence of proofs and advice.

\begin{claim}[Lower bound for quantum phase estimation] Let $U \in \mathbb{U}(d)$ and $\ket{\psi}$ a quantum state such that $U \ket{\psi} = e^{2\pi i \theta} \ket{\psi}$, for some $\theta \in [0,1)$. Suppose that either (i) $\theta \geq b$ or (ii) $\theta \leq a$, with $b-a = \epsilon$. Then any $\mathsf{C}$-tester, where $\mathsf{C} \subseteq \QMA(2)\slash\mathsf{qpoly}$, that decides whether (i) or (ii) holds with probability $\geq 2/3$ has to make at least
\begin{align*}
    \Omega(1/\epsilon)
\end{align*}
(controlled) queries to $U$ (or its inverse).
\label{clm:QPE}
\end{claim}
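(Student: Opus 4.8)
The plan is to apply the reduction strategy laid out after~\cref{thm:lb_technique}: pick a concrete $U_1 \in \mathcal{P}_\textup{yes}$ (a unitary with an eigenstate of eigenphase $\le a$) and $U_2 \in \mathcal{P}_\textup{no}$ (a unitary with the same eigenstate but eigenphase $\ge b$), argue that these two unitaries are hard to discriminate, and then invoke~\cref{thm:lb_technique}. Concretely, I would take $\ket{\psi} = \ket{0}$ (any fixed designated eigenstate works, since this is a unitary property testing problem parametrised by a known eigenstate) and set $U_1 = \mathbb{I}_d$, so that $U_1\ket{0} = e^{2\pi i \cdot 0}\ket{0}$ realises case (ii) with $\theta = 0 \le a$; and $U_2 = \mathrm{diag}(e^{2\pi i \epsilon}, 1, 1, \dots, 1)$ (i.e.\ $e^{2\pi i \epsilon}\ketbra{0}{0} + \sum_{j \ge 1}\ketbra{j}{j}$), so that $U_2 \ket{0} = e^{2\pi i \epsilon}\ket{0}$ realises case (i) with $\theta = \epsilon \ge b$ once we identify $b = \epsilon$, $a = 0$. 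A $\mathsf{C}$-tester for the phase estimation problem must in particular decide between these two instances, hence distinguishes $U_1$ from $U_2$ (up to the usual global-phase freedom, which~\cref{thm:lb_technique} already accounts for via $U_2^\theta$).

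Next I would bound $\tfrac12\dnorm{\mathcal{U}(U_1) - \mathcal{U}(U_2)}$. Here $U_1^\dagger U_2 = U_2 = \mathrm{diag}(e^{2\pi i\epsilon},1,\dots,1)$, whose spectrum is $\{e^{2\pi i \epsilon}, 1\}$. Its spectral arc-length is $2\pi\epsilon < \pi$ for $\epsilon$ small, so by the remark after~\cref{def:sal} we get $0 \notin \conv(\eig(U_1^\dagger U_2))$, as required by the hypothesis of~\cref{thm:lb_technique}. To get the diamond distance I can either use~\cref{lem:norm_conv1} directly — with eigenvalues $e^{i\theta_k} \in \{e^{2\pi i\epsilon}, 1\}$, the minimum of $\tfrac12|e^{i\theta_k} + e^{i\theta_l}|$ over pairs is $\tfrac12|1 + e^{2\pi i \epsilon}| = \cos(\pi\epsilon)$, so $\tfrac12\dnorm{\mathcal{U}(U_1)-\mathcal{U}(U_2)} = \sqrt{1-\cos^2(\pi\epsilon)} = \sin(\pi\epsilon) \le \pi\epsilon$ — or, even more cheaply, use~\cref{lem:norm_conv2} via $\min_\phi \opnorm{e^{i\phi}U_1 - U_2} \le \opnorm{U_1 - U_2} = |1 - e^{2\pi i\epsilon}| = 2\sin(\pi\epsilon) \le 2\pi\epsilon$ together with the remark after~\cref{thm:lb_technique}. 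Either way, $\tfrac12\dnorm{\mathcal{U}(U_1)-\mathcal{U}(U_2)} = O(\epsilon)$.

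Finally I would feed these two facts into~\cref{thm:lb_technique}: the unitaries satisfy $0\notin\conv(\eig(U_1^\dagger U_2))$ and have diamond distance $O(\epsilon)$, so any $\mathsf{C}$-tester with $\mathsf{C} \subseteq \QMA(2)\slash\mathsf{qpoly}$ that discriminates them — hence any such tester for quantum phase estimation at precision $\epsilon$ — must make $\Omega(1/\epsilon)$ (controlled, possibly inverse) queries to $U$. I do not anticipate a serious obstacle here; the only mildly delicate points are (a) making sure the chosen $U_1, U_2$ genuinely sit in $\mathcal{P}_\textup{yes}$ and $\mathcal{P}_\textup{no}$ for the stated promise (trivial once we set $a=0$, $b=\epsilon$, and one should note that rescaling $a,b$ by an additive constant is harmless since only $b-a=\epsilon$ matters, or alternatively replace $U_1$ by $\mathrm{diag}(e^{2\pi i a},1,\dots)$ and $U_2$ by $\mathrm{diag}(e^{2\pi i b},1,\dots)$ so that $U_1^\dagger U_2 = \mathrm{diag}(e^{2\pi i\epsilon},1,\dots)$ is unchanged), and (b) observing that the global-phase ambiguity $U_2 \mapsto U_2^\theta$ in~\cref{thm:lb_technique} is exactly why the theorem is phrased with the freedom to choose $\theta$, so it does not interfere with the reduction. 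The whole argument is indeed only a few lines.
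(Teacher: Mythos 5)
Your proposal is correct and follows essentially the same route as the paper's own proof: the same choice $U_1 = \mathbb{I}$, $U_2 = e^{2\pi i\epsilon}\ketbra{0}+ (\mathbb{I}-\ketbra{0})$ with eigenstate $\ket{0}$, the same spectrum computation giving $0\notin\conv(\eig(U_1^\dagger U_2))$ and $\tfrac12\dnorm{\mathcal{U}(U_1)-\mathcal{U}(U_2)} = \abs{\sin(\pi\epsilon)}\le\pi\epsilon$ via \cref{lem:norm_conv1}, and the same invocation of \cref{thm:lb_technique}. The extra remarks about general $a,b$ and the alternative operator-norm bound are fine but not needed.
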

\begin{proof}
    Let $U_1 = \mathbb{I}$ and $U_2 = e^{ 2i \pi \epsilon} \ketbra{0} + \ketbra{1} $, with $\epsilon >0$. Note $\ket{\psi} = \ket{0}$ is an eigenstate of both $U_1$ and $U_2$ with eigenphases $\theta_1 = 0$ and $\theta_2 = \epsilon$, respectively. Hence, any algorithm that decides whether some $U$ with eigenstate $\ket{\psi}$ has an eigenphase $\theta \leq a:=0$ or $\geq b:=\epsilon$ can discriminate $U_1$ from $U_2$. We have $
        \eig(U_1^\dagger U_2) = \{e^{2i \pi \epsilon},1\}$,
    which means $0 \notin \conv(\eig(U_1^\dagger U_2)$. Using
$D = \frac{1}{2}\abs{1 + e^{ 2 i \pi \epsilon}} $ we find
    \begin{align*}
    \frac{1}{2}\dnorm{\mathcal{U}(U_1)-\mathcal{U}(U_2)} = \sqrt{1-\frac{1}{4}\abs{1 + e^{ 2 i \pi \epsilon}}^2} = \abs{\sin(\pi \epsilon)} \leq  \pi \epsilon
\end{align*}
for $\epsilon >0$. Hence, by~\cref{thm:lb_technique} we find 
        $T \geq \Omega\left(1/\epsilon\right) $.
\end{proof}
This matches the well-known upper bound by Kitaev~\cite{kitaev1995quantum}.

\subsubsection{Entanglement entropy}
In the entanglement entropy problem, one wants to decide whether a certain bipartite state $\ket{\psi}_{AB}$ has low or small entanglement entropy between the subsystems $A$ and $B$. We consider the the ($2$-R\'enyi) entanglement entropy $S_2(\cdot)$, which for a mixed state $\rho_A$ is defined as
\begin{align*}
    S_2(\rho_A) = - \ln[\tr[\rho_A^2]].
\end{align*}

\begin{claim}[Lower bound for entanglement entropy] Let $U = \mathbb{I} - 2\ketbra{\psi}$ be a unitary for some bipartite quantum state $\ket{\psi} = \ket{\psi}_{AB} \in \mathbb{C}^d \otimes \mathbb{C}^d $.  Suppose that either (i) $S_2 (\tr_B(\ketbra{\psi}) ) \leq a$ or (ii) $S_2 (\tr_B(\ketbra{\psi}) ) \geq b$ with $b-a \geq \Delta$. Then any $\mathsf{C}$-tester, where $\mathsf{C} \subseteq \QMA(2)\slash\mathsf{qpoly}$, that decides whether (i) or (ii) holds with probability $\geq 2/3$ has to make at least
\begin{align*}
    T \geq \Omega\left(\frac{1}{\sqrt{\Delta}}\right)
\end{align*}
(controlled) queries to $U$ (or its inverse).
\label{clm:QEP}
\end{claim}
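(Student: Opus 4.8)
The plan is to follow the three-step strategy laid out after~\cref{thm:lb_technique}: pick two unitaries $U_1 \in \mathcal{P}_\textup{yes}$ and $U_2 \in \mathcal{P}_\textup{no}$ realizing the extreme cases of the promise, estimate the diamond distance between them, and invoke~\cref{thm:lb_technique}. Concretely, I would take a family of bipartite states on $\mathbb{C}^d \otimes \mathbb{C}^d$ whose Schmidt coefficients are easy to control, and pick two such states $\ket{\psi_0}$ and $\ket{\psi_1}$ whose reduced $2$-R\'enyi entropies differ by (at least) $\Delta$ but which are very close as states. The natural choice is a ``flat'' state $\ket{\psi_0} = \frac{1}{\sqrt{m}}\sum_{i=1}^m \ket{i}\ket{i}$ (with $S_2 = \ln m$) and a slight perturbation of it---e.g.\ tilting one Schmidt coefficient, or mixing in a little bit of a rank-one state---so that the two reduced density matrices have purities differing by roughly $\Delta$ while $\lvert\langle\psi_0|\psi_1\rangle\rvert \geq 1 - O(\Delta)$ or so. Then $U_i = \mathbb{I} - 2\ketbra{\psi_i}$, and the key point is that the operator-norm (or diamond) distance between two reflections $\mathbb{I}-2\ketbra{\psi_0}$ and $\mathbb{I}-2\ketbra{\psi_1}$ scales like the distance between the two states, which will turn out to be $O(\sqrt{\Delta})$ rather than $O(\Delta)$; this square-root loss is exactly what produces the $\Omega(1/\sqrt{\Delta})$ bound instead of $\Omega(1/\Delta)$.

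The key steps, in order, would be: (1) define the two Schmidt-coefficient vectors and verify $S_2$ of the reduced states differ by $\geq \Delta$ (a short calculation with $-\ln$ of sums of squares of the coefficients, using that near a flat distribution $S_2$ is smooth so a perturbation of size $\delta$ in the coefficients changes the purity by $\Theta(\delta)$ and hence $S_2$ by $\Theta(\delta)$); (2) bound $\lvert\langle\psi_0|\psi_1\rangle\rvert$ from below, getting $\sqrt{1-\lvert\langle\psi_0|\psi_1\rangle\rvert^2} = O(\sqrt{\Delta})$; (3) for the two reflections, compute $\eig(U_1^\dagger U_2)$---since $U_1^\dagger U_2$ acts nontrivially only on $\mathrm{span}\{\ket{\psi_0},\ket{\psi_1}\}$ it has at most two eigenvalues off $1$, and their phases are controlled by the overlap, so one checks $\Theta(U_1^\dagger U_2)<\pi$ (via~\cref{lem:op_distance_convhull} applied to $\opnorm{U_1-U_2}$, which is itself $\leq 2\sqrt{1-\lvert\langle\psi_0|\psi_1\rangle\rvert^2}$, so one needs the overlap large enough, which holds for $\Delta$ small) and computes $D$; (4) conclude $\frac{1}{2}\dnorm{\mathcal{U}(U_1)-\mathcal{U}(U_2)} = O(\sqrt{\Delta})$ and apply~\cref{thm:lb_technique} with $\epsilon = O(\sqrt{\Delta})$ to get $T \geq \Omega(1/\sqrt{\Delta})$. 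Also note that $\ket{\psi_0},\ket{\psi_1}$ should be chosen so that one has $S_2 = a$ exactly (or $\leq a$) and the other $\geq b = a+\Delta$; taking $a$ near $\ln m$ for a suitable $m$ arranges this.

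The main obstacle I expect is step (3)---carefully controlling the spectrum of $U_1^\dagger U_2$ for two reflections, i.e.\ reducing to the two-dimensional subspace they span, writing $\ket{\psi_1}$ in terms of $\ket{\psi_0}$ and an orthogonal unit vector with overlap parameter $c=\langle\psi_0|\psi_1\rangle$, and reading off the two nontrivial eigenvalues of the $2\times 2$ block as a function of $c$. One must verify both that $0\notin\conv(\eig(U_1^\dagger U_2))$ (which fails if the two states are too far apart, hence needs $\Delta$ below a constant, but that is fine for an asymptotic lower bound) and that the resulting diamond distance is genuinely $\Theta(\sqrt{\Delta})$ and not something smaller. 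A cleaner route that sidesteps most of this: use~\cref{lem:norm_conv2} together with the~\cref{thm:lb_technique} remark, bounding $\min_\phi\opnorm{e^{i\phi}U_1-U_2} \leq \opnorm{U_1-U_2} = 2\opnorm{\ketbra{\psi_0}-\ketbra{\psi_1}} = 2\sqrt{1-\lvert\langle\psi_0|\psi_1\rangle\rvert^2} = O(\sqrt{\Delta})$, which only requires the overlap estimate from step (2) and avoids diagonalizing $U_1^\dagger U_2$ altogether. The remaining subtlety is then just making the entropy-gap bookkeeping in steps (1)–(2) match up cleanly.
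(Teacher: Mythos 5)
Your proposal is correct and follows essentially the same route as the paper, which instantiates your plan in dimension $d=2$ with $\ket{\psi_1}=\tfrac{1}{\sqrt{2}}(\ket{00}+\ket{11})$ and $\ket{\psi_2}=\sqrt{(1+\sqrt{\Delta})/2}\,\ket{00}+\sqrt{(1-\sqrt{\Delta})/2}\,\ket{11}$ (entropy gap $\geq \Delta/2$, squared overlap $(1+\sqrt{1-\Delta})/2$), carries out your step (3) by diagonalizing $U_1^\dagger U_2$ on the two-dimensional span to get $\tfrac{1}{2}\dnorm{\mathcal{U}(U_1)-\mathcal{U}(U_2)}=\sqrt{\Delta}$, and concludes via \cref{thm:lb_technique}. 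One small correction to your step (1): the purity is stationary at the flat Schmidt distribution, so a perturbation of size $\delta$ in the probabilities changes the purity by $\Theta(\delta^2)$, not $\Theta(\delta)$ --- this is exactly why the paper perturbs by $\sqrt{\Delta}$ to open a gap of $\Delta$, and it is the true origin of the $O(\sqrt{\Delta})$ state distance that your step (2) correctly asserts.
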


\begin{proof}
We reduce from the following unitary channel discrimination problem: $U_i = \mathbb{I} - 2\ketbra{\psi_i}$, $i \in [1,2]$, with 
\begin{align}
    \ket{\psi_1} = \frac{1}{\sqrt{2}}(\ket{11} +\ket{00)}, \quad \ket{\psi_2} = \sqrt{\frac{1+\sqrt{\Delta}}{2}} \ket{00} + \sqrt{\frac{1-\sqrt{\Delta}}{2}}\ket{11}
\end{align}
We have 
\begin{align*}
    \rho_{1,A} = \tr_B [\ket{\psi_1}\bra{\psi_1}] = I_2
\end{align*}
which has $S_2 (\rho_{1,A}) = \ln(2)$ and  
\begin{align*}
    \rho_{2,A} = \tr_B [\ket{\psi_2}\bra{\psi_2}] = \frac{1+\sqrt{\Delta}}{2}\ket{0}\bra{0} + \frac{1-\sqrt{\Delta}}{2} \ket{1}\bra{1}
\end{align*} which has $S_2 (\rho_{2,A}) = -\ln((1 + \Delta)/2) \leq \ln(2)-\Delta/2$, so $b-a \geq  \Delta/2$. Hence, if we could compute $S_2(\rho_i)$ up to precision $< \Delta/4$ for $i\in \{1,2\}$ given access to $U_i$ we could distinguish $U_1$ from $U_2$.
We have that
    \begin{align*}
        \eig(U_1^\dagger U_2) = \{1,-i \sqrt{\Delta} + \sqrt{1 - \Delta}, i \sqrt{\Delta} + \sqrt{1 - \Delta}\},
    \end{align*}
    which means that $0 \notin \conv(\eig(U_1^\dagger U_2)$  for $\Delta <1$.  We can brute force over all combinations of eigenvalues to find $
        D = \sqrt{1-\Delta} $
    and thus
     \begin{align*}
    \frac{1}{2}\dnorm{\mathcal{U}(U_1)-\mathcal{U}(U_2)} = \sqrt{1-(1-\Delta)} = \sqrt{\Delta}
\end{align*}
for $\Delta >0$. By~\cref{thm:lb_technique}, we find a lower bound of
$
    T \geq \Omega(1/\sqrt{\Delta})
$.
\end{proof}
This removes the logarithmic factor of $\tilde{\Omega}(1/\sqrt{\Delta})$ in Ref.~\cite{wang2023quantum}, and also resolves their open question whether their bound could be made to hold for a $\QMA$-tester (we show in fact that it is robust against even stronger classes of testers). From the proof of~\cref{clm:QEP} it is easy to show that the following corollary holds:

\begin{corollary}The same lower bound applies when given copies of $\ket{\psi}$ instead of the unitary $U = \mathbb{I} - 2\ketbra{\psi}$ from~\cref{clm:QEP}.
\label{cor:EE_copies}
\end{corollary}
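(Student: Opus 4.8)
The plan is to re-run the proof of \cref{thm:lb_technique} with the diamond-norm hybrid bound replaced by the (simpler) subadditivity of the trace distance under tensor powers. I keep the two bipartite states $\ket{\psi_1},\ket{\psi_2}$ from the proof of \cref{clm:QEP}: by the computation there, their reduced states on $A$ have $S_2$-entropies differing by at least $\Delta/2$, so an algorithm that distinguishes the two bipartite states (given copies of whichever one it holds) decides the entanglement-entropy problem. It therefore suffices to lower bound the number of copies needed to distinguish $\ketbra{\psi_1}^{\otimes T}$ from $\ketbra{\psi_2}^{\otimes T}$, even in the presence of proofs and advice.

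First I would compute the overlap $|\langle\psi_1|\psi_2\rangle|^2 = \tfrac12(1+\sqrt{1-\Delta})$, so that
\begin{align*}
\tfrac12\tnorm{\ketbra{\psi_1}-\ketbra{\psi_2}} = \sqrt{1-|\langle\psi_1|\psi_2\rangle|^2} = \sqrt{\tfrac{1-\sqrt{1-\Delta}}{2}} = \Theta(\sqrt{\Delta}),
\end{align*}
which reassuringly matches, up to a constant, the diamond distance found in \cref{clm:QEP}. Next, the trace distance is subadditive under tensor products — the state-space analogue of property~2 of \cref{lem:prop_diamondnorm} — so a straightforward induction on the number of copies gives $\tfrac12\tnorm{\ketbra{\psi_1}^{\otimes T}-\ketbra{\psi_2}^{\otimes T}} \le T\cdot\Theta(\sqrt{\Delta})$.

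Finally I would incorporate proofs and advice exactly as in the last paragraph of the proof of \cref{thm:lb_technique}. A $\mathsf{C}$-tester in the copy model applies a fixed two-outcome measurement to $\rho_U^{\otimes T}\otimes\ketbra{0}^{\otimes\poly(n)}\otimes\rho_{\text{advice}}\otimes\rho_{\text{proof}}$ with $\rho_U\in\{\ketbra{\psi_1},\ketbra{\psi_2}\}$, where the advice depends only on $n$ and, in the no-case, soundness holds for \emph{every} proof. Fixing the algorithm together with the proof that works in the yes-case and running it on both inputs with that same proof and advice, the gap between the two acceptance probabilities is at least $2/3-1/3=1/3$; on the other hand it is bounded by the trace distance of the two global states, and since the ancilla, advice and proof registers are identical on both sides they drop out, leaving $\tfrac12\tnorm{\ketbra{\psi_1}^{\otimes T}-\ketbra{\psi_2}^{\otimes T}}\le T\cdot\Theta(\sqrt{\Delta})$. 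Hence $T\ge\Omega(1/\sqrt{\Delta})$.

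The one delicate point — and the step I would be most careful about — is this last bookkeeping: it is essential that in the no-instances soundness holds against all proofs (in particular against the yes-instance's optimal proof), so that one genuinely compares a single algorithm-plus-proof-plus-advice configuration on the two inputs and the trace-distance contraction applies. Everything else is routine, and there is no need to route through \cref{thm:lb_technique} itself (copies do not obviously simulate queries to $\mathbb{I}-2\ketbra{\psi}$ anyway), since the trace-distance hybrid argument is entirely self-contained.
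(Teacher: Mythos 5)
Your proof is correct, but it takes a genuinely different route from the paper. The paper proves \cref{cor:EE_copies} by a reduction in the opposite direction of the one you (rightly) dismiss: it shows that \emph{queries can cheaply simulate copies} --- applying $cU_2$ to $\ket{+}\ket{00}$ and measuring the control in the Hadamard basis yields $\ket{\psi_2}$ with probability $(1+\sqrt{\Delta})/2 \geq 1/2$, so $T'$ copies cost an expected $\mO(T')$ queries --- whence a sample algorithm using $o(1/\sqrt{\Delta})$ copies would yield a query algorithm contradicting \cref{clm:QEP}. You instead give a self-contained hybrid argument directly in the sample model: the overlap computation $\abs{\langle\psi_1|\psi_2\rangle}^2 = \tfrac12(1+\sqrt{1-\Delta})$ is right, the resulting single-copy trace distance $\Theta(\sqrt{\Delta})$ is right, the telescoping bound $\tfrac12\tnorm{\ketbra{\psi_1}^{\otimes T}-\ketbra{\psi_2}^{\otimes T}} \leq T\cdot\Theta(\sqrt{\Delta})$ is the correct state-space analogue of property 2 of \cref{lem:prop_diamondnorm}, and your bookkeeping for proofs and advice (fix the yes-instance's proof, invoke soundness against all proofs, note the proof/advice registers cancel in the trace distance) mirrors the final paragraph of the proof of \cref{thm:lb_technique} and is sound. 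What each approach buys: the paper's reduction recycles \cref{clm:QEP} wholesale and generalises to any setting where queries simulate samples at $\mO(1)$ cost, but it does need the state-preparation gadget and a small Markov/success-probability argument; your argument needs neither, and in fact leaves something on the table --- since the global states are pure, $\tfrac12\tnorm{\ketbra{\psi_1}^{\otimes T}-\ketbra{\psi_2}^{\otimes T}} = \sqrt{1-\abs{\langle\psi_1|\psi_2\rangle}^{2T}} \approx \sqrt{1-e^{-T\Delta/4}}$, so the exact computation (rather than subadditivity) would give the strictly stronger sample lower bound $\Omega(1/\Delta)$, which neither the paper's reduction nor your linearised bound can reach.
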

\begin{proof}
We consider the same states as in the proof of~\cref{clm:QEP}. We can write $\ket{00}$ in terms of the two-dimensional subspace $\{\ket{\psi_2},\ket{\psi_2^\perp}\}$ as 
\begin{align*}
    \ket{00} = \sqrt{\frac{1+\sqrt{\Delta}}{2}}\ket{\psi_2} + \sqrt{\frac{1-\sqrt{\Delta}}{2}} \ket{\psi_2^\perp}.
\end{align*}
Consider the circuit which applies $U_2 = \mathbb{I} -2\ket{\psi_2}\bra{\psi_2}$ to $\ket{00}$ controlled by a qubit in $\ket{+}$ and then measures this first qubit in the Hadamard basis. We have
\begin{align*}
    cU_2 \ket{+}\ket{00} &= \frac{1}{\sqrt{2}}
    \ket{0}\left(\sqrt{\frac{1+\sqrt{\Delta}}{2}}\ket{\psi_2} + \sqrt{\frac{1-\sqrt{\Delta}}{2}} \ket{\psi_2^\perp}\right) +  \\
    & \qquad \frac{1}{\sqrt{2}} \ket{1} \left(-\sqrt{\frac{1+\sqrt{\Delta}}{2}}\ket{\psi_2} + \sqrt{\frac{1-\sqrt{\Delta}}{2}} \ket{\psi_2^\perp}\right) \\
        &= \sqrt{\frac{1+\sqrt{\Delta}}{2}}\ket{-}\ket{\psi_2} + \sqrt{\frac{1-\sqrt{\Delta}}{2}}\ket{+}\ket{\psi_2^\perp} 
    \end{align*}
which yields the state $\ket{-}\ket{\psi_2}$ when the measurement outcome is $\ket{-}$, which happens with probability $(1+\sqrt{\Delta})/2 \geq \frac{1}{2}$. For $\ket{\psi_1}$ the same bound holds since this is the case where $\Delta = 0$. Therefore, with an expected number of $\mO(T')$ runs of the circuit $T'$ copies can be created. Hence, by Markov's inequality, we have that there exists a quantum algorithm that creates $T'$ copies of $\ket{\psi_i}$ with probability $1-\delta$, making only $\mO(T'/\delta)$ queries to $U_i$. Hence, if we could solve the problem with high success probability using $o(T')$ copies, this would contradict the lower bound of~\cref{clm:QEP}.  
\end{proof}

A standard quantum in the literature to estimate the entanglement entropy, usually considered in the setting where the input to the state is given in a sample setting, is through the use of the SWAP-test~\cite{foulds2021controlled}. This gives an upper bound on the query complexity in terms of the precision $\Delta$, the dimension $d$, and an upper bound on the entanglement entropy $S_{\textup{upper}}$.

\begin{proposition}
Let $d = 2^n$ for some $n \in \mathbb{N}$, and $b,a \in [0, \ln(d)]$ with $b - a = \Delta > 0$. Suppose that we are given a number $S_{\textup{upper}} \in [0, \log d]$, and access to a bipartite state $\ket{\psi} \in \mathbb{C}^d \otimes \mathbb{C}^d$ via a unitary $U = \mathbb{I} - 2\ketbra{\psi}$, for which $S_2 (\ket{\psi}) \leq S_{\textup{upper}}$ holds. Then there exists a quantum algorithm that solves the entanglement entropy problem using 
\begin{align*}
  \mO\left(\frac{\sqrt{d} e^{S_\textup{upper}}}{\Delta}\right)
\end{align*}
queries to $U$, with probability $\geq 2/3$.
\end{proposition}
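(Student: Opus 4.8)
The plan is to give an explicit algorithm based on the SWAP-test and analyze the number of queries it needs. First I would recall how to access the SWAP test given only the reflection unitary $U = \mathbb{I} - 2\ketbra{\psi}$: one can prepare a copy of $\ket{\psi}$ from, say, $\ket{0^n}\ket{0^n}$ with a constant number of queries to $U$ provided $\ket{0^n}\ket{0^n}$ has constant overlap with $\ket{\psi}$ — and if not, one instead applies $U$ controlled on a $\ket{+}$ qubit and post-selects on measuring $\ket{-}$, exactly as in the proof of~\cref{cor:EE_copies}; the success probability is governed by $1-|\braket{\psi}{0^n 0^n}|^2$, which in the worst case can be as small as $1/d$, so preparing one copy costs $\mO(d)$ queries in expectation. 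Actually, to avoid this dimension blow-up one should instead use fixed-point amplitude amplification (or just the standard reflection-based amplitude amplification, since $U$ \emph{is} the reflection about $\ket{\psi}$) to prepare $\ket{\psi}$ using only $\mO(\sqrt{d})$ queries; this is the source of the $\sqrt{d}$ factor in the final bound.

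Next I would set up the estimator. Given the ability to produce copies of $\ket{\psi}$, the reduced state $\rho_A = \tr_B \ketbra{\psi}$ has purity $P := \tr[\rho_A^2]$, and the SWAP test on two independent copies of $\rho_A$ accepts with probability $\tfrac{1}{2}(1 + P)$. Hence repeating the SWAP test and taking the empirical acceptance frequency gives an unbiased estimator of $P$. To get $S_2(\rho_A) = -\ln P$ to additive error $\Delta$ we need $P$ to additive error roughly $\Delta \cdot P$ (since $\frac{d}{dP}(-\ln P) = -1/P$), and the promise $S_2(\ket{\psi}) \le S_{\textup{upper}}$ gives $P \ge e^{-S_{\textup{upper}}}$, so a multiplicative-error requirement of $\mO(\Delta)$ translates to an additive-error requirement of $\mO(\Delta e^{-S_{\textup{upper}}})$ on $P$. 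By a Chernoff/Hoeffding bound, estimating a $[0,1]$-valued mean to additive error $\eta$ with constant confidence takes $\mO(1/\eta^2)$ samples; with $\eta = \mO(\Delta e^{-S_{\textup{upper}}})$ this is $\mO(e^{2S_{\textup{upper}}}/\Delta^2)$ SWAP tests. Multiplying by the $\mO(\sqrt d)$ queries per copy (two copies per SWAP test) would give $\mO(\sqrt d\, e^{2S_{\textup{upper}}}/\Delta^2)$, which is worse than the claimed $\mO(\sqrt d\, e^{S_{\textup{upper}}}/\Delta)$ — so the naive analysis does not immediately suffice.

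To match the stated bound I would instead use \emph{quantum amplitude estimation} on the SWAP-test acceptance probability rather than classical sampling: amplitude estimation estimates a probability $p$ to additive error $\eta$ using only $\mO(1/\eta)$ invocations of the underlying circuit, a quadratic improvement. Wrapping the copy-preparation-plus-SWAP-test as the ``oracle'' for amplitude estimation, each invocation costs $\mO(\sqrt d)$ queries to $U$, and we need $\eta = \mO(\Delta e^{-S_{\textup{upper}}})$, giving $\mO(\sqrt d / (\Delta e^{-S_{\textup{upper}}})) = \mO(\sqrt d\, e^{S_{\textup{upper}}}/\Delta)$ queries in total, as claimed. The remaining details are: verifying that amplitude estimation is compatible with the (bounded-error) state preparation subroutine — this is fine because fixed-point amplitude amplification prepares $\ket{\psi}$ exactly (up to the usual small error that can be absorbed), so the composed circuit is a genuine unitary to which amplitude estimation applies; checking that translating ``$P$ to additive error $\mO(\Delta e^{-S_{\textup{upper}}})$'' into ``$S_2$ to additive error $\mO(\Delta)$'' is valid over the whole range $P \in [e^{-S_{\textup{upper}}}, 1]$, using that $-\ln$ is Lipschitz with constant $e^{S_{\textup{upper}}}$ there; and finally comparing the estimate to the threshold $\tfrac{a+b}{2}$ to decide (i) versus (ii), boosting the success probability to $\ge 2/3$ by a constant number of repetitions.

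The main obstacle I anticipate is the bookkeeping around the two compounding quadratic speedups — the $\sqrt d$ from amplitude amplification in state preparation and the $1/\eta$ (rather than $1/\eta^2$) from amplitude estimation in the purity read-out — and in particular making sure the error and success-probability budgets of the nested amplitude-estimation / amplitude-amplification routines compose correctly without introducing extra logarithmic factors that would spoil the clean $\mO(\sqrt d\, e^{S_{\textup{upper}}}/\Delta)$ statement; everything else is a fairly standard assembly of the SWAP test, Chernoff-style concentration, and the Lipschitz continuity of the logarithm.
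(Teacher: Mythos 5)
Your algorithm is essentially the paper's: prepare $\ket{\psi}$ by amplitude amplification at a cost of $\mO(\sqrt{d})$ queries per copy, use the SWAP test to expose the purity $P=\tr[\rho_A^2]$ as an acceptance probability, read that probability out with quantum amplitude estimation to additive error $\eta=\Theta(\Delta e^{-S_{\textup{upper}}})$, and convert back to $S_2$ using that $-\ln$ has Lipschitz constant $e^{S_{\textup{upper}}}$ on $[e^{-S_{\textup{upper}}},1]$. That is exactly the structure of the paper's proof, including the key realisation that classical sampling would give $1/\eta^2$ and amplitude estimation is what saves the extra factor.

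The one step where your sketch has a genuine hole is the state-preparation subroutine. Amplitude amplification costs $\mO(\sqrt{d})$ only if the initial state has overlap $\Omega(1/\sqrt{d})$ with $\ket{\psi}$, and the initial state you gesture at, $\ket{0^n}\ket{0^n}$, carries no such guarantee: its overlap with $\ket{\psi}$ can be exactly zero (so your ``as small as $1/d$ in the worst case'' is not the worst case), in which case both the post-selection route of \cref{cor:EE_copies} and any amplitude-amplification routine seeded from it fail outright, since the dynamics never leave the span of $\ket{0^n0^n}$ and $\ket{\psi}$. The paper's fix is to seed amplitude amplification with the maximally entangled state $\frac{1}{\sqrt{d}}\sum_{j}\ket{j}\ket{j}$: rewriting it in the Schmidt basis of $\ket{\psi}$ shows its overlap with \emph{any} bipartite pure state is $\frac{1}{\sqrt{d}}\sum_k \lambda_k \geq \frac{1}{\sqrt{d}}\sum_k \lambda_k^2 = \frac{1}{\sqrt{d}}$, which is precisely what makes the (exact) amplitude amplification terminate after $\mO(\sqrt{d})$ rounds with the state prepared deterministically --- and a deterministic preparation is also what lets you wrap it inside amplitude estimation without the logarithmic overhead you were worried about. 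A second, smaller point: rather than invoking a two-sided Lipschitz bound, the paper deliberately biases the purity estimate upward by $\epsilon/2$ so that $\hat{x}\geq x\geq e^{-S_{\textup{upper}}}$, keeping the $-\ln$ conversion controlled without worrying about the estimate dipping below the promised lower bound on $P$. With those two details supplied, your argument coincides with the paper's.
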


\begin{proof} 
    Let $\mathcal{A}$ be the quantum circuit which creates the maximally entangled state from the all-zeros state, i.e.
    \begin{align*}
        \mathcal{A} \ket{0\dots 0} = \frac{1}{\sqrt{d}} \sum_{j = 0}^{d-1} \ket{j} \ket{j} = \frac{1}{\sqrt{d}} \sum_{j = 0}^{d-1} \ket{\psi_j} \ket{\bar{\psi}_j},
    \end{align*}
    using the fact that the maximally entangled state can be written in any orthonormal basis. By using exact amplitude amplification, we can prepare $\ket{\psi}$ exactly with probability 1 using $\mO(\sqrt{d})$ queries to $U$ and $\mathcal{A}$~\cite{hoyer2000arbitrary}. Let $\mathcal{B}$ be the algorithm which does this. If the SWAP-test~\cite{buhrman2001quantum} is applied to two copies of a mixed state $\rho$, the probability of measuring $0$ on the first qubit is given by $\frac{1}{2} + \frac{1}{2}\tr[\rho^2]$ (a proof of this can be found in Ref.~\cite{kobayashi2003quantum}). Using the SWAP test in conjunction with quantum amplitude estimation~\cite{brassard2002quantum}, we can estimate $\tr[\rho_A^2]$ up to additive precision $\epsilon$ using $\mO(1/\epsilon)$ copies of $\ket{\psi}$ (i.e., calls to $\mathcal{B}$) with high probability. To make the estimation, when it succeeds, biased towards always returning an overestimate, we can apply the simple trick of providing an estimate $\bar{x}$ up to $\epsilon/2$ additive precision and making our new estimate $\hat{x} := \bar{x} + \epsilon/2$. Let $x := \tr[\rho_A^2]$. If we are guaranteed an upper bound on the entanglement entropy of $S_\textup{upper}$, then we have that $x \in [e^{-S_\textup{upper}}, 1]$. Choosing $\epsilon = \Delta e^{- S_\textup{upper}}$, we have
   \begin{align*}
   \abs{-\ln x - (-\ln \hat{x})} &= \abs{ -\ln (x + \epsilon) - (- \ln (x)) } \\
       &= \abs{ \ln\left(\frac{x}{x + \epsilon} \right) } \\
       &= \ln (1 + \epsilon/x) \\
       &\leq \ln (1 + \epsilon e^{S_{\textup{upper}}}) \\
       &\leq \epsilon e^{S_{\textup{upper}}} \\
       &\leq \Delta/4,
   \end{align*}
   which is a sufficient precision to distinguish between the two cases. Hence, we need to make a total of
   \begin{align*}
       T = \mO\left(\sqrt{d} \cdot \frac{1}{\epsilon}\right) = \mO\left(\frac{\sqrt{d} e^{S_\textup{upper}}}{\Delta}\right)
   \end{align*}
 queries to $U$.
\end{proof}

\noindent Our lower bound does not incorporate the dependence on the dimension, nor does it account for the fact that the estimation might become harder as the entanglement entropy approaches its maximum value. The lack of dimension-dependence might be explained by the fact that our lower bound does not take into account the difficulty of preparing $\ket{\psi}$ given access to $U_i$, as indicated by~\cref{cor:EE_copies}.

However, even when $S_\textup{upper}$ and $d$ are constant, this bound is still quadratically worse in the precision $\Delta$ than the lower bounds from~\cref{clm:QEP} and Ref.~\cite{wang2023quantum}. We leave it as an open question whether these bounds (upper or lower) can be improved, especially in terms of achieving a better dependency on the precision $\Delta$.

\subsubsection{Subset support verification and amplitude estimation}
Here we consider a variant to the amplitude estimation problem, which we will call the \textit{subset support verification} problem. In this problem, one is given access to a unitary $U$ which prepares a subset state for some subset $S \subseteq \{0,1\}^n$ when applied to the all-zeros state, as well as a bit string $j \in \{0,1\}^n$. The task is to decide whether $j \in S$ or $j \notin S$.

\begin{claim}[Lower bound for subset support verification] Let $U$ be a unitary that applies a transformation of the form
\begin{align*}
    U \ket{0^n}  = \frac{1}{\sqrt{|S|}}\sum_{i \in S} \ket{i},
\end{align*}
and let $j \in \{0,1\}^n$. Then any $\mathsf{C}$-tester, where $\mathsf{C} \subseteq \QMA(2)\slash\mathsf{qpoly}$, that decides whether $j\in S$ or $j \notin S$ with probability $\geq 2/3$ has to make at least
\begin{align*}
    T \geq \Omega\left({\sqrt{|S|}}\right)
\end{align*}
(controlled) queries to $U$ (or its inverse).
\label{clm:SSV}
\end{claim}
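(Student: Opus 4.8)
The plan is to run the three-step framework of \cref{sec:technique}. Write $k=|S|$; since any tester makes at least one query, the target bound $T\geq\Omega(\sqrt k)$ is vacuous for $k$ below an absolute constant, so I may assume $k$ is large throughout. For step~1, fix $n$ with $2^n\geq k+1$, identify $\{0,1\}^n$ with $\{0,\dots,2^n-1\}$, set $j=0^n$, and take $S_2=\{1,\dots,k\}$ (so $j\notin S_2$) and $S_1=\{0\}\cup\{2,\dots,k\}$ (so $j\in S_1$ and $|S_1|=|S_2|=k$). Let $\ket{\phi_i}=k^{-1/2}\sum_{m\in S_i}\ket m$ be the corresponding subset states. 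Since $S_1$ and $S_2$ overlap in $k-1$ elements, $\langle\phi_1|\phi_2\rangle=1-1/k$ and $\|\ket{\phi_1}-\ket{\phi_2}\|=\sqrt{2/k}$. Fix any $U_2\in\mathbb{U}(2^n)$ with $U_2\ket{0^n}=\ket{\phi_2}$ (it exists by completing $\ket{0^n}\mapsto\ket{\phi_2}$ to an orthonormal basis). Let $\beta=\arccos(1-1/k)\in(0,\pi/2]$ and let $R\in\mathbb{U}(2^n)$ be the rotation by $\beta$ inside $\mathrm{span}\{\ket{\phi_1},\ket{\phi_2}\}$ sending $\ket{\phi_2}$ to $\ket{\phi_1}$, acting as the identity on the orthogonal complement; put $U_1=RU_2$. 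Then $U_1\ket{0^n}=R\ket{\phi_2}=\ket{\phi_1}$, so $U_1$ is a yes-instance and $U_2$ a no-instance of subset support verification with marked string $j$.

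For the distance estimates (step~2 and the $\conv$-condition), unitary invariance of the operator norm gives $\opnorm{U_1-U_2}=\opnorm{R-\mathbb{I}}=\abs{e^{i\beta}-1}=2\sin(\beta/2)=\sqrt{2/k}$. For $k$ past a constant this is below $1.4$, so \cref{lem:op_distance_convhull} yields $0\notin\conv(\eig(U_1^\dagger U_2))$, and \cref{lem:norm_conv2} (equivalently the Remark after \cref{thm:lb_technique}) yields $\tfrac12\dnorm{\mathcal{U}(U_1)-\mathcal{U}(U_2)}\leq\opnorm{U_1-U_2}=\sqrt{2/k}=:\epsilon$.

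For step~3, any $\mathsf{C}$-tester with $\mathsf{C}\subseteq\QMA(2)\slash\mathsf{qpoly}$ solving subset support verification for the string $j$ must in particular accept on input $U_1$ and reject on input $e^{i\theta}U_2$ for every $\theta\in[0,2\pi)$, since multiplying $U_2$ by a global phase does not change the underlying subset $S_2\not\ni j$ (the prepared state is the same up to an unobservable global phase), so $e^{i\theta}U_2$ remains a legitimate no-instance. Hence such a tester distinguishes $U_1$ from $U_2^\theta$ with success probability $\geq 2/3$ for every $\theta$, including the one supplied by \cref{thm:lb_technique}. Invoking \cref{thm:lb_technique} with the promise $\tfrac12\dnorm{\mathcal{U}(U_1)-\mathcal{U}(U_2)}\leq\epsilon$ gives $T\geq\Omega(1/\epsilon)=\Omega(\sqrt k)=\Omega(\sqrt{|S|})$. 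To obtain the statement uniformly in $n$ one repeats the construction with $U_i=U_i(n)$, setting $\mathcal{P}_\textup{yes}=\{U_1(n)\}_n$ and $\mathcal{P}_\textup{no}=\{U_2(n)\}_n$, exactly as in the proof of \cref{thm:lb_technique}.

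\textbf{Main obstacle.} The only substantive step is step~1: producing two unitaries that are simultaneously honest subset-state preparations (one with $j$ inside its support, one without) for a common marked string, yet as close as the Heisenberg limit allows. Two subset states that disagree on membership of $j$ differ by $\Omega(1/\sqrt k)$ in the $\ket j$-amplitude, so one cannot push $\epsilon$ below $\Theta(1/\sqrt k)$; the content is that routing this unavoidable gap through a single planar rotation $R$ loses nothing, which is what makes $\Omega(\sqrt{|S|})$ tight. The only bookkeeping subtlety is checking that the global-phase freedom used in \cref{thm:lb_technique} (so that controlled/inverse queries do not help, cf.\ \cref{lem:dddqt}) is harmless here, which it is because $e^{i\theta}U_2$ still prepares the subset state of $S_2$ up to global phase.
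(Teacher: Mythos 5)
Your proposal is correct, and it reaches the bound by a genuinely different construction than the paper. The paper also takes $j=0^n$ but uses the reflection unitaries $U_i = 2\ketbra{v_i}-\mathbb{I}$ built from Householder-type vectors $\ket{v_i}\propto\ket{0^n}+\ket{S_i}$ with $S_2=S_1\setminus\{0^n\}$ (so the two subsets have sizes $|S|$ and $|S|-1$); it then writes $\ket{v_2}$ in the $\{\ket{v_1},\ket{v_1^\perp}\}$ basis, computes $\eig(U_1^\dagger U_2)$ explicitly as a $2\times 2$ block plus identity, and evaluates the diamond distance exactly via \cref{lem:norm_conv1}, bounding the overlap parameter $\sqrt{\alpha}\leq\sqrt{1/|S|}$ by a direct calculation. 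You instead take two equal-size subsets differing in one element, let $U_2$ be an arbitrary completion of the state preparation, and set $U_1=RU_2$ for a planar rotation by $\beta=\arccos(1-1/k)$, so that $\opnorm{U_1-U_2}=2\sin(\beta/2)=\sqrt{2/k}$ falls out immediately; you then route the bound through \cref{lem:op_distance_convhull}, \cref{lem:norm_conv2} and the remark after \cref{thm:lb_technique} rather than through an exact diamond-norm computation. Your route avoids the eigenvalue bookkeeping and the asymmetry $|S_2|=|S_1|-1$, at the cost of being less explicit about the unitaries; the paper's route yields the diamond distance exactly. Your explicit remark that $e^{i\theta}U_2$ remains a legitimate no-instance addresses a point the paper leaves implicit in all its applications of \cref{thm:lb_technique} (and in fact, since $1\in\eig(U_1^\dagger U_2)$ already in your construction, the proof of \cref{lem:dddqt} lets you take $\theta=0$, so the issue disappears entirely). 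Both arguments are sound and give the same $\Omega(\sqrt{|S|})$ bound.
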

\begin{proof}
Let $S \subseteq \{0,1\}^n$ be any subset that contains $0^n$, and let $j = 0^n$ (the proof can easily be modified to work for any choice of $j$). We can construct an explicit unitary $U$ by setting $U = 2\ketbra{v}-\mathbb{I}$, where
\begin{align*}
    \left(2\ketbra{v}-\mathbb{I}\right)\ket{0^n} = \ket{S}.
\end{align*}
Solving for $\ket{v}$, using that $\bra{0 \dots 0} \ket{S}$ is real we obtain 
\begin{align*}
    \ket{v} = \frac{\ket{0^n} + \ket{S}}{\sqrt{2(1+\bra{0 \dots 0}\ket{S})}}.
\end{align*}
Let $S_1 = S$ and $S_2 = S_1 \setminus \{0^n\}$. We define
\begin{align*}
    \ket{v_1} = \frac{\ket{0^n} + \ket{S_1}}{\sqrt{2(1+\sqrt{1/|S|})}},\quad \ket{v_2} = \frac{\ket{0^n} + \ket{S_2}}{\sqrt{2}},
\end{align*}
and
\begin{align*}
    U_1 = 2\ketbra{v_1}-\mathbb{I},\quad U_2 = 2\ketbra{v_2}-\mathbb{I}.
\end{align*}
We can write $\ket{v_2} = \sqrt{1-\alpha} \ket{v_1} + \sqrt{\alpha}  | v_1^\perp \rangle$ for some $\alpha \in [0,1]$. We have
\begin{align*}
    U_2 = 2 \left[(1-\alpha)\ketbra{v_1} + \sqrt{1-\alpha}\sqrt{\alpha} \ket{v_1} \langle v_1^\perp| + \sqrt{1-\alpha}\sqrt{\alpha} \ket{v_1} \!\langle v_1^\perp | + \alpha | v_1^\perp \rangle \! \langle v_1^\perp |  \right]   -\mathbb{I}.
\end{align*}
Writing $U_1^\dagger U_2 $ in the $\{ \ket{v_1},\ket{v_1^\perp}, \dots\}$ basis we obtain
\begin{align*}
    U_1^\dagger U_2 = \begin{bmatrix} B  & 0\\
    0 & I_{2^{n-1}}
    \end{bmatrix},
\end{align*}
where $B$ is a $2 \cross 2$-matrix given by
\begin{align*}
    B = \begin{bmatrix} 1-2 \alpha  & 2 \sqrt{1-\alpha}\sqrt{\alpha}\\
    -2 \sqrt{1-\alpha}\sqrt{\alpha} &1-2 \alpha
    \end{bmatrix}.
\end{align*}
Since $U_1^\dagger U_2$ is a block-diagonal matrix, its eigenvalues are given by
\begin{align*}
    \eig(U_1^\dagger U_2) = \eig(B)  \cup \eig(\mathbb{I}_{2^{n-1}})  = \{1, 1 - 2 \alpha -2 \sqrt{\alpha^2-\alpha}, 1 - 2 \alpha +2 \sqrt{\alpha^2-\alpha} \}.
\end{align*}
Again, it is easy to see that for $\alpha<1$ we have
 $0 \notin \conv(\eig(U_1^\dagger U_2)$.
Therefore, $D = 1-2\alpha$ and
\begin{align*}
    \frac{1}{2}\dnorm{(\mathcal{U}(U_1) - \mathcal{U}(U_2))} = 2 \sqrt{\alpha-\alpha^2}  \leq 2 \sqrt{\alpha} \leq 2\sqrt{\frac{1}{|S|}}
\end{align*}
since,
\begin{align*}
    \sqrt{\alpha} &= \sqrt{1-\abs{\bra{v_1}\ket{v_2}}^2}\\
    &=  \sqrt{1- \abs{ \frac{1 + \bra{0^n} \ket{S_2} + \bra{S_1}\ket{0^n}+ \bra{S_1}\ket{S_2}}{2 \sqrt{1+\sqrt{1/|S|}}} }^2}\\
    &=  \sqrt{1- \abs{ \frac{1 + \sqrt{1/|S|}+  \sqrt{(|S|-1)/|S|}   }{2 \sqrt{1+\sqrt{1/|S|}}} }^2}\\
    &=\sqrt{\frac{1}{2}-\frac{1}{2}\sqrt{\frac{|S|-1}{|S|}}}\\
    &\leq \sqrt{\frac{1}{2}-\frac{1}{2}{\frac{|S|-1}{|S|}}}\\
    &\leq  \sqrt{\frac{1}{|S|}}.
\end{align*}
Hence, by~\cref{thm:lb_technique} we have that
$
    T \geq \mO\left(\sqrt{|S|}\right)
$
to distinguish $U_1$ and $U_2$ with probability $\geq 2/3$.
\end{proof}

The upper bound of $\mO(\sqrt{S})$ follows directly from quantum amplitude estimation~\cite{brassard2002quantum}, for which the optimality is also a corollary of~\cref{clm:SSV}, as it proves a lower bound in a more restricted setting.

\begin{corollary}[Quantum amplitude estimation lower bound] Given a unitary $U$ which acts as $U \ket{0^n} = \sqrt{\alpha} \ket{\psi} +\sqrt{1-\alpha} |\psi^\perp\rangle$. Suppose that either (i) $\sqrt{\alpha} \leq a$ or (ii) $\sqrt{\alpha}  \geq b$ with $b-a = \epsilon$. Then any $\mathsf{C}$-tester, where $\mathsf{C} \subseteq \QMA(2)\slash\mathsf{qpoly}$, that decides whether (i) or (ii) holds with probability $\geq 2/3$ has to make at least
\begin{align*}
    T \geq \Omega\left(\frac{1}{\epsilon}\right)
\end{align*}
(controlled) queries to $U$ (or its inverse).
\label{cor:QAE}
\end{corollary}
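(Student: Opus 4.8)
The plan is to realize the subset support verification problem of \cref{clm:SSV} as a special case of amplitude estimation, so that the lower bound carries over verbatim. Fix a subset $S\subseteq\{0,1\}^n$ with $0^n\in S$ and take the designated string $j=0^n$, exactly as in the proof of \cref{clm:SSV}. Any unitary $U$ with $U\ket{0^n}=\frac{1}{\sqrt{|S|}}\sum_{i\in S}\ket{i}$ can be written as $U\ket{0^n}=\sqrt{\alpha}\,\ket{0^n}+\sqrt{1-\alpha}\,\ket{(0^n)^\perp}$, where $\ket{(0^n)^\perp}$ is the normalised component orthogonal to $\ket{0^n}$. If $j\in S$ then $\sqrt{\alpha}=\bra{0^n}U\ket{0^n}=1/\sqrt{|S|}$, and if $j\notin S$ then $\sqrt{\alpha}=0$. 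Hence deciding whether $\sqrt{\alpha}\le a:=0$ or $\sqrt{\alpha}\ge b:=\epsilon$ for any $\epsilon\le 1/\sqrt{|S|}$ is precisely an instance of amplitude estimation with gap $b-a=\epsilon$ (the ``good'' subspace being the fixed computational-basis projector $\ketbra{0^n}$), and it solves subset support verification for this $S$.

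\textbf{Key steps.} First I would set up the reduction above, noting that it is query-preserving: an amplitude estimation tester run on the SSV unitary $U$, with good subspace $\ketbra{0^n}$, makes no extra queries to $U$ and its advice/proof states transfer directly, since — as discussed in \cref{sec:prelim} — each choice of $\epsilon$ (equivalently each $|S|$) is a separate property testing problem with its own advice. Second, given an arbitrary target precision $\epsilon\in(0,1)$, I would choose $|S|=\lfloor 1/\epsilon^2\rfloor$, so that the yes-case amplitude $1/\sqrt{|S|}\ge\epsilon=b$ and the no-case amplitude $0=a$ indeed form a legitimate gap-$\epsilon$ instance while $\sqrt{|S|}=\Omega(1/\epsilon)$. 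Third, I would invoke \cref{clm:SSV}: any $\mathsf{C}$-tester with $\mathsf{C}\subseteq\QMA(2)\slash\mathsf{qpoly}$ solving this instance with probability $\ge 2/3$ must make $\Omega(\sqrt{|S|})=\Omega(1/\epsilon)$ queries, which is exactly the claimed bound.

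\textbf{Main obstacle (there isn't much of one).} The only thing requiring care is the direction of the reduction and the bookkeeping: subset support verification is the \emph{more restricted} problem — an amplitude estimation tester must solve it as a sub-case — so a lower bound for it implies one for amplitude estimation, not the other way around, and one should check that the reduction introduces no queries and is compatible with the proof/advice model. Alternatively, and perhaps even more cleanly, one can bypass \cref{clm:SSV} altogether and feed the unitaries $U_1=2\ketbra{v_1}-\mathbb{I}$ and $U_2=2\ketbra{v_2}-\mathbb{I}$ built in its proof — which already satisfy $\bra{0^n}U_1\ket{0^n}=1/\sqrt{|S|}$, $\bra{0^n}U_2\ket{0^n}=0$, the condition $0\notin\conv(\eig(U_1^\dagger U_2))$, and $\tfrac12\dnorm{\mathcal{U}(U_1)-\mathcal{U}(U_2)}\le 2/\sqrt{|S|}$ — directly into \cref{thm:lb_technique} with $\epsilon_{\mathrm{thm}}=2/\sqrt{|S|}$ to get $T\ge\Omega(\sqrt{|S|})=\Omega(1/\epsilon)$ in a single line.
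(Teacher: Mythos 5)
Your proposal is correct and follows essentially the same route as the paper: the paper's proof also sets $|S|=1/\epsilon^2$, observes that $\bra{0^n}U_1\ket{0^n}=1/\sqrt{|S|}=\epsilon$ while $\bra{0^n}U_2\ket{0^n}=0$, and invokes the unitaries and diamond-distance bound from the proof of \cref{clm:SSV} to conclude $T\geq\Omega(1/\epsilon)$. Your added care about the direction of the reduction and the floor-function bookkeeping is sound but does not change the argument.
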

\begin{proof}
    Let $\ket{S} = \frac{1}{\sqrt{|S|}}\sum_{i \in S} \ket{i}$ with $|S| = 1/\epsilon^2$, where $\epsilon$ is chosen such that $|S|$ is integer (this assumption does not change the bound qualitatively). Let $\ket{\psi} =\ket{0^n}$. Note that
\begin{align*}
    \sqrt{\alpha_1} = \bra{0^n} U_1 \ket{0^n} = 1/\sqrt{|S|}, \quad \sqrt{\alpha_2} = \bra{0^n}  U_2 \ket{0^n} = 0, 
\end{align*}
so deciding whether $\sqrt{\alpha_i}=\frac{1}{ \sqrt{|S|}} = \epsilon$ or  $\sqrt{\alpha_i} = 0$ is sufficient to distinguish $U_1$ from $U_2$. By the proof of~\cref{clm:SSV}, we then find $
    T \geq \Omega(1/\epsilon)
$,
completing the proof.
\end{proof}

\subsection{Other unitary problems}
The next two examples are not unitary property testing cases, but quantum algorithmic primitives involving the implementation of some unitary given access to some building block in the form of a \textit{block-encoding}~\cite{low2017optimal,low2019hamiltonian}.
\begin{definition}[Block-encoding] Let $M$ be $2^n \cross 2^n$-dimensional matrix, $\alpha,\epsilon > 0$ and $a \in \mathbb{N}$. An $(n+a)$-qubit unitary operator $U$ is an $(\alpha,a,\epsilon)$-block-encoding of $M$ if
\begin{align*}
    \opnorm{\alpha (\bra{0}^{\otimes a} \otimes \mathbb{I}) U (\ket{0}^{\otimes a}  \otimes \mathbb{I}) -M} \leq \epsilon.
\end{align*}
If $\alpha = a =1$ and $\epsilon =0$,  we simply say that $U$ is a block-encoding of $M$.
\label{def:block_encoding}
\end{definition}
\noindent Given some $n$-qubit operator $M$, one can construct a $n+1$-qubit unitary operator $U$ provided all singular values of $M$ are upper bounded by 1 in the following way. Let $M=R\Sigma V^\dagger$ be the singular value decomposition (SVD) of $M$. Then 
\begin{align}
    U = \begin{pmatrix} M & R \sqrt{\mathbb{I}-\Sigma^2}V^\dagger \\ R \sqrt{\mathbb{I}-\Sigma^2}V^\dagger & -M
    \end{pmatrix}
    \label{eq:block_constr}
\end{align}
is a $(1,1,0)$-block-encoding of $M$, since
\begin{align*}
    (\bra{0}\otimes I_{2}) U (\ket{0} \otimes I_{2}) =M.
\end{align*}

\subsubsection{Thermal state preparation (quantum Gibbs sampling)}
For some Hamiltonian $H$, the thermal state $\rho_{\beta}$ at inverse temperature $\beta$ is defined as 
\begin{align}
    \rho_{\beta} = \frac{e^{- \beta H}}{\tr[e^{- \beta H}]}.
    \label{eq:TS}
\end{align}
We say a quantum algorithm is an \textit{approximate Gibbs sampler} if it prepares the thermal state $\rho_{\beta}$ up to some trace distance $\epsilon$. In Ref.~\cite{kastoryano2023quantum}, an optimal lower bound in $\beta$ is proven in Appendix G. We demonstrate that this bound can also be derived using our framework. The main distinction is that we prove the result by directly examining the diamond distance between the block-encodings of the Hamiltonians, rather than using reflections about a purified Gibbs state.

\begin{claim}[Lower bound for quantum Gibbs sampling] Let $H$ be some Hamiltonian to which we have access through a block-encoding $U_H$. Suppose $\beta \geq \sqrt{\frac{14}{3}} \approx 2.16$. Then it takes at least 
\begin{align*}
T \geq \Omega(\beta)    
\end{align*}
queries to $U_H$ to prepare the thermal state at inverse temperature $\beta$ up to trace distance $\leq \frac{1}{24}$.
\label{claim:QGS}
\end{claim}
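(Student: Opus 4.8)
The plan is to follow the three-step strategy laid out in \cref{sec:technique}: pick two Hamiltonians $H_1, H_2$ whose thermal states at inverse temperature $\beta$ are far apart in trace distance, argue that an approximate Gibbs sampler distinguishes their block-encodings, bound the diamond distance between the block-encodings, and invoke \cref{thm:lb_technique}. The natural choice is to take single-qubit Hamiltonians (so $H_i$ acts on one qubit, and the block-encoding $U_H$ from \cref{eq:block_constr} is a two-qubit unitary). Concretely I would set $H_1 = \mathrm{diag}(0,0)$ (or a tiny multiple of $Z$) and $H_2 = c \ketbra{1}$ for a small constant $c = c(\beta)$ scaled like $1/\beta$, so that $\rho_{\beta}^{(1)}$ is the maximally mixed state while $\rho_{\beta}^{(2)}$ is noticeably biased toward $\ket{0}$: one computes $\rho_{\beta}^{(2)} = \frac{1}{1+e^{-\beta c}}\ketbra{0} + \frac{e^{-\beta c}}{1+e^{-\beta c}}\ketbra{1}$, whose trace distance from $\mathbb{I}/2$ is $\frac{1}{2}\cdot\frac{1-e^{-\beta c}}{1+e^{-\beta c}} = \frac12\tanh(\beta c/2)$. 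Choosing $\beta c$ to be a fixed constant makes this trace distance a fixed constant; the hypotheses $\beta \geq \sqrt{14/3}$ and the target trace distance $1/24$ are exactly the bookkeeping needed to make the gap between the two thermal states exceed $2\cdot\frac{1}{24}$, so that any sampler achieving trace-distance $1/24$ on both distinguishes the two cases with probability $\geq 2/3$.

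Next I would bound $\frac{1}{2}\dnorm{\mathcal{U}(U_{H_1}) - \mathcal{U}(U_{H_2})}$. Since $H_1$ and $H_2$ differ by an operator of norm $c = \Theta(1/\beta)$, the block-encodings constructed via \cref{eq:block_constr} differ in operator norm by $O(c)$ — the $M$ and $-M$ blocks shift by $O(c)$ and the $\sqrt{\mathbb{I}-\Sigma^2}$ blocks shift by $O(c)$ as well (all singular values are $O(c)$, so these square-root terms are close to $1$ and their difference is $O(c)$ by a Lipschitz-type estimate near the boundary, which I should check is not the bottleneck — see below). Then by the \cref{lem:norm_conv2} bound, $\frac{1}{2}\dnorm{\mathcal{U}(U_{H_1}) - \mathcal{U}(U_{H_2})} \leq \opnorm{U_{H_1} - U_{H_2}} = O(c) = O(1/\beta)$, and I would verify $0 \notin \conv(\eig(U_{H_1}^\dagger U_{H_2}))$ using \cref{lem:op_distance_convhull} once $c$ is small enough (which it is for $\beta$ above the stated threshold, possibly after the global-phase freedom of \cref{lem:rot_0convhull,lem:dddqt}). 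Feeding $\epsilon = O(1/\beta)$ into \cref{thm:lb_technique} yields $T \geq \Omega(\beta)$.

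The main obstacle I anticipate is the $\sqrt{\mathbb{I}-\Sigma^2}$ term in the block-encoding: the square root is not Lipschitz at $1$, and here the relevant singular values are $O(c)$, i.e.\ \emph{near zero} where the eigenvalue of $\mathbb{I}-\Sigma^2$ is near $1$ — so actually $\sqrt{1-\sigma^2} = 1 - O(\sigma^2) = 1 - O(c^2)$, meaning the off-diagonal blocks of $U_{H_1}$ and $U_{H_2}$ differ only by $O(c^2)$, which is even better than $O(c)$. So the dominant contribution to $\opnorm{U_{H_1}-U_{H_2}}$ is the $O(c)$ difference in the diagonal $\pm M$ blocks, and the estimate goes through cleanly. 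A secondary subtlety is making sure the two chosen Hamiltonians genuinely lie in the ``$\mathcal{P}_{\textup{yes}}$'' and ``$\mathcal{P}_{\textup{no}}$'' sides of whatever $a,b$ thresholds one wants to quote — but since thermal-state preparation is phrased as a state-preparation task rather than a decision problem, this just amounts to the trace-distance separation computed above, and the constants $\sqrt{14/3}$ and $1/24$ are precisely tuned so that $\frac12\tanh(\beta c/2) - 2\cdot\frac{1}{24} \geq$ (some positive constant) with a convenient choice like $\beta c$ fixed; I would reverse-engineer $c$ from these constraints rather than guess it, and present the one-line reduction to \cref{thm:lb_technique} at the end.
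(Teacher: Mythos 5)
Your proposal is correct and follows essentially the same route as the paper: the paper likewise picks two commuting single-qubit Hamiltonians differing by $\Theta(1/\beta)$ (specifically $\tfrac12\pm\tfrac1\beta$ on the two basis states, swapped between $H_1$ and $H_2$), shows their Gibbs states are a constant apart in trace distance, bounds $\opnorm{U_{H_1}-U_{H_2}}\leq 3/\beta$ for the explicit block-encodings, checks the convex-hull condition via \cref{lem:op_distance_convhull}, and concludes with \cref{lem:norm_conv2} and \cref{thm:lb_technique}. The only bookkeeping point to fix when you tune constants is that the exact Gibbs states must be at least $\tfrac13+2\cdot\tfrac1{24}$ apart in trace distance (not merely $2\cdot\tfrac1{24}$) to get distinguishing probability $\geq 2/3$, which your $\tfrac12\tanh(\beta c/2)$ achieves for a suitable fixed value of $\beta c$.
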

\begin{proof}
    Let $H_1= \left(\frac{1}{2}+\frac{1}{\beta}\right) \ket{0}\bra{0} + \left(\frac{1}{2}-\frac{1}{\beta}\right)  \ket{1}\bra{1}$, and $H_2= \left(\frac{1}{2}-\frac{1}{\beta}\right) \ket{0}\bra{0} + \left(\frac{1}{2}+\frac{1}{\beta}\right)  \ket{1}\bra{1}$. We have that by~\cref{eq:TS} the thermal states are given by
    \begin{align*}
        \rho_{1,\beta} = \frac{1}{1+e^2}\begin{bmatrix}
            1 & 0\\
            0 & e^2
        \end{bmatrix}
    \end{align*} and
        \begin{align*}
        \rho_{2,\beta}= \frac{1}{1+e^2}\begin{bmatrix}
            e^2 & 0\\
           0 & 1.
        \end{bmatrix}
    \end{align*}
Therefore,
\begin{align*}
    \frac{1}{2}\tnorm{\rho_{1,\beta}-\rho_{2,\beta}} = 1-\frac{2}{1+e^2} \geq \frac{3}{4}.
\end{align*}
Now suppose that we can only prepare some $\tilde{\rho}_{i,\beta}$ such that $\frac{1}{2}\tnorm{\tilde{\rho}_{i,\beta}-{\rho}_{i,\beta}} \leq \epsilon$ for $i \in \{1,2\}$.  By applying the reverse triangle inequality twice, we find that
\begin{align*}
    \frac{1}{2}\tnorm{\tilde{\rho}_{1,\beta}-\tilde{\rho}_{2,\beta}} & = \frac{1}{2}\tnorm{\rho_{1,\beta} - \rho_{2,\beta} - (\rho_{1,\beta}-\tilde{\rho}_{1,\beta}) -(\tilde{\rho}_{2,\beta} - \rho_{2,\beta}) }\\
    &\geq \frac{1}{2}\abs{\tnorm{\rho_{1,\beta} - \rho_{2,\beta} - (\rho_{1,\beta}-\tilde{\rho}_{1,\beta})} - \tnorm{\tilde{\rho}_{2,\beta} - \rho_{2,\beta}} }\\
    &\geq \frac{1}{2}\abs{ \abs{\tnorm{\rho_{1,\beta} - \rho_{2,\beta}} -\tnorm{\rho_{1,\beta}-\tilde{\rho}_{1,\beta}}} - \tnorm{\tilde{\rho}_{2,\beta} - \rho_{2,\beta}} }\\
    &\geq \frac{3}{4}- 2 \epsilon \geq \frac{2}{3}
\end{align*}
when $\epsilon \leq \frac{1}{24}$, which means that $\tilde{\rho}_{1,\beta}$ and $\tilde{\rho}_{2,\beta}$ can be distinguished with success probability $\geq 2/3$. Hence, if $\tilde{\rho}_{i,\beta}$ can be constructed using the block-encoding $U_i$ of $H_i$, we have a distinguisher for unitary channels associated with $U_i$ for $i \in \{1,2\}$.

The SVDs of $H_1$ and $H_2$ are $\mathbb{I} H_1 \mathbb{I}$ and $\mathbb{I} H_2 \mathbb{I}$, since both are diagonal and PSD. Using~\cref{eq:block_constr} we can construct the following two block-encodings of $H_1$ and $H_2$:
\begin{align*}
        U_1= \begin{bmatrix}
   \frac{1}{2}+\frac{1}{\beta} & 0 & \sqrt{1-
   \left(\frac{1}{2}+\frac{1}{\beta}\right)^2} & 0\\
   0 & \frac{1}{2}-\frac{1}{\beta} & 0 & \sqrt{1-
   \left(\frac{1}{2}-\frac{1}{\beta}\right)^2}\\
   \sqrt{1-
   \left(\frac{1}{2}+\frac{1}{\beta}\right)^2} & 0 & -\frac{1}{\beta}-\frac{1}{2} & 0\\
   0 & \sqrt{1-
   \left(\frac{1}{2}-\frac{1}{\beta}\right)^2} & 0 & \frac{1}{\beta}-\frac{1}{2}
        \end{bmatrix},
\end{align*}
\begin{align*}
        U_2= \begin{bmatrix}
   \frac{1}{2}-\frac{1}{\beta} & 0 & \sqrt{1-
   \left(\frac{1}{2}-\frac{1}{\beta}\right)^2} & 0\\
   0 & \frac{1}{2}+\frac{1}{\beta} & 0 & \sqrt{1-
   \left(\frac{1}{2}+\frac{1}{\beta}\right)^2}\\
   \sqrt{1-
   \left(\frac{1}{2}-\frac{1}{\beta}\right)^2} & 0 & -\frac{1}{\beta}+\frac{1}{2} & 0\\
   0 & \sqrt{1-
   \left(\frac{1}{2}+\frac{1}{\beta}\right)^2} & 0 & -\frac{1}{\beta}-\frac{1}{2}
        \end{bmatrix}.
\end{align*}
Evaluating the operator norm distance of $U_1$ and $U_2$ gives us
\begin{align*}
    \opnorm{U_1-U_2} = \frac{\sqrt{3 \beta^2-\sqrt{9 \beta^4-40 \beta^2+16}+4}}{\sqrt{2} \beta} \leq \frac{}{} \frac{3}{\beta}
    % \label{eq:claim_4_ineq}
\end{align*}
for $\beta \geq \sqrt{\frac{14}{3}}$. Moreover, for these values of $\beta$ we have $\opnorm{U_1-U_2} \leq  \frac{3}{\sqrt{14/3}} \leq 1.4 $, which means that $0 \notin \conv(\eig(U_1^\dagger U^2)$ by~\cref{lem:op_distance_convhull}.
Therefore, by~\cref{lem:norm_conv2} we then have
$
    T \geq \Omega(\beta)
$.
\end{proof}
\noindent This matches the upper bound of the known quantum Gibbs samplers, see for example Ref.~\cite{chen2023efficient}.

\subsubsection{No fast-forwarding for Hamiltonian simulation}
In Hamiltonian simulation one has access to some Hamiltonian $H$ and is given a time $t \in \mathbb{R}$, with the goal of implementing the unitary $\tilde{U}$ which approximates $U = e^{-i t H}$ up to diamond distance $\epsilon$. It is well-known that it is generally not possible to do so-called Hamiltonian \textit{fast-forwarding}, which refers to a Hamiltonian simulation algorithm which implements the $\tilde{U}$ in time sub-linear in $t$~\cite{berry2007efficient}.   

\begin{claim}[No fast-forwarding for Hamiltonian simulation] Let $H$ with $\opnorm{H} \leq 1$ be some Hamiltonian to which we have access through a block-encoding $U_H$. Suppose $t \geq 1/2\pi$. Then it takes at least 
\begin{align*}
T \geq \Omega(t)    
\end{align*}
queries to $U_H$ to implement $e^{-i H t}$ up to diamond distance $\leq 1/3$.
\label{claim:HS}
\end{claim}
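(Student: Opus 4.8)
The plan is to instantiate the strategy from Section~\ref{sec:technique}: pick two Hamiltonians $H_1,H_2$ with $\opnorm{H_i}\leq 1$ whose time-evolutions at time $t$ are far apart in diamond distance, but whose block-encodings are close in operator norm. First I would take $H_1$ and $H_2$ to be diagonal $1\times 1$ (or $2\times 2$) Hamiltonians differing only by a small shift, e.g. $H_1 = \mathbb{I}$ and $H_2 = (1 - c/t)\mathbb{I}$ for a suitable constant $c$, or a diagonal pair like $H_1 = \operatorname{diag}(1,0)$, $H_2 = \operatorname{diag}(1-c/t,\,0)$, chosen so that $e^{-iH_1 t}$ and $e^{-iH_2 t}$ differ by an eigenphase of order $c$ (a constant), hence $\tfrac12\dnorm{\mathcal{U}(e^{-iH_1 t}) - \mathcal{U}(e^{-iH_2 t})} = \Theta(1)$ by \cref{lem:norm_conv1}. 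The point is that a simulation algorithm implementing $e^{-iH_i t}$ up to diamond distance $1/3$ is, by the triangle inequality for the diamond norm, a distinguisher between the two black-box unitaries $e^{-iH_1 t}$ and $e^{-iH_2 t}$ — equivalently, between the block-encodings $U_{H_1}$ and $U_{H_2}$.

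Next I would bound $\opnorm{U_{H_1} - U_{H_2}}$ where $U_{H_i}$ is the canonical block-encoding from \cref{eq:block_constr}. Since the $H_i$ are diagonal and PSD with entries differing by $O(1/t)$, each entry of the $2\times 2$ (or $4\times 4$) block-encoding matrices — both the $M$ block and the $\sqrt{\mathbb{I}-\Sigma^2}$ block — is a Lipschitz function of the eigenvalue, so $\opnorm{U_{H_1}-U_{H_2}} = O(1/t)$, with the constant controllable by choosing $c$ small (and I would verify the operator-norm gap is at most $1.4$ so that \cref{lem:op_distance_convhull} applies for $t \geq 1/2\pi$, guaranteeing $0 \notin \conv(\eig(U_{H_1}^\dagger U_{H_2}))$). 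Then \cref{lem:norm_conv2} gives $\tfrac12\dnorm{\mathcal{U}(U_{H_1}) - \mathcal{U}(U_{H_2})} \leq \opnorm{U_{H_1}-U_{H_2}} = O(1/t)$, and the remark following \cref{thm:lb_technique} (the operator-norm version of the lower bound) yields $T \geq \Omega(t)$ queries to $U_H$.

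The main obstacle — really the only nontrivial part — is getting the bookkeeping on the eigenphase shift right: one must ensure simultaneously that (i) $\opnorm{H_i}\leq 1$ so the block-encoding construction \cref{eq:block_constr} is legitimate, (ii) the eigenphases $e^{-i\lambda_i t}$ of $e^{-iH_i t}$ differ by a constant bounded away from $0$ and from $\pi$ (so the diamond distance between the \emph{time-evolution} channels is a constant comfortably larger than $2/3$, leaving room for the $1/3$ simulation error on each side), and (iii) the shift in the Hamiltonian eigenvalues is $O(1/t)$ so that the block-encodings are $O(1/t)$-close. These are compatible precisely because the eigenphase is $\lambda t$, so an $O(1/t)$ shift in $\lambda$ produces an $O(1)$ shift in the phase. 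I would also double-check that the ``far apart'' conclusion survives the subtraction of the allowed $\epsilon=1/3$ simulation errors via the diamond-norm triangle inequality (\cref{lem:prop_diamondnorm}, property 2, or its one-channel analogue), exactly as in the reverse-triangle-inequality argument used in the proof of \cref{claim:QGS}; with a constant phase gap near, say, $\pi/2$ this leaves a distinguishing advantage bounded below by a constant, which can be amplified to $2/3$ and is enough to invoke \cref{thm:lb_technique}. Finally, I would note the matching upper bound $O(t)$ is standard (e.g.\ \cite{low2017optimal,low2019hamiltonian}), so the bound is tight.
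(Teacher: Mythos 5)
Your proposal is correct and follows essentially the same route as the paper: two Hamiltonians differing by a diagonal $\Theta(1/t)$ perturbation whose time evolutions at time $t$ are constant-far apart while their canonical block-encodings from \cref{eq:block_constr} are $O(1/t)$-close in operator norm, then \cref{lem:op_distance_convhull}, \cref{lem:norm_conv2} and \cref{thm:lb_technique}. One caveat: your first candidate instantiation, $H_1=\mathbb{I}$ and $H_2=(1-c/t)\mathbb{I}$, fails, because the two evolutions then differ only by a \emph{global} phase, so the corresponding channels coincide and their diamond distance is $0$, not $\Theta(1)$; your alternative $H_1=\operatorname{diag}(1,0)$, $H_2=\operatorname{diag}(1-c/t,0)$ avoids this and is essentially the paper's choice ($H_1=\mathbb{I}/2$ and $H_2$ a rank-one $\Theta(1/t)$ diagonal perturbation of it). The paper tunes the relative eigenphase to exactly $\pi$, so the ideal evolutions send $\ket{+}$ to orthogonal states and the two $1/3$ simulation errors can be subtracted off while still leaving success probability $\geq 2/3$ in one shot; your phase gap near $\pi/2$ instead leaves only a small constant bias and needs the constant-round amplification you mention, which is valid and only affects the constant in $\Omega(t)$.
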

\begin{proof}
    Let $H_1= \frac{I_2}{2}$, and $H_2= H_1 + \frac{1}{2t} \ket{1}\bra{1}$, such that $\opnorm{H_1} \leq 1$ and $\opnorm{H_2} \leq 1$. Define $t = 2\pi t'$ with $t' \geq 0$. Suppose we could implement $U_i = e^{-i H_i t}$ perfectly for $i \in \{1,2\}$. We then have that
    \begin{align*}
        e^{- 2 \pi i H_1 t'}\ket{+} = e^{- i \pi t'} \ket{+}
    \end{align*}
    and
    \begin{align*}
        e^{-2 \pi i H_2 t'}\ket{+} = e^{- i \pi t'} \ket{-},
    \end{align*}
    which are perfectly distinguishable by a measurement in the Hadamard basis. If instead we can implement some $\tilde{U}_i$ such that $\frac{1}{2}\dnorm{\tilde{U}_i - U_i} \leq 1/3$ for $i \in \{1,2\}$, our Hadamard basis measurement will be able to distinguish both cases with success probability $\geq 2/3$. Hence, we have that a Hamiltonian simulation algorithm that uses $U_H$ as a subroutine can be used to distinguish $U_{H_1}$ from $U_{H_2}$. For the block-encodings of $H_1$ and $H_2$ we have 
    \begin{align*}
        U_1= \begin{bmatrix}
   \frac{1}{2} & 0 & \frac{\sqrt{3}}{2} & 0\\
   0 & \frac{1}{2} & 0 & \frac{\sqrt{3}}{2}\\
   \frac{\sqrt{3}}{2} & 0 & -\frac{1}{2} & 0\\
   0 & \frac{\sqrt{3}}{2} & 0 &-\frac{1}{2}
        \end{bmatrix}, \quad 
        U_2= \begin{bmatrix}
   \frac{1}{2} & 0 & \frac{\sqrt{3}}{2} & 0\\
   0 & \frac{1+1/t'}{2} & 0 & \sqrt{1+\left(\frac{1}{2}+\frac{1}{2t'^2}\right)^2}\\
   \frac{\sqrt{3}}{2} & 0 & -\frac{1}{2} & 0\\
   0 & \sqrt{1+\left(\frac{1}{2}+\frac{1}{2t'^2}\right)^2} & 0 &-\frac{1+1/t'}{2}
        \end{bmatrix},
\end{align*}
respectively. Again by~\cref{lem:norm_conv2}, evaluating the operator distance we find
\begin{align*}
     \opnorm{U_1-U_2} = \frac{\sqrt{3-\sqrt{-\frac{3}{t'^2}-\frac{6}{t'}+9}-\frac{1}{t'}}}{\sqrt{2}} \leq \frac{1}{t'},
     % \label{eq:ineq_claim5}
\end{align*}
for which the inequality can be shown to hold after some straightforward algebraic manipulation, assuming that $t' \geq 1$. Again, for $t' \geq 1$ (and thus $t \geq 1/2 \pi$) we have $\opnorm{U_1-U_2} \leq 1 \leq 1.4$, so $0 \notin \conv(\eig(U_1^\dagger U^2)$ by~\cref{lem:op_distance_convhull}. Hence, by~\cref{thm:lb_technique}, $
    T = \Omega(t)$.
\end{proof}

See Ref.~\cite{gilyen2019quantum} for a matching upper bound in the simulation time $t$.

\subsubsection{Hamiltonian learning in the Heisenberg limit}
In Hamiltonian learning, the problem is to output a classical description of a Hamiltonian $H'$ that is $\epsilon$-close to $H$ with respect to some distance measure $d$ in the space of Hamiltonians, with probability $\geq$ by making queries to its time evolution operator $U(t) = e^{-i t H}$, where $t \geq 0$ is tunable. The total evolution time is then defined as the sum of all different evaluation times used in the different queries. As a distance measure, we will take the operator norm distance $\opnorm{\tilde{H}-H} $.

\begin{claim} Given access to an unknown Hamiltonian $H$ with $\opnorm{H} \leq 1$ through its time evolution operator $U(t) = e^{-i t H}$, with $t \geq 0$ a tunable parameter, outputting the matrix description of a $\tilde{H}$ such that $\opnorm{\tilde{H}-H} \leq \epsilon$, where $0 < \epsilon \leq \frac{1}{2}$, requires a total evolution time of at least $\Omega(1/\epsilon)$.
\label{claim:learning}
\end{claim}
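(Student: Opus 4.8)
The plan is to follow the standard three-step recipe of \cref{sec:technique}: pick two Hamiltonians $H_1, H_2$ that are $2\epsilon$-far apart in operator norm but whose time-evolution operators are close in diamond distance for a suitable (small) evolution time, argue that a learner for $H$ solving the problem to precision $\epsilon$ distinguishes the two, and then invoke \cref{thm:lb_technique}. The natural choice is to take $H_1$ and $H_2$ to be single-qubit Hamiltonians that differ only on a designated matrix entry by $\Theta(\epsilon)$; e.g.\ $H_1 = 0$ (or $H_1 = \tfrac{1}{2}\mathbb{I}_2$ to keep things PSD as in \cref{claim:HS}) and $H_2 = H_1 + 2\epsilon\,\ketbra{1}$, so that $\opnorm{H_1 - H_2} = 2\epsilon$. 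Any algorithm that outputs $\tilde H$ with $\opnorm{\tilde H - H} \le \epsilon$ must, by the triangle inequality, return something within $\epsilon$ of the true $H_i$, and since $\opnorm{H_1 - H_2} = 2\epsilon$ these two balls are disjoint; hence reading off (say) the $(1,1)$-entry of $\tilde H$ and thresholding decides which Hamiltonian was given with the required success probability. This reduces Hamiltonian learning to discriminating the unitary families $\{U_1(t) = e^{-iH_1 t}\}$ and $\{U_2(t) = e^{-iH_2 t}\}$.

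The key quantitative step is to bound the diamond distance between $U_1(t)$ and $U_2(t)$ as a function of $t$. Since $U_1(t)^\dagger U_2(t) = e^{-i(H_2 - H_1)t} = \ketbra{0} + e^{-2i\epsilon t}\ketbra{1}$ (using that $H_1, H_2$ commute in this construction), \cref{lem:norm_conv2} gives $\tfrac{1}{2}\dnorm{\mathcal{U}(U_1(t)) - \mathcal{U}(U_2(t))} \le \min_\phi \opnorm{e^{i\phi}U_1(t) - U_2(t)} \le \opnorm{\mathbb{I} - e^{-i(H_2-H_1)t}} = |1 - e^{-2i\epsilon t}| = 2|\sin(\epsilon t)| \le 2\epsilon t$; one should also check (via \cref{lem:op_distance_convhull}, i.e.\ that $2|\sin(\epsilon t)| \le 1.4$, which holds for $\epsilon t$ small enough) that $0 \notin \conv(\eig(U_1(t)^\dagger U_2(t)))$ so that \cref{thm:lb_technique} applies. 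Thus each query with evolution time $t$ contributes a diamond distance of at most $O(\epsilon t)$; running the hybrid/induction argument exactly as in the proof of \cref{thm:lb_technique} but weighting each query by its evolution time $t_j$, the total "distinguishing power" after queries with times $t_1, \dots, t_T$ is at most $O\!\bigl(\epsilon \sum_j t_j\bigr) = O(\epsilon \cdot T_{\mathrm{tot}})$, where $T_{\mathrm{tot}} = \sum_j t_j$ is the total evolution time. To achieve constant bias one therefore needs $T_{\mathrm{tot}} = \Omega(1/\epsilon)$.

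The one genuinely new ingredient relative to the earlier applications is that the lower bound is on \emph{total evolution time} rather than query \emph{count}: here the adversary's $U$ is a whole family parametrised by the tunable $t$, and the algorithm may call different members. The clean way to handle this is to observe that a single query to $e^{-iH_i t}$ can be simulated, up to arbitrarily small diamond error, by $\lceil t/t_0\rceil$ queries to $e^{-iH_i t_0}$ for a small fixed $t_0$ (Trotter is exact here since there is only one term), so the "number of elementary $t_0$-queries" is $\Theta(T_{\mathrm{tot}}/t_0)$, and then \cref{thm:lb_technique}, applied with $\epsilon \mapsto 2\epsilon t_0$, yields $\Theta(T_{\mathrm{tot}}/t_0) \ge \Omega(1/(\epsilon t_0))$, i.e.\ $T_{\mathrm{tot}} = \Omega(1/\epsilon)$, with the $t_0$ cancelling. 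Alternatively one repeats the induction in \cref{thm:lb_technique} verbatim, replacing the per-query bound $2\epsilon$ by $2\epsilon t_j$ and summing; either way the argument is a direct adaptation. I expect the main obstacle to be purely bookkeeping: making the reduction to the $\QMA(2)/\mathsf{qpoly}$-tester model precise for a continuously-parametrised oracle family (ensuring the advice/proof still depends only on the input length, which is fine since $\epsilon$ is part of the problem specification as discussed in \cref{sec:prelim}), and checking the small-$\epsilon t$ regime so that \cref{lem:op_distance_convhull} can be invoked; the trigonometric estimate $2|\sin(\epsilon t)| \le 2\epsilon t$ is immediate.
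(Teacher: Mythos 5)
Your proposal is correct and follows essentially the same route as the paper: the paper also picks two commuting single-qubit Hamiltonians differing by $2\epsilon$ on a rank-one projector ($H_1=\mathbb{I}$, $H_2=\mathbb{I}-2\epsilon\ketbra{0}$), bounds the per-query diamond distance by $\abs{\sin(t_j\epsilon)}\le t_j\epsilon$ (via \cref{lem:norm_conv1} rather than your operator-norm route through \cref{lem:norm_conv2}), and sums over queries with the hybrid argument of \cref{thm:lb_technique} to conclude $\sum_j t_j=\Omega(1/\epsilon)$. The one loose end you flag---the convex-hull condition when some $t_j$ is large---is dispatched in the paper by observing that a single query with $t_j\ge\pi/(2\epsilon)$ already has total evolution time $\Omega(1/\epsilon)$, which your Trotterization alternative also handles.
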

\begin{proof}
Let $H_1 = \mathbb{I}$ and $H_2 = \mathbb{I}-2\epsilon \ketbra{0}$. Since $0 < \epsilon \leq \frac{1}{2}$, we have $\opnorm{H_1} \leq 1$ and $\opnorm{H_2} \leq 1$ holds. Learning an unknown $U$ up to $\epsilon$ in operator distance allows one to distinguish $H_1$ from $H_2$. Suppose that we make $m$ queries to the time evolution operator, each with some evolution time $t_j$ with $j \in [m]$. We have that $U_1(t) = e^{-i t_j} \mathbb{I}$ and 
\begin{align*}
    U_2(t_j) = \begin{bmatrix}
        e^{-i t_j(1-2\epsilon)} & 0 \\ 0 & e^{-i t_j}
    \end{bmatrix}.
\end{align*}
We have that
\begin{align*}
    U_1(t_j)^\dagger U_2(t_j) = \begin{bmatrix}
        e^{2i t_j \epsilon} & 0 \\ 0 & 1
    \end{bmatrix}.
\end{align*}
which means that $D = \frac{1}{2} \abs{1+e^{2 i t_j \epsilon}}$. Now suppose that $t_j \geq \frac{\pi}{2 \epsilon}$. Then this would match our desired lower bound. However, if $t_j \epsilon < \frac{\pi}{2 \epsilon}$, the condition $0 \notin \conv(\eig(U_1^\dagger U_2))$ is satisfied. Therefore, by~\cref{lem:norm_conv1} we have
\begin{align*}
    \frac{1}{2}\dnorm{\mathcal{U}(U_1(t_j)) - \mathcal{U}(U_2(t_j))} = \sqrt{1-D^2} = \abs{\sin(t_j\epsilon)} \leq t_j\epsilon.
\end{align*}
The proof of~\cref{thm:lb_technique} then directly implies that $T = \sum_{j \in [m]} t_j = \Omega(1/\epsilon)$.
\end{proof}
See Ref.~\cite{huang2023learning} for a (local) Hamiltonian learning algorithm which achieves the Heisenberg scaling for its total evolution time.

\subsubsection{Ground state preparation of spectrally-gapped Hamiltonians}
In approximate ground state preparation problems one is given a Hamiltonian $H$ with some unknown ground state $\ket{\psi}$, for which one has to prepare a state $\ket{\psi}$ such that $\abs{\bra{\psi_0} \ket{\psi}}^2\geq 1-\epsilon$. It is well known that the \textit{spectral gap} $\gamma(H)$, that is the difference between the energies of the ground state and the first excited state of $H$, plays an important role in how difficult this problem is~\cite{Lin2020nearoptimalground,arad2017rigorous,Deshpande2020}. 

\begin{claim}[Spectrally gapped ground state preparation] Let $H$ with $\opnorm{H} \leq 1$ be some Hamiltonian to which we have access through a block-encoding $U_H$. Suppose $\gamma(H) \geq \Delta$ for some $0 < \Delta \leq 1$. Then it takes at least 
\begin{align*}
T \geq \Omega(1/\Delta)    
\end{align*}
queries to $U_H$ to implement a state $\ket{\phi}$ which approximates the ground state $\ket{\psi}$ up to fidelity $\geq 2/3$.
\label{claim:GSP}
\end{claim}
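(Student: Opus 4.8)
The plan is to follow the same recipe as the other applications: pick a hard pair of Hamiltonians, reduce ground-state preparation to discriminating their block-encodings, and invoke \cref{thm:lb_technique}. For the pair I would take the single-qubit Hamiltonians $H_1 = \Delta\ketbra{1}{1}$ and $H_2 = \Delta\ketbra{0}{0}$. Both are diagonal and positive semidefinite with $\opnorm{H_i} = \Delta \le 1$, and each has spectral gap exactly $\Delta$, so both satisfy the promise $\gamma(H_i) \ge \Delta$. Their unique ground states are $\ket{\psi_1} = \ket{0}$ and $\ket{\psi_2} = \ket{1}$, which are orthogonal.

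The reduction is then immediate: suppose an algorithm --- even a $\mathsf{C}$-tester with $\mathsf{C} \subseteq \QMA(2)\slash\mathsf{qpoly}$ --- on input the block-encoding $U_{H_i}$ prepares a state with $\ge 2/3$ fidelity to $\ket{\psi_i}$. Measuring the output register in the computational basis and declaring ``$H_1$'' on outcome $\ket{0}$ and ``$H_2$'' on outcome $\ket{1}$ succeeds with probability $\ge 2/3$ in either case, since $\ket{\psi_1}\perp\ket{\psi_2}$; this final measurement costs no queries, so the tester yields a channel discriminator for $U_{H_1}$ versus $U_{H_2}$ with the same query count.

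Next I would write down explicit block-encodings using \cref{eq:block_constr}. Since $H_i$ is diagonal and PSD its SVD is trivial, giving the two-qubit unitaries
\begin{align*}
    U_i = \begin{pmatrix} H_i & \sqrt{\mathbb{I}-H_i^2} \\ \sqrt{\mathbb{I}-H_i^2} & -H_i \end{pmatrix}, \qquad i \in \{1,2\}.
\end{align*}
The difference $U_1 - U_2$ decomposes into two $2 \times 2$ blocks, each of operator norm $\sqrt{\Delta^2 + (1-\sqrt{1-\Delta^2})^2}$, so $\opnorm{U_1-U_2} = \sqrt{\Delta^2 + (1-\sqrt{1-\Delta^2})^2} \le \sqrt{2}\,\Delta$, where the inequality uses $\sqrt{1-\Delta^2}\ge 1-\Delta$ for $\Delta\in[0,1]$. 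We may assume $\Delta$ is at most a small absolute constant, since for larger $\Delta$ the claimed bound $\Omega(1/\Delta) = \Omega(1)$ already follows from the fact that a query-free algorithm outputs a fixed state, which cannot have fidelity $\ge 2/3$ with both $\ket{0}$ and $\ket{1}$. For such $\Delta$, $\opnorm{U_1-U_2} \le 1.4$, so \cref{lem:op_distance_convhull} gives $0 \notin \conv(\eig(U_1^\dagger U_2))$, and \cref{lem:norm_conv2} gives $\frac12\dnorm{\mathcal{U}(U_1)-\mathcal{U}(U_2)} \le \opnorm{U_1-U_2} \le \sqrt{2}\,\Delta$. Applying \cref{thm:lb_technique} with $\epsilon = \sqrt{2}\,\Delta$ yields $T \ge \Omega(1/\Delta)$.

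The only mildly delicate point is the interplay in the third step: one needs the two ground states far apart while keeping the block-encodings $O(\Delta)$-close and inside the operator-norm window of \cref{lem:op_distance_convhull}; choosing orthogonal ground states makes all three requirements compatible at once, at the harmless cost of restricting to small $\Delta$. Everything else is a routine operator-norm/diamond-norm estimate of the same flavour as the proofs of \cref{claim:QGS} and \cref{claim:HS}.
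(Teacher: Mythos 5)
Your proof is correct and takes essentially the same route as the paper: the paper also reduces to discriminating the block-encodings of two diagonal Hamiltonians with gap $\Delta$ and orthogonal ground states (it uses the qutrit pair $\Delta\ketbra{1}+\ketbra{2}$ and $\Delta\ketbra{0}+\ketbra{2}$ rather than your single-qubit pair), bounds $\opnorm{U_1-U_2}=O(\Delta)$, invokes \cref{lem:op_distance_convhull} and \cref{lem:norm_conv2}, and concludes via \cref{thm:lb_technique}. Your operator-norm computation and the restriction to small $\Delta$ both check out, so no changes are needed.
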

\begin{proof} For a Hilbert space $\mathcal{H} := \mathbb{C}^3$, let $H_1 = \Delta \ketbra{1} +\ketbra{2}$ and $H_2 = \Delta \ketbra{0} +\ketbra{2}$. Since $0 < \Delta \leq 1$, we have that $\opnorm{H_1} \leq 1$ and $\opnorm{H_2} \leq 1$ holds. The ground states of $H_1$ and $H_2$ are given by $\ket{\psi_1} = \ket{0} $ and $\ket{\psi_2} = \ket{1} $, respectively. Note that indeed $\gamma(H_1) = \gamma(H_2)) = \Delta$.  Suppose that we have an algorithm could prepare the ground state of an unknown $H$ up to fidelity $\geq 2/3$ by making queries to the block-encoding of $H$, then we could distinguish $H_1$ from $H_2$ by preparing the ground state and performing a measurement $\{\ketbra{0},\mathbb{I}-\ketbra{0}\}$. For some real number $c = \abs{\bra{\phi} \ket{\psi}}$ with $|c|^2 \geq 2/3$, and $\theta \in [0,1]$, we can write the prepared state as $\ket{\phi} = ce^{i \theta }\ket{\psi} + (1-c) |\psi^\perp\rangle$, where $|\psi^\perp\rangle$ lives in the space orthogonal to $\ket{\psi}$.  Clearly, if $\ket{\psi}=\ket{0}$ then we have that $|\bra{\phi} \ket{0}| = |c|^2 \geq 2/3 $ and when $\ket{\psi}=\ket{1}$ then $|\bra{\phi} \ket{0}|^2 \leq 1- |c|^2 \leq 1/3$. Therefore, we have that the probability of measuring $\ket{0}$ when $H= H_1$ is $\geq 2/3$ and when $H = H_2$ it is $\leq 1/3$, which means that we can distinguish between both cases with an overall success probability $\geq 2/3$.
For the block-encodings of $H_1$ and $H_2$ we have
\begin{align*}
    U_1 = \begin{bmatrix}
0 & 0 & 0 & 1 & 0 & 0 \\
0 & \Delta & 0 & 0 & \sqrt{1 - \Delta^2} & 0 \\
0 & 0 & 1 & 0 & 0 & 0 \\
1 & 0 & 0 & 0 & 0 & 0 \\
0 & \sqrt{1 - \Delta^2} & 0 & 0 & -\Delta & 0 \\
0 & 0 & 0 & 0 & 0 & -1
\end{bmatrix}, \quad U_2 = \begin{bmatrix}
0 & 0 & 0 & \sqrt{1 - \Delta^2} & 0 & 0 \\
0 & \Delta & 0 & 0 & 1 & 0 \\
0 & 0 & 1 & 0 & 0 & 0 \\
\sqrt{1 - \Delta^2} & 0 & 0 & -\Delta & 0 & 0 \\
0 & 1 & 0 & 0 & 0 & 0 \\
0 & 0 & 0 & 0 & 0 & -1 \\
\end{bmatrix}.
\end{align*}
A direct computation shows that
\begin{align*}
    \opnorm{U_1 - U_2} = \frac{1}{2} \left( \Delta + \sqrt{8 - 3 \Delta^2 - 8 \sqrt{1 - \Delta^2}} \right) \leq 2 \Delta.
\end{align*}
For $\Delta \leq 0.7$ we therefore have that $\opnorm{U_1 - U_2} \leq 1/4$ and thus, by virtue of~\cref{lem:op_distance_convhull}, $0 \notin \conv(\eig(U_1^\dagger U_2))$. By~\cref{thm:lb_technique}, $
    T \geq \Omega(1/\Delta)$.
\end{proof}
For an algorithm that uses $H$ in its block-encoding and achieves $\tilde{\mO}(1/\Delta)$ scaling, see Ref.~\cite{Lin2020nearoptimalground}.

\section{Quantum oracle separations with \texorpdfstring{$\SBQP$}{SBQP}}
\label{sec:oracle_sep}
In this section we will give the proofs of~\cref{thm:orsepMP} and~\cref{thm:orsepadvice}. We will first argue that lower bounds on unitary property testing in general can be used to show quantum oracle separations using the following strategy:
Take a language $L$ which is not in $\mathsf{C}_1$ nor in $\mathsf{C}_2$ (if it already is in $\mathsf{C}_1$, then it already implies a separation). Define a unitary property $\mathcal{P} = (U_\text{yes},U_\text{no})$ in such a way that for a family of quantum oracles $\{U_x\}$, parametrised by the input $x$, the following two conditions hold:
\begin{enumerate}
    \item If $x \in L $, then $U_x \in U_\text{yes}$ and if $x \notin L$, then $U_x \in U_\text{no}$.
    \item To decide whether $U_x \in U_\text{yes}$ or $U_x \in U_\text{no}$ can be done in $\mathsf{C}_1$ (i.e. it does not exceed the maximum allowed query complexity) but not in $\mathsf{C}_2$ (it does exceed the maximum allowed query complexity for all $n \geq n_0$, for some $n_0 \in \mathbb{N}$).
\end{enumerate}
One can then use the standard technique of diagonalization~\cite{baker1975relativizations} to show a quantum oracle separation between $\mathsf{C}_1$ and $\mathsf{C}_2$. However, in our case, it turns out the separation can be shown in an even simpler way. Since we want oracle separations with $\SBQP$, we can construct an oracle that is very close in diamond distance to the identity operator. For any class that makes only a polynomial number of queries and can only distinguish between inverse polynomial output probabilities (and has error reduction), the unitary oracle will look just like the identity operator at large values of $n$. Hence, if a language $L$ is not in $\mathsf{C}_2$, it will also not be in $\mathsf{C}_2^U$ since this would imply that it would also be in $\mathsf{C}_2^\mathbb{I} = \mathsf{C}_2$. We will need the following lemma of one-bit quantum phase estimation to show containment in $\SBQP$.
\begin{lemma}[One-bit quantum phase estimation~\cite{kitaev1995quantum}] Let $U \in \mathbb{U}(2^n)$ and $\ket{\psi}$ a $n$-dimensional quantum state (which can be prepared exactly in an efficient manner) such that $U \ket{\psi} = e^{2\pi i \theta} \ket{\psi}$, for some $\theta \in [0,1)$. Then there exists a quantum algorithm that makes one controlled query to $U$ and outputs $1$ with probability $\sin^2(\pi \theta)$.
\label{lem:ob_QPE}
\end{lemma}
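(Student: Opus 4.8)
The plan is to implement the standard Hadamard test (equivalently, Kitaev's one-bit phase estimation). First I would adjoin a single ancilla qubit in state $\ket{0}$, apply a Hadamard to it to obtain $\ket{+} = \tfrac{1}{\sqrt{2}}(\ket{0}+\ket{1})$, and prepare the eigenstate $\ket{\psi}$ in the main register (which by assumption can be done exactly and efficiently), so that the global state is $\ket{+}\otimes\ket{\psi}$.

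Next I would perform the single controlled query: apply $cU = \ketbra{0}{0}\otimes\mathbb{I} + \ketbra{1}{1}\otimes U$ with the ancilla as the control. Because $U\ket{\psi} = e^{2\pi i \theta}\ket{\psi}$, this produces $\tfrac{1}{\sqrt{2}}\left(\ket{0}\ket{\psi} + e^{2\pi i\theta}\ket{1}\ket{\psi}\right) = \tfrac{1}{\sqrt{2}}\left(\ket{0} + e^{2\pi i\theta}\ket{1}\right)\otimes\ket{\psi}$, i.e.\ the eigenphase is kicked back onto the ancilla while the main register factors out and can be discarded. I would then apply a second Hadamard to the ancilla, yielding the ancilla state $\tfrac{1}{2}\big((1 + e^{2\pi i\theta})\ket{0} + (1 - e^{2\pi i\theta})\ket{1}\big)$, and measure it in the computational basis.

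Finally, the probability of obtaining outcome $1$ is $\tfrac{1}{4}\abs{1 - e^{2\pi i\theta}}^2 = \tfrac{1}{4}\big(2 - 2\cos(2\pi\theta)\big) = \tfrac{1 - \cos(2\pi\theta)}{2} = \sin^2(\pi\theta)$, where the last equality is the half-angle identity. Exactly one controlled query to $U$ was used, as required. There is essentially no obstacle here: the only points needing care are that the eigenstate preparation is \emph{exact} (so that no approximation error perturbs the output probability) and that a single application of $cU$ is counted as one query — both of which are granted by the hypotheses of the statement.
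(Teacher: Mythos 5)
Your proposal is correct and is essentially identical to the paper's proof: both implement the Hadamard test (ancilla in $\ket{+}$, one application of $cU$ with phase kickback, second Hadamard, computational-basis measurement) and compute the outcome-$1$ probability as $\frac{1}{4}\abs{1-e^{2\pi i\theta}}^2 = \sin^2(\pi\theta)$. No issues.
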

\begin{proof} The one-bit quantum phase estimation circuit for a $n$-qubit unitary starts with $n+1$ qubits initialised in $\ket{0} \ket{\psi}$, applies a Hadamard to the first qubit, followed by an application of $U$ to the remaining $n$ qubits controlled on the first qubit, and finishes by applying yet another Hadamard to first qubit followed by a measurement in the computational basis of this qubit. We have that the output state of the one-bit quantum phase estimation circuit is given by
\begin{align*}
    \frac{1}{2}(\ket{0}+\ket{1}) \ket{\psi} + \frac{e^{2 \pi i \theta}}{2}(\ket{0}-\ket{1}) \ket{\psi},
\end{align*}
which means that the probability of measuring `$1$' as a function of $\theta \in [0,1)$ can be expressed as
\begin{align*}
    \Pr[\text{One-bit QPE outputs $1$}] =  \abs{\frac{1-e^{2 i \pi \theta}}{2}}^2 = \sin^2(\pi \theta).
\end{align*}
\end{proof}
We are now ready to state the proof, which uses the above argument to turn the lower bound of~\cref{clm:QPE} into a quantum oracle separation.
\orsepMP*
\begin{proof}
Since $\QMA(2) \subseteq \NEXP$, we have that there must exist a unary language $L$ for which it holds that $L \notin \QMA(2)$. Pick any such $L$, let $p(n)$ be some polynomial and define the quantum oracle $U = \{U_n \}_{n \geq 1}$ with $U_n = e^{2 \pi i \theta_n} \ket{0}\bra{0} + \ket{1}\bra{1}$ for some $\theta_n \in [0,1)$ as follows:
\begin{itemize}
    \item if $0^n \in L$, then $\theta_n = 2^{-p(n)}$.
    \item if $0^n \notin L$, then $\theta_n = 0$.
\end{itemize}
Clearly, we have that $L \in \SBQP^U$ since we have that running the one-bit quantum phase estimation protocol from~\cref{lem:ob_QPE} with $U$ and $\ket{\psi} = \ket{0}$,  saying that it accepts when it outputs `$1$', satisfies
\begin{itemize}
    \item if $0^n \in L$, then $\Pr[\text{One-bit QPE accepts}] = \sin^2(\pi 2^{- p(n)}) \geq 2^{-2p(n)} \geq  2^{-p'(n)} $,
    \item if $0^n \notin L$, then $\Pr[\text{One-bit QPE accepts}] = 0 \leq 2^{-p'(n)-1}  $,
\end{itemize}
for some polynomial $p'(n)$. We will now show that  $L \notin \QMA(2)^U$. Let $q(n)$ be a polynomial which is an upper bound on the runtime of the $\QMA(2)^{U}$ verifier. Now observe that the identity operator has diamond distance $\leq \pi 2^{-p'(n)}$ from the oracle $U_n$. Hence, if we replace the oracle $U$ with the identity operator $\mathbb{I}$ in all the $\QMA(2)^{U}$ verifiers $V^U$, we have that by~\cref{thm:lb_technique} that
\begin{align*}
    \abs{\Pr[V^{U_n} \text{ accepts } (x,\ket{\psi_1} \otimes \ket{\psi_2})]-\Pr[V^{\mathbb{I}} \text{ accepts } (x,\ket{\psi_1} \otimes \ket{\psi_2})]} \leq \frac{\pi q(n)}{2^{p'(n)}} \leq 0.01,
\end{align*}
for all $n \geq n_0$, with $n_0$ some constant that only depends on the choices of the polynomials $p(n),q(n)$. Hence, this implies that if $L \in \QMA(2)^{U}$, then it must also be that $L \in \QMA(2)$, since one can use the oracle for all $n < n_0$ making at most $2^{p'(n_0)}$ queries (which is constant for a constant $n_0)$ and simply replace the oracle with the identity operator for all $n \geq n_0$, whilst still having a bounded probability of error (and apply error reduction as in Ref.~\cite{harrow2013testing} to boost it back to $\geq 2/3$). But this contradicts the fact that we have picked $L$ such that it was not in $\QMA(2)$. 
\end{proof}

\orsepadvice*
\begin{proof}
    This follows from a similar proof as~\cref{thm:orsepMP}, but now we choose a binary language $L$ (to be specified later). Let the quantum oracle $U = \{U_x \}_{x \in \{0,1\}^n, n \geq 1}$ with $U_x = e^{2 \pi i \theta_x} \ket{0^n}\bra{0^n} + (\mathbb{I}-\ket{0^n}\bra{0^n})$ be a family of $n$($=|x|$)-qubit unitaries parametrized by some $\theta_x \in [0,1)$, which is given as follows:
\begin{align*}
    \theta_x = \left(1+(-1)^{L(x)} \right) 2^{- p(|x|)-1},
\end{align*}
where $L(x) =1 $ if $x \in L$ and $L(x) = 0$ otherwise. Hence, we have that
\begin{itemize}
    \item If $x \in L$, then $\theta_x = 0$;
    \item If $x \notin L$, then $\theta_x = 2^{-p(|x|)}$.
\end{itemize}
Clearly, $L \in \SBQP^U$ for any choice of language $L$ by the same argument as in~\cref{thm:orsepMP}. Observe that the lower bound of~\cref{clm:QPE} also hold for these specific instances of $U$ where the number of qubits it acts on varies, as the diamond distance only changes with different $n$ because the eigenphases change with different values of $n$ (if $\theta = \epsilon$ for some fixed $\epsilon >0$, then the diamond distance would be fixed for all $n$). Since $\QMA\slash\mathsf{qpoly} \subseteq \PSPACE\slash\mathsf{poly} \neq \ALL$~\cite{aaronson2005qma}, we can use the same argument as in~\cref{thm:orsepMP} to show that there must exist a $L \notin \QMA\slash\mathsf{qpoly}^U$.
\end{proof}
Note that it is not clear if we can combine both unentangled quantum proofs as advice and arrive at a similar oracle separation, as we do not know whether $\QMA(2)\slash \mathsf{qpoly} = \ALL$ or not~\cite{aaronson2005qma}, so it might be that no language $L$ exists such that $L \notin \QMA(2)\slash \mathsf{qpoly}$.

\bibliography{main.bib}

\end{document}